\theoremstyle{plain}
\DeclareMathOperator*{\argmin}{arg\,min}
\let\polylog\relax
\DeclareMathOperator*{\polylog}{\mathrm{polylog}}
\DeclareMathOperator*{\tr}{\mathrm{tr}}
\newtheorem*{rep@theorem}{\rep@title}
\newcommand{\newreptheorem}[2]{%
	\newenvironment{rep#1}[1]{%
		\def\rep@title{#2 \ref{##1}}%
		\begin{rep@theorem}}%
		{\end{rep@theorem}}}
\newcommand{\undersim}[1]{\mathrel{\mathpalette\@undersim{#1}}}
\newcommand{\@undersim}[2]{%
  \vcenter{%
    \ialign{%
      ##\cr
      $\m@th#1#2$\cr
      \noalign{\nointerlineskip\kern.2ex}
      $\m@th#1\sim$\cr
      \noalign{\kern-.4ex}
    }%
  }%
}
\newcommand{\lsim}{\undersim{<}}
\newcommand{\eps}{\epsilon}
\newcommand{\wh}{\widehat}
\newcommand{\wt}{\widetilde}
\newcommand{\Z}{\mathbb{Z}}
\newcommand{\mcS}{\mathcal{S}}
\newcommand{\ta}{\tilde{a}}
\newcommand{\kstodo}[1]{\todo{KS: #1}}
\newcommand{\hltodo}[1]{\todo{HL: #1}}
\newcommand{\hltodoil}[1]{\todo[inline]{HL: #1}}
\newcommand{\Cam}[1]{\textcolor{blue}{Cam: #1}}
\let\R\relax
\newcommand{\R}{\mathbb{R}}
\DeclareMathOperator{\rank}{rank}
\newcommand{\diag}{\operatorname{diag}}
\let\norm\relax
\newcommand{\norm}[1]{\|#1\|}
\newtheorem{theorem}{Theorem}
\newtheorem{lemma}{Lemma}[section]
\newtheorem{corollary}[theorem]{Corollary}
\newtheorem{claim}[lemma]{Claim}
\newtheorem{definition}[lemma]{Definition}
\newtheorem{remark}[lemma]{Remark}
\newtheorem*{claim*}{Claim}
\newtheorem*{proposition*}{Proposition}
\newtheorem*{lemma*}{Lemma}
\newtheorem*{problem*}{Problem}
	\gdef\xxxmark{%
		\expandafter\ifx\csname @mpargs\endcsname\relax % in minipage?
		\expandafter\ifx\csname @captype\endcsname\relax % in figure/caption?
		\marginpar{xxx}% not in a caption or minipage, can use marginpar
		\else
		xxx % notice trailing space
		\fi
		\else
		xxx % notice trailing space-
		\fi}
	\gdef\xxx{\@ifnextchar[\xxx@lab\xxx@nolab}
	\long\gdef\xxx@lab[#1]#2{{\bf [\xxxmark #2 ---{\sc #1}]}}
	\long\gdef\xxx@nolab#1{{\bf [\xxxmark #1]}}
\title{Toeplitz Low-Rank Approximation \protect\\ with Sublinear Query Complexity}
\author[1]{Michael Kapralov}
\author[2]{Hannah Lawrence}
\author[1]{Mikhail Makarov}
\author[3]{ \\ Cameron Musco}
\author[1]{Kshiteej Sheth}
\affil[1]{EPFL}
\affil[2]{MIT}
\affil[3]{University of Massachusetts Amherst}
\date{}
\begin{document}

\maketitle

%\Cam{I broke some lines above since it didn't look good to have one line of the title on the second line and have Kshiteej's name split. If you are going to fully write out MIT I think you should fully write out EPFL, rather than use the acryonym.}

\begin{abstract}
	We present a sublinear query algorithm for outputting a near-optimal low-rank approximation to any positive semidefinite Toeplitz matrix $T \in \R^{d \times d}$. In particular, for any integer rank $k \le d$ and $\epsilon,\delta > 0$, our algorithm makes $\wt O \left (k^2 \cdot \log(1/\delta) \cdot \text{poly}(1/\epsilon) \right )$ queries to the entries of $T$ and outputs a rank $\wt O \left (k \cdot  \log(1/\delta)/\epsilon\right )$ matrix $\wt T \in \R^{d \times d}$ such that $\| T - \wt T\|_F \leq (1+\epsilon) \cdot \norm{T-T_k}_F + \delta \norm{T}_F$. Here, $\norm{\cdot}_F$ is the Frobenius norm and $T_k$ is the optimal rank-$k$ approximation to $T$, given by projection onto its top $k$ eigenvectors.  $\wt O(\cdot)$ hides $\polylog(d) $ factors. %While fast, near linear time algorithms for Toeplitz matrix algebra are well-studied, to the best of our knowledge, our result is the first sublinear query algorithm for a natural linear algebraic problem on Toeplitz matrices.
	Our algorithm is \emph{structure-preserving}, in that the approximation $\wt T$ is also Toeplitz. A key technical contribution is a proof that any positive semidefinite Toeplitz matrix in fact has a near-optimal low-rank approximation which is itself Toeplitz. Surprisingly, this basic existence result was not previously known. Building on this result, along with the well-established off-grid Fourier structure of Toeplitz matrices [Cybenko'82], we show that Toeplitz $\tilde T$ with near optimal error can be recovered with a small number of random queries via a leverage-score-based off-grid sparse Fourier sampling scheme.
\end{abstract}

%\pagebreak
%\tableofcontents

%\Cam{General comment: $\wt O(\cdot)$ should hide log factors in the argument. So like $\wt O(n)$ can hide log factors in $n$. But $\wt O(1)$ isn't generally a good notation, since it is not claer what parameters it is hiding log factors in. We should clean this up throughout. I think often we are hiding $\log d$ factors even when there is no $d$ in the argument. This includes in the main theorem statements and abstract.}

%\Cam{I think we would help our chances a lot if we justified why we need both bicriteria approximation and additive error. Even better if we can show that our tradeoff is tight, but that might be hard. One place to start would be to look at rank-1 approximation, where I think we can probably determine the best Toeplitz low-rank approximation in closed form.} 

%\Cam{What if we changed our problem statement to trying to find the best low-rank Toeplitz approximation to the matrix? Rather than the best low-rank approximation in general. Would this be trivial but applying sparse Fourier transform.}

\clearpage 

%\Cam{Can someone make it so the whole line isn't the link? Just the page number? Must be some package that is doing this but I'm not sure which one. But it looks weird to have the whole thing pretty much be links.}
 
\tableofcontents

\newpage

% !TEX root = ./main.tex

%\Cam{Captialization of section titles should be made consistent. One some of them we use caps for all words, on some for just the first word.}

\section{Introduction.}

In scientific computing, engineering, and signal processing, highly structured matrices -- such as Toeplitz, circulant, hierarchical, and graph-structured matrices -- arise in many problems, often due to the discretization  of an underlying physical system. Such matrices are intensely studied and in many cases admit fast, even near-linear time algorithms for solving core linear algebraic problems. 

We investigate the possibility of \emph{sublinear} algorithms for such highly structured matrix classes, focusing  in particular on  Toeplitz matrices. A matrix $T \in \R^{d \times d}$ is Toeplitz if it is constant along its diagonals, i.e., for any $i,j,k,l \in [d]$, $i-j = k-l$ implies that $T_{i,j} = T_{k,l}$. Toeplitz matrices are ubiquitous in image and signal processing, where they arise as covariance matrices of stationary processes -- i.e., when the covariance between two measurements depends only on their distance in space or time \cite{Gray:2006ta}. They also arise in queuing theory, the solution of differential and integral equations, control theory, approximation theory, and beyond -- see \cite{Bunch:1985ti} for a review of their many applications. %Reversing the order of the rows in a Toeplitz matrix yields a Hankel matrix. Such matrices find applications in \Cam{fill in...} 
Reversing the rows of a Toeplitz matrix yields a Hankel matrix. These matrices also find wide applications in signal processing, system identification, and numerical computing \cite{Fazel:2013vw,Munkhammar:2017ud}.

A $d \times d$ Toeplitz matrix is specified by just $O(d)$ parameters, and Toeplitz matrices are a classic example of \emph{low displacement rank matrices}. They admit fast algorithms for many problems. A Toeplitz matrix can be multiplied by a vector in $O(d \log d)$ time via a fast Fourier transform. Toeplitz linear systems can be solved exactly in $O(d^2)$ time via the Levinson algorithm, and to high precision in $O(d \log^2 d)$ time using randomized methods \cite{XiaXiGu:2012,XiXiaCauley:2014}. A full Toeplitz eigendecomposition can be performed in $O(d^2 \log d)$ time  \cite{pan1999complexity}. However, little is known about  algorithms for Toeplitz matrices with provable correctness on general input instances and running time or query complexity scaling \emph{sublinearly} in the input size $d$.

\subsection{Our contributions.}

In this work, we  study sublinear query algorithms for symmetric positive semidefinite (PSD)  Toeplitz matrix low-rank approximation, which is a widely-studied problem \cite{Park:1999ta,Ishteva:2014wm} with applications to signal and image recovery \cite{Luk:1996ur,Cai:2016wh,Ongie:2017us}, signal direction of arrival estimation \cite{Krim:1996wm,Abramovich:1999vs}, and beyond. 
We show that a low-rank approximation to any PSD Toeplitz matrix with near-optimal error in the Frobenius norm can be computed using a sublinear number of queries to the input matrix.
In particular, letting $\norm{M}_F = \left (\sum_{i=1}^d \sum_{j=1}^d M_{ij}^2\right )^{1/2}$ denote the Frobenius norm of a matrix $M$, and letting $\wt O(\cdot)$ hide poly-logarithmic factors in the argument and in $d$, we have the following theorem: %\hltodo{is this sentence supposed to have a comma lead-in to the theorem?}

%\Cam{This theorem name is not done correctly. The title should appear after the label to render correctly. Check other theorems/defs/etc. as well.}

\begin{theorem}[Sublinear query Toeplitz low-rank approximation]\label{sublinearqueryalgo}
	There is a randomized algorithm that, given any PSD Toeplitz matrix $T\in \mathbb{R}^{d\times d}$, $\epsilon, \delta \in (0,1)$, and integer $k\leq d$, reads $\wt {O} \left (\frac{k^2 \log(1/\delta)}{\epsilon^{6}} \right )$ entries of $T$ and returns a symmetric Toeplitz matrix $\wt{T}$ with rank $\wt {O}\left (\frac{k  \log(1/\delta)}{\epsilon} \right )$ satisfying with probability at least $97/100$,
	\begin{equation}
		\|T-\wt{T}\|_F\leq (1+\epsilon) \|T-T_k\|_F + \delta \|T\|_F,
	\end{equation}
	where $\displaystyle T_k = \argmin_{B: \rank(B)\le k} \norm{T-B}_F$ is the best rank-$k$ approximation to $T$ in the Frobenius norm.
\end{theorem}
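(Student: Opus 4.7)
The plan is to prove the theorem in three stages, with the bulk of the technical work in the first.

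The first and most important step is a \emph{structural existence result}: for any PSD Toeplitz matrix $T$, there exists a PSD Toeplitz matrix $T^\star$ of rank $\wt O(k \log(1/\delta)/\epsilon)$ with $\|T - T^\star\|_F \leq (1+\epsilon) \|T - T_k\|_F + \delta \|T\|_F$. I would start from the Carath\'eodory--Toeplitz / Cybenko Vandermonde decomposition $T = \sum_{\ell=1}^s w_\ell\, v(x_\ell) v(x_\ell)^{\ast}$, where $v(x) = (1, e^{2\pi i x}, \ldots, e^{2\pi i (d-1)x})^\top$, $w_\ell \geq 0$, and $x_\ell \in [0,1)$. Each rank-one term $v(x_\ell)v(x_\ell)^{\ast}$ is itself Toeplitz, so any subsum is a low-rank Toeplitz matrix. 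The subtlety is that $T_k$ (eigenvalue truncation) is not Toeplitz in general, and nearby frequencies in the Vandermonde decomposition can interfere destructively, so a naive truncation of $\{w_\ell\}$ by magnitude need not compare favorably to $\|T - T_k\|_F$. My plan is to cluster the $x_\ell$'s at scale $1/d$, replace each cluster by a single rank-one Toeplitz term, and argue via careful bookkeeping that $\wt O(k/\epsilon)$ clusters suffice; the additive $\delta\|T\|_F$ slack is used to discard clusters of total weight below a $\delta$-dependent threshold and to discretize onto a frequency net.

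The second step is a \emph{reduction to off-grid sparse Fourier recovery}. Each entry $T_{i,j}$ depends only on $i-j$ and equals $t(i-j) := \sum_\ell w_\ell e^{2\pi i (i-j) x_\ell}$, so querying entries of $T$ is equivalent to sampling the trigonometric polynomial $t$ at integer points in $\{-(d-1),\ldots,d-1\}$. Approximating $T$ in Frobenius norm by a rank-$r$ Toeplitz matrix is then equivalent to approximating $t$ on the integers (in a weighted $\ell_2$ norm that accounts for diagonal multiplicities) by an $r$-sparse sum of off-grid complex exponentials, and reassembling the recovered exponentials via $v(\cdot)v(\cdot)^{\ast}$ automatically yields a Toeplitz output.

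The third step is a \emph{leverage-score sampling algorithm}. I would fix a fine net $\mathcal{N} \subset [0,1)$ of $\mathrm{poly}(d)$ candidate frequencies, form the $(2d{-}1)\times|\mathcal{N}|$ Fourier matrix $A$ with columns $v(y)$ for $y\in\mathcal{N}$, and use (approximate upper bounds on) the row leverage scores of $A$ as a query distribution over diagonals of $T$. Drawing $\wt O(k^2 \log(1/\delta)/\epsilon^6)$ queries produces a weighted sampling operator that, with high probability, is a $(1\pm\epsilon)$-subspace embedding for every $\wt O(k/\epsilon)$-dimensional subspace spanned by columns of $A$. A rank-constrained weighted least-squares solve on the sampled entries, followed by rounding the recovered frequencies off the net, yields $\wt T = \sum_j \wt w_j\, v(\wt x_j) v(\wt x_j)^{\ast}$ matching the error guarantee from Stage 1.

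The main obstacle is Stage 1. The existence of near-optimal \emph{Toeplitz} low-rank approximations is a new structural fact that does not follow from standard SVD or randomized numerical linear algebra arguments: the Vandermonde weights are not eigenvalues, and the aggregate tail weight in the Vandermonde decomposition can be much larger than $\|T-T_k\|_F$ when close frequencies cancel. Overcoming this requires a careful clustering argument trading destructive interference within a cluster against near-orthogonality of $v(x)$'s at frequencies separated by $\gtrsim 1/d$. With Stage 1 in hand, Stages 2 and 3 combine relatively standard off-grid sparse-Fourier machinery with leverage-score subspace embeddings; the $k^2/\epsilon^6$ sample complexity arises from the Frobenius-norm (rather than spectral-norm) embedding requirement on an $\wt O(k/\epsilon)$-dimensional subspace.
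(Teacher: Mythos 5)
Your three-stage architecture matches the paper's at a high level (Vandermonde decomposition, clustering at scale $1/d$, a weighted first-column regression, leverage-score sampling over a discretized frequency search space), and your Stage 2 reduction is essentially the paper's Claim \ref{frobtoeuclid}/Lemma \ref{lem:regressionReduction}. But two pieces of Stage 1 would fail as stated. First, replacing each width-$1/d$ cluster by a \emph{single rank-one} Toeplitz term cannot yield the additive error $\delta\norm{T}_F$ with only a $\log(1/\delta)$ contribution to the rank: over lags $t\in[-d,d]$ the phases within a cluster drift by $\Theta(1)$, so a rank-one surrogate incurs constant relative error on that cluster's weight, and heavy clusters can carry weight far exceeding $\norm{T-T_k}_F$. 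The paper instead approximates each cluster by a rank-$O(\log d+\log(1/\delta))$ Toeplitz matrix, via a degree-$O(\log d+\log(1/\delta))$ Taylor expansion of the cluster's contribution and a conversion back into a sparse sum of equispaced off-grid exponentials (a variant of Lemma 8.8 of \cite{chen2016fourier}), which both preserves Toeplitz structure and pins the candidate frequencies to the structured grid the algorithm later searches over. Second, the statement that ``careful bookkeeping'' shows $\wt O(k/\epsilon)$ clusters suffice is precisely the crux and is left unsupported: the paper proves two quantitative lemmas relating bucket weights to eigenvalues --- $k$ buckets of weight $\lambda$ force at least $k/\log^3 d$ eigenvalues of size $\Omega(d\lambda/\log d)$ (via the $O(1/|f-g|_{\circ})$ inner-product decay, a reduction to well-separated buckets, and a block Gershgorin bound), and all buckets light implies $\norm{T}_2 = O(d\lambda\log d)$ --- and then sets the weight threshold near $\lambda_{\wt\Theta(k/\epsilon)}(T)/d$ so the discarded part has its top $k+1$ eigenvalues bounded by $\wt O(\lambda_{\wt\Theta(k/\epsilon)}(T))$ and its remaining eigenvalues dominated by those of $T$, giving the $(1+\epsilon)$ Frobenius bound. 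Without some such weight-versus-eigenvalue comparison, the cluster count is not controlled.

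Stage 3 also has genuine gaps. Using ``the row leverage scores of $A$'' for the full $(2d-1)\times\mathrm{poly}(d)$ net matrix is vacuous: that matrix has full row rank, so every leverage score is $1$ and their sum is $\Theta(d)$, which cannot give sublinear sampling. What is needed (and what the paper uses, via Corollary C.2 of \cite{eldar2020toeplitz}) are \emph{universal} upper bounds valid simultaneously for every $s$-column off-grid Fourier matrix, summing to $\wt O(s)$, and these must further be adapted to the weighting $W$ (the paper does this by grouping rows into $O(\log d)$ blocks of comparable weight). More importantly, a subspace embedding for all candidate column spans does not suffice to conclude: you must \emph{select}, among $|N|^{\wt O(k/\epsilon)}$ candidate frequency sets, one whose true weighted regression cost is within $(1+\epsilon)$ of optimal, and the embedding neither estimates true costs nor controls the interaction with the target $WT_1$, whose leverage cannot be bounded universally. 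The paper resolves this with a two-stage scheme in the spirit of \cite{musco2021active}: a constant-factor solve (subspace embedding plus Markov on the fixed optimal candidate), then a $(1+\epsilon)$ fit of the residual, where the residual's $O(OPT)$ norm permits an additive $\epsilon\cdot OPT^2$ cost approximation uniform over all candidates, proved by truncating bad coordinates, Bernstein's inequality, and an $\epsilon$-net, union bounded over candidate sets. This is also the true source of the query bound: the $k^2$ comes from the union bound over candidate sets and the $\epsilon^{-6}$ from the net/Bernstein analysis of the residual stage, not from a Frobenius-versus-spectral embedding requirement as you suggest.
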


%\xxx[MK]{very weird to start with Theorem 7 -- better to make it Theorem 1}
Theorem \ref{sublinearqueryalgo}  gives a near-relative error approximation, up to an additive $\delta \norm{T}_F$. However, the dependence on $\delta$ in the sample complexity and output rank is just logarithmic. Thus, this additive term is comparable to the error that would be introduced in any practical algorithm due to round-off error. To the best of our knowledge, ours is the first sublinear query algorithm for Toeplitz low-rank approximation achieving near-relative error. See Section \ref{sec:prior} for a detailed comparison to prior work. 

%\xxx[MK]{need a more detailed comparison}

\smallskip

\noindent{\bf Structure-preservation and bicriteria approximation.} Observe that  Theorem \ref{sublinearqueryalgo} outputs a low-rank Toeplitz matrix $\wt T$, even though the optimal low-rank approximation $T_k$ will in general not be Toeplitz. 
This \emph{structure-preserving low-rank approximation} is desirable in many applications \cite{chu2003structured,Cai:2016wh}. It allows fast methods for Toeplitz matrices to be applied to $\wt T$ itself, and is useful when the Toeplitz structure of the approximation has a physical meaning, in applications such as direction of arrival estimation \cite{Krim:1996wm}. As we will discuss, $\wt T$'s Toeplitz structure is also key to achieving sublinear query complexity -- it allows us to recover $\wt T$ via query efficient sparse Fourier transform techniques.

Since $\wt T$ is Toeplitz, the approximation bound of Theorem \ref{sublinearqueryalgo} is \emph{bicriteria} -- i.e., $\wt{T}$ has rank larger than the input rank $k$. This is necessary since as $\epsilon,\delta$ go to zero, $\| T-\wt T \|_F$ becomes arbitrarily close to $\norm{T-T_k}_F$. However, one can find simple examples of $T$ and $k$ where there is a fixed gap between $\norm{T-T_k}_F$ and the error of the best rank-$k$ Toeplitz  approximation to $T$. See Figure \ref{fig:bicriteria}. It is an interesting open question if a sublinear query  algorithm for Toeplitz low-rank approximation with the same near-relative error bound as Theorem \ref{sublinearqueryalgo} can be achieved without any bicriteria approximation, by abandoning the structure-preserving guarantee and allowing $\wt T$ to be non-Toeplitz.

\begin{figure}[h]
	$$T = \begin{bmatrix} 2 & 1 & 0 \\ 1 & 2 & 1 \\ 0 & 1 & 2 \end{bmatrix} \hspace{3em} T_1 = \frac{2+\sqrt{2}}{4} \cdot \begin{bmatrix} 1 & \sqrt{2}  & 1  \\ \sqrt{2} & 2 & \sqrt{2}  \\ 1 & \sqrt{2}  & 1  \end{bmatrix} \hspace{3em} T_{1,toep} = \frac{10}{9} \cdot \begin{bmatrix} 1 & 1 & 1 \\ 1 & 1 & 1 \\ 1 & 1 & 1 \end{bmatrix}$$
	\caption{Example of a $3 \times 3$ positive semidefinite Toeplitz matrix $T$ whose best rank-$1$ approximation $T_1 = \argmin_{B:\rank(B) \le k} \norm{T-B}_F$ differs from its best rank-$1$ Toeplitz approximation $T_{1,toep} = \argmin_{B:\rank(B) \le k, \, \text{B is Toeplitz}} \norm{T-B}_F$. We can check that $\norm{T-T_{1,toep}}_F - \norm{T-T_1}_F \approx 0.1271$. Thus, any Toeplitz low-rank approximation to $T$ achieving error $(1+\epsilon) \norm{T-T_1}_F$ for small enough $\epsilon$ must have rank $> 1$. I.e., it must be a bicriteria approximation, as in Theorem \ref{sublinearqueryalgo}. $T_1$ is computed via projection onto $T$'s top eigenvector. $T_{1,toep}$ is computed by observing that any rank-$1$ Toeplitz matrix must have all entries of the same magnitude. Thus, since $T$ has all positive entries, the optimal approximation is the all ones matrix scaled by the mean entry in $T$.}\label{fig:bicriteria}
\end{figure}

\noindent{\bf Runtime.} In Theorem \ref{sublinearqueryalgo} we focus on query complexity rather than runtime, and  our algorithm requires time exponential in $\wt O(k/\epsilon)$ to identify $\wt T$ via a brute-force-search-based off-grid sparse Fourier transform algorithm. This is not sublinear time unless $k$ is sublogarithmic in $d$. However, we conjecture that sublinear runtime is possible by adapting efficient off-grid sparse Fourier transform techniques \cite{chen2016fourier} to  find $\wt T$. %See Section \ref{sec:overview} for further discussion.

\medskip

\noindent{\bf Existence proof.} A key challenge in proving Theorem \ref{sublinearqueryalgo} is  to demonstrate that a Toeplitz $\wt T$ achieving a near-optimal Frobenius norm low-rank approximation even \emph{exists}. Surprisingly, this was not previously known. We show
\begin{theorem}[Existence of near-optimal Toeplitz low-rank approximation]\label{frobeniusexistence}
	For any PSD Toeplitz $T\in \mathbb{R}^{d\times d}$ , $\eps,\delta \in (0,1)$ and  integer $k\leq d$ there exists a symmetric Toeplitz matrix $\wt{T} \in \R^{d \times d}$ of rank $\wt{O}\left (\frac{k \log(1/\delta)}{\eps}\right )$ such that
	\begin{align*}
		\|T-\wt{T}\|_F \leq (1+\eps)  \|T-T_k\|_F + \delta\|T\|_F,
	\end{align*}
	where $\displaystyle T_k = \argmin_{B: \rank(B)\le k} \norm{T-B}_F$ is the best rank-$k$ approximation to $T$ in the Frobenius norm.
\end{theorem}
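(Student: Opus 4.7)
The plan is to construct $\wt T$ explicitly as a nonnegatively-weighted sum of $\wt O(k\log(1/\delta)/\epsilon)$ rank-one Hermitian Toeplitz matrices of the form $f_\xi f_\xi^*$, where $f_\xi := (1, e^{2\pi i \xi}, \ldots, e^{2\pi i (d-1)\xi})^\top$; after pairing up conjugate frequencies (which is consistent with the symmetry of the spectral measure of a real PSD Toeplitz matrix), this produces a real symmetric Toeplitz matrix whose rank is at most the number of distinct frequencies used. The starting point is Bochner's theorem, equivalently the Caratheodory--Toeplitz Vandermonde decomposition for finite PSD Toeplitz matrices, which produces a nonnegative measure $\mu$ on $[0,1]$ with
\[ T = \int_0^1 f_\xi f_\xi^* \, d\mu(\xi). \]
Thus constructing $\wt T$ reduces to discretizing and then sparsifying $\mu$.

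I would first discretize $\mu$ onto a grid of $\mathrm{poly}(d/\delta)$ frequencies to obtain $T' = \sum_j c_j f_{\xi_j} f_{\xi_j}^*$ with $\|T - T'\|_F \le (\delta/3)\|T\|_F$; this uses the Lipschitz continuity of $\xi \mapsto f_\xi f_\xi^*$ on scales of $1/d$, together with a standard quadrature argument. Writing $T' = V D V^*$ with $V \in \mathbb{C}^{d \times N}$ Vandermonde and $D$ nonnegative diagonal, the core task becomes exhibiting a subset $S$ of $m = \wt O(k\log(1/\delta)/\epsilon)$ columns of $V$ together with nonnegative weights $\{\tilde c_j\}_{j \in S}$ such that the Toeplitz matrix $\wt T := \sum_{j \in S} \tilde c_j f_{\xi_j} f_{\xi_j}^*$ satisfies $\|T' - \wt T\|_F \le (1+\epsilon)\|T' - T'_k\|_F + (\delta/3)\|T\|_F$. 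Chaining this with the discretization error via the triangle inequality, and using $\|T' - T'_k\|_F \le \|T - T_k\|_F + \|T - T'\|_F$, then gives the claimed guarantee.

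The hard part is the sparsification step, because the Frobenius error must be controlled against the \emph{unrestricted} optimal rank-$k$ approximation $T_k$, which need not itself be Toeplitz (cf.\ Figure~\ref{fig:bicriteria}). My plan is to partition the discretized support of $\mu$ into $O(k)$ heavy clusters of frequencies plus a light tail whose total contribution to $\|T\|_F^2$ is at most $\delta^2\|T\|_F^2$, with the clustering driven by the spectral structure of $T$: intuitively, well-separated clusters of width $O(1/d)$ each contribute one ``effective'' direction, so a rank-$k$ approximation need only interact with $O(k)$ heavy clusters. Inside each heavy cluster, I would apply ridge-leverage-score sampling (or a BSS-style spectral sparsifier) to the associated block of Vandermonde columns, replacing the local mass of $\mu$ by $\wt O(\log(1/\delta)/\epsilon)$ Fourier modes that preserve the cluster's contribution to $T$ up to relative spectral error $\epsilon$ and additive error $\delta/\mathrm{poly}(k)$. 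Summing over $O(k)$ heavy clusters gives total rank $\wt O(k\log(1/\delta)/\epsilon)$. The delicate ingredient I anticipate is the last one: assembling the per-cluster relative-error guarantees into a \emph{global} Frobenius bound against $T_k$ (rather than against the best Toeplitz rank-$k$ approximation) requires carefully relating the tail $\sum_{i>k}\lambda_i(T)^2$ to the ``inter-cluster'' residual, and this is where I expect most of the real work to lie.
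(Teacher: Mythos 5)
Your high-level architecture (Vandermonde/spectral-measure decomposition, grouping frequencies into clusters at scale $1/d$, keeping heavy clusters, and replacing each cluster by a Toeplitz-structure-preserving low-rank surrogate) matches the paper's, but the step that carries the entire theorem is missing, and the intermediate target you state for it is false. You propose to split the discretized measure into $O(k)$ heavy clusters plus a light tail ``whose total contribution to $\|T\|_F^2$ is at most $\delta^2\|T\|_F^2$.'' No such decomposition exists in general: take $T=I_d$, whose Vandermonde decomposition spreads weight $1/d$ over all $d$ on-grid frequencies. Removing $O(k)$ clusters of width $O(1/d)$ leaves a residual of squared Frobenius norm about $d-O(k)\gg \delta^2 d$. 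The correct target — and the actual content of the theorem — is to bound the discarded part against $(1+\eps)\|T-T_k\|_F$, not against $\delta\|T\|_F$ (the $\delta\|T\|_F$ term in the theorem arises from the per-cluster rank reduction, not from discarding light frequencies). You acknowledge that relating the residual to $\sum_{i>k}\lambda_i(T)^2$ is ``where the real work lies,'' but you give no mechanism for it, and this is precisely what the paper's two structural lemmas supply: Lemma~\ref{heavybucketlemma} (many heavy buckets force many large eigenvalues, proved via the $O(1/|f-g|_\circ)$ inner-product decay, subsampling buckets to enforce separation, and a block Gershgorin bound) and Lemma~\ref{lightbucketlemma} (all buckets light forces $\|T\|_2\le O(d\lambda\log d)$), combined with the PSD domination $T-T_{heavy}\preceq T$ so that all but the top $k+1$ residual eigenvalues are charged to the tail of $T$.

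A second, related quantitative problem: keeping only $O(k)$ heavy clusters and pushing the $1/\eps$ into the per-cluster sparsity cannot yield the $(1+\eps)$ factor. With $O(k)$ clusters the residual $T-T_{heavy}$ has spectral norm on the order of $\lambda_{\Theta(k)}(T)$ (up to logs), so its top $k+1$ eigenvalues contribute about $k\,\lambda_{\Theta(k)}(T)^2$, which is not bounded by $\eps\|T-T_k\|_F^2$ when the spectrum is flat near index $k$. The paper instead sets the heaviness threshold at $\lambda_{\Theta(k\,\mathrm{polylog}(d)/\eps)}(T)\cdot\log d/d$, retains $\wt O(k/\eps)$ heavy buckets, and only then can charge $(k+1)\lambda_1^2(T-T_{heavy})\le \eps\sum_{i=k+1}^{\Theta(k\,\mathrm{polylog}(d)/\eps)}\lambda_i(T)^2\le\eps\|T-T_k\|_F^2$; each heavy bucket then needs only $O(\log d+\log(1/\delta))$ modes via the clustered-case Lemma~\ref{lem:clusteredcase} (a Taylor-plus-Fourier-sparse polynomial argument; your ridge-leverage/BSS per-cluster sparsification is a plausible alternative for that local step, since subsampling conjugate frequency pairs preserves the Toeplitz property). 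As written, however, the proposal leaves the central residual-versus-$T_k$ comparison unproven and anchored to an unachievable bound, so it does not constitute a proof.
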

We  prove a similar result for spectral norm low-rank approximation -- see Theorem \ref{spectralexistence}. 
%\Cam{This sentence is off. We shouldn't just include Theorem \ref{spectralexistence} in a sentence before introducing it. I think should add back in the sentence commented out above. And maybe then just say... to proving Theorem 2 (and Theorem 5) is...}
One natural approach to proving Theorem~\ref{frobeniusexistence} (and Theorem~\ref{spectralexistence}) is  to note that, intuitively, if $T$ is close to low-rank, its optimal low-rank approximation $T_k$ -- given by projection onto its top $k$ eigenvectors -- should be nearly Toeplitz. Of course, rounding this matrix to be Toeplitz (by replacing the entries on each diagonal by their average) can increase its rank, but one could plausibly  try to bound the increase in rank. This, however, would not naturally lead to a sublinear query algorithm for recovering a low rank approximation. We take a different approach, exploiting the classical Fourier structure of Toeplitz matrices, namely the Vandermonde decomposition~\cite{cybenko1982moment}. In essence, instead of rounding the best rank-$k$ approximation to Toeplitz, we round the matrix $T$ itself to a rank-(almost)$k$ matrix {\em in Fourier domain}, thereby preserving the Toeplitz property throughout. This lets us use a number of powerful ideas from the literature on recovering Fourier sparse functions from few measurements, and ultimately leads to a sublinear query algorithm -- see  Section~\ref{sec:overview} for a more detailed overview.

We believe that Theorem \ref{frobeniusexistence} is of interest beyond its application to proving Theorem \ref{sublinearqueryalgo}. It remains an open question if the $\delta \norm{T}_F$ term can be removed. Although we stated our guarantees as applying to exactly Toeplitz matrices, they extend fairly directly to near-Toeplitz input matrices. For example, our main existence result in Theorem \ref{frobeniusexistence} easily extends to arbitrary matrices which are close in Frobenius norm to a PSD Toeplitz matrix, by the triangle inequality. Moreover, the sublinear query complexity algorithm of Theorem \ref{sublinearqueryalgo} will extend to any matrix whose first column is close to the first column of a PSD Toeplitz matrix, according to the weighted $\ell_2$ norm, as defined in Claim \ref{frobtoeuclid}. It would be interesting to extend this further, e.g. to any matrix that is close to a PSD Toeplitz matrix in the Frobenius norm. Additionally, while $\wt T$ must have a rank larger than $ k$ when $\epsilon,\delta$ are sufficiently small (see Figure \ref{fig:bicriteria}), identifying the minimum rank required to achieve the given error bound is also a very interesting problem.

\subsection{Related work.}\label{sec:prior} 

Significant prior work in  numerical linear algebra and signal processing  has studied the low-rank approximation of Toeplitz and, relatedly, Hankel matrices \cite{Luk:1996ur,Park:1999ta,Ishteva:2014wm,Cai:2016wh,Ongie:2017us,Krim:1996wm}. In signal processing applications,  the input Toeplitz matrix $T$ is often a covariance matrix and thus positive semidefinite. 

Due to the fact that a $d \times d$ Toeplitz matrix can be multiplied by a vector in $O(d \log d)$ time via fast Fourier transform, a near-optimal low-rank approximation can be computed in near-linear time. In particular, Shi and Woodruff present an algorithm that outputs rank-$k$ $\wt T$ with $\norm{T-\wt T}_F \le (1+\epsilon) \norm{T-T_k}_F$ in $\wt O(d + \poly(k/\epsilon))$ time \cite{Shi:2019ud}. Significant other work focuses on computing a near-optimal low-rank approximation that preserves Toeplitz structure, as we do in Theorem \ref{sublinearqueryalgo}. Unlike unconstrained low-rank approximation, where the optimal solution  can be computed directly via eigendecomposition, no simple characterization of the optimal structure-preserving Toeplitz low-rank approximation is known \cite{chu2003structured}. Computing such an optimal approximation in polynomial time remains open outside the special cases of $k= 1$ and $k = d-1$ \cite{chu2003structured,Knirsch:2021ve}. Practical heuristics apply a range of techniques, based on  convex relaxation \cite{Fazel:2013vw,Cai:2016wh,Ongie:2017us}, alternating minimization \cite{chu2003structured,Wen:2020ub}, and sparse Fourier transform \cite{Krim:1996wm}.  Observe that our main result, Theorem \ref{sublinearqueryalgo} directly gives a near-relative error bicriteria approximation algorithm for the optimal  structure-preserving Toeplitz low-rank approximation problem, since $\norm{T-T_k}_F = \min_{B: \rank(B) \le k} \norm{T-B}_F \le \min_{B: \rank(B)\le k, B\text{ is Toeplitz}} \norm{T-B}_F$.

%\cite{Park:1999ta,Ishteva:2014wm}. %Many works focus on computationally efficient, but not sublinear algorithms that output a good Toeplitz low-rank approximation to the input. 
%\cite{Wen:2020ub} matrix comlpetion alternatiing svd typo algorithm.

Several works also investigate sublinear query algorithms for Toeplitz matrices \cite{Abramovich:1999vs,Chen:2015wz,Qiao:2017tp,Lawrence:2020ut,eldar2020toeplitz}. In the signal processing community, these algorithms are often framed in terms of \emph{sparse array methods}, which can be thought of as reading a small principal submatrix of $T$ from which an approximation to the full matrix can be recovered.
%
%
%Several of these works focus on algorithms with sublinear query complexity \cite{Lawrence:2020ut} \cite{Qiao:2017tp}
%
%%\cite{Ongie:2017us} convex relaxation
%
%%\cite{Park:1999ta} -- not really sure the method
%
%%\cite{Knirsch:2021ve} -- optimal rank-1 paper.
%
%
%%Structured low rank approximation \cite{chu2003structured}.
%%\begin{itemize}
%%%\item \url{https://www.cas.mcmaster.ca/~qiao/publications/LQV02.pdf}
%%%\item \url{http://www.cas.mcmaster.ca/~qiao/publications/LQ96b.pdf}
%%%\item \url{https://journalofinequalitiesandapplications.springeropen.com/track/pdf/10.1186/s13660-020-02340-w.pdf}
%%%\item \url{https://ieeexplore.ieee.org/document/8715594}
%%%\item \url{https://www.ncbi.nlm.nih.gov/pmc/articles/PMC5999344/}
%%%\item \url{https://conservancy.umn.edu/handle/11299/215327}
%%%\item \url{https://www.math.colostate.edu/~king/codex/slides/Plonka_2020_11_03.pdf} lots of good refs. Optimal rank-1 solution. Also optimal solution in infinite operator case.
%%\item \url{https://mtchu.math.ncsu.edu/Research/Lectures/Iep/chapter8.pdf} states that optimal Toeplitz low-rank approximation is open.
%%\end{itemize}
%
Most closely related to our work is that of Eldar, Li, Musco, and Musco \cite{eldar2020toeplitz}, which focuses on approximating a PSD Toeplitz matrix $T \in \R^{d \times d}$ given samples from a $d$-dimensional Gaussian distribution with covariance $T$. They focus on minimizing both the number of samples taken from the distribution, as well as the number of entries read from each $d$-dimensional sample. For nearly low-rank $T$, they present an algorithm that estimates $T$ via a low-rank approximation of the sample covariance matrix, computed from a small number of randomly selected entries of that matrix. One can check that this algorithm can be  directly applied to $T$ itself. It gives $\wt O(k^2 \cdot \log(1/\delta))$ query complexity, and outputs $\wt{T}$ with rank $O(k)$ achieving error 
$$ \|T-\wt{T}\|_F \le C \cdot \sqrt{ \tr(T) \cdot \left (\norm{T- T_k}_2 + \frac{\tr(T-T_k)}{k} \right ) } + \delta \norm{T}_F,$$
for some constant $C \gg 1$. The above guarantee is non-standard, and due to the $\tr(T)$ term, it can be much weaker than a relative error low-rank approximation. For example, $\tr(T) \gg \norm{T-T_k}_F$ if the  eigenvalues of $T$ decay quickly, which is typically the case in settings where low-rank approximation is applied. %Similarly, if the eigenvalues of $T$ decay slowly, we can have $\tr(T)$ as large as $\sqrt{d} \norm{T}_F$. 
In fact, we can observe that the above 
guarantee is strictly weaker than that of Theorem \ref{sublinearqueryalgo}: 
letting the eigenvalues of $T$ be denoted $\lambda_1(T)  \ge \ldots \ge \lambda_n(T) \ge 0$, 
$$\tr(T) \cdot \norm{T-T_k}_2 = \sum_{i=1}^n \lambda_i(T) \cdot \lambda_{k+1}(T) \ge \sum_{i=k+1}^n \lambda_i(T)^2 = \norm{T-T_k}_F^2.$$ 
%Thus, the above approximation guarantee is strictly weaker than that of Theorem \ref{sublinearqueryalgo}. In the typical case when the top eigenvalues of $T$ are much larger than the small eigenvalues, it can be much weaker. 
%Finally, observe that the guarantee of \cite{eldar2020toeplitz} does not allow arbitrarily good approximation to error $(1+\epsilon) \norm{T-T_k}_F + \delta \norm{T}_F$. \Cam{Discuss techniques here.}
Finally, the algorithm of  \cite{eldar2020toeplitz} outputs $\wt T$ which is not structure-preserving as in Theorem \ref{sublinearqueryalgo}. Note that in this sampling model of \cite{eldar2020toeplitz}, it is not possible to achieve a ‘near-relative’ error guarantee like we do. For example, even if the matrix is exactly rank-$k$, the sampling will lead to error approximately $\delta \|T\|_2$, with a polynomial in $\delta$ number of samples. In contrast, our algorithm (with exact access to the entries of $T$) gives error $\delta\|T\|_F$ with just a logarithmic dependence on $\delta$. Consider e.g., the setting when $T$ is just a matrix with every entry equal to $\alpha$ for $\alpha = \Theta (1)$. Approximating $\alpha$ to error $\pm \delta$ will require $1/\delta^2$  ‘vector samples’ in the model of \cite{eldar2020toeplitz}, for any algorithm.

However, our algorithm is indeed robust to noise, and so a similar guarantee with some additional additive error is achievable in the setting of \cite{eldar2020toeplitz}. One can take $\poly(d,1/\epsilon)$ vector samples and then using e.g., Claim 2.2 of \cite{eldar2020toeplitz}  one can argue that the sample covariance matrix will approximate the true covariance matrix, in that its first column will be close to the first column of the true matrix in the weighted $\ell_2$ norm, as defined in Claim \ref{frobtoeuclid}. We can then directly apply our algorithm to this sample covariance matrix. An interesting open problem here is to improve the vector sample complexity to just depend on $k$, and in general, to explore vector/entrywise sample complexity in more depth, as is done in \cite{eldar2020toeplitz}.

Beyond Toeplitz matrices, significant recent work has focused on  sublinear time low-rank approximation algorithms for other structured matrix classes. This includes positive semidefinite matrices \cite{musco2017recursive,Bakshi:2020tl}, distance matrices \cite{Bakshi:2018ul,Indyk:2019vy}, and kernel matrices \cite{musco2017recursive,Yasuda:2019vf,Ahle:2020vj}.

\subsection{Technical overview.}\label{sec:overview}

The main results of the paper are the proof of the existence of a near optimal low-rank Toeplitz approximation and the algorithm to recover it, presented in \Cref{sec:existence_main,algorithm} respectively. In this section we give an overview of the techniques used to achieve both results.

We start by introducing some notation. From classical works on Toeplitz matrices \cite{cybenko1982moment} it is known that any PSD Toeplitz matrix admits a \emph{Vandermonde decomposition} $T = F_S D F^*_S$, where $D$ is a diagonal matrix with positive entries and $F_S$ is a Fourier matrix. The Vandermonde decomposition is central to our work. We define Fourier matrices formally now.  %\hltodo{The full definition isn't until \cref{thm:vandermonde}?}
%\Cam{We have already introduced what PSD means. So either should use `positive semidefinite' or 'PSD' not 'positive semidefinite (PSD)' after the first time we introduce it.}

\begin{definition}[Frequency vector]\label{freqvector}
	For any frequency $f \in \mathbb{C}$, we define the {\em frequency vector} $v(f) \in \mathbb{C}^d$ as the column vector $[1, e^{2\pi i f}, \ldots, e^{2\pi i f (d-1)}]$.
\end{definition}
\begin{definition}[Symmetric Fourier matrix]\label{def:symmetric_fourier_matrix}
	For any set $S=\{f_1,f_2,\ldots, f_s\}\subset [0,1/2]$, let $F_S\in \mathbb{C}^{d\times 2s}$ be the Fourier matrix defined by the frequencies in $S$ as $F_{S} = [F_{+S};F_{-S}]$. 
	For every $j\in \{1, 2,\ldots, s\}$ the $j$-th column of $F_{+S}$ is $v(f_{j})$. Similarly, for every $j\in \{1, 2,\ldots, s\}$ the $j$-th column of $F_{-S}$ is $v(-f_{j})$.
\end{definition}

%The decomposition $T = F_S D F^*_S$ is called the Vandermonde decomposition. 
The formal statement of the Vandermonde decomposition is as follows.
\begin{theorem}[Vandermonde decomposition, Corollary 1 of \cite{cybenko1982moment}]\label{thm:vandermonde}
	Any real-valued PSD Toeplitz matrix $T \in \R^{d\times d}$ of rank $r$ can be expressed as $F_SDF_S^{*}$ where $F_S \in \mathbb{C}^{d\times r}$ is a symmetric Fourier matrix with the set of frequencies $S\subseteq [0,1/2]$ satisfying $|S|=r/2$, and $D\in \mathbb{R}^{r\times r}$ is a diagonal matrix with $r$ positive entries. %\Cam{Previous sentence is grammatically off. Since you combine a list of the properties of $S$ and then a property of $D$ with no transition.}
	Moreover, for any $f\in S$ the values in $D$ corresponding to columns $v(f),v(-f)$, which we refer to as the weights of $f,-f$ respectively, are identical. Thus $D$ is uniquely defined by values $\{a_f\}_{f\in S}$.%\hltodoil{If $|S|=\frac{r}{2}$, then $\frac{r}{2}$ is an integer and the floor is unnecessary. As well as the +1?}
	% We refer to $S\cap [0,1/2]$ as the un-conjugated and $S\cap [1/2,1]$ as the conjugated set of frequencies of $S$.
\end{theorem}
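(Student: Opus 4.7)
The plan is to derive the result from the classical Carath\'eodory--Toeplitz representation theorem (which handles the complex PSD case) and then use the realness of $T$ to extract the symmetric pairing of frequencies.

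First, I would invoke the Carath\'eodory--Toeplitz theorem: any complex PSD Toeplitz matrix $T \in \mathbb{C}^{d\times d}$ of rank $r \le d$ admits a representation
$$T \;=\; \sum_{j=1}^r a_j\, v(f_j) v(f_j)^*,$$
with $a_j > 0$ and distinct $f_j \in [0,1)$, and this representation is unique when $r < d$. One way to prove this: if $r<d$, the kernel of $T$ contains a non-zero vector $u$, and Toeplitz structure forces the generating sequence $(t_k)_{k \in \mathbb{Z}}$ of $T$ to satisfy a linear recurrence with characteristic polynomial $p(z) = \sum_k u_k z^k$. Positive semidefiniteness forces all roots of $p$ to lie on the unit circle, yielding the frequencies via $e^{2\pi i f_j}$, and the weights $a_j$ are then determined by inverting a non-singular Vandermonde linear system.

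Next, I would use the fact that $T$ is real. Conjugating the equation $T = \overline{T}$ entrywise in the decomposition gives
$$T \;=\; \overline{T} \;=\; \sum_{j=1}^r a_j\, v(-f_j) v(-f_j)^*,$$
and uniqueness of the Carath\'eodory decomposition implies that the multiset $\{(f_j, a_j)\}$ is invariant under the map $f \mapsto -f \bmod 1$. Hence the frequencies pair up as $\{f, -f\}$ with identical weights. Choosing representatives $S \subseteq [0, 1/2]$ with $|S| = r/2$, and grouping the sum pair by pair, yields
$$T \;=\; \sum_{f \in S} a_f\bigl(v(f) v(f)^* + v(-f) v(-f)^*\bigr) \;=\; F_S D F_S^*,$$
where $D$ is the diagonal matrix whose two diagonal entries corresponding to the columns $v(f)$ and $v(-f)$ of $F_S$ are both equal to $a_f > 0$.

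The main obstacle is rigorously establishing the uniqueness of the Carath\'eodory decomposition, which is what powers the symmetry argument: without uniqueness, a real $T$ could a priori admit asymmetric decompositions. The uniqueness proof hinges on the fact that for $r<d$ the kernel of $T$ determines a single monic polynomial of minimal degree whose roots, all on the unit circle, identify the $f_j$. The self-conjugate frequencies $f \in \{0, 1/2\}$, for which $v(f) = v(-f)$, require a little bookkeeping (each such atom contributes a single rank-one term, formally placed into two identical columns of $F_S$ with shared weight), but are compatible with \Cref{def:symmetric_fourier_matrix} and in any case can be dismissed by a generic perturbation.
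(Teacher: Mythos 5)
The paper itself gives no proof of \Cref{thm:vandermonde}: it is imported verbatim as Corollary 1 of \cite{cybenko1982moment}, so there is no internal argument to compare against. Your plan, deriving it from the complex Carath\'eodory--Toeplitz atomic representation and then using $\overline{v(f)}=v(-f)$ together with uniqueness of that representation to force the multiset of (frequency, weight) pairs to be invariant under $f\mapsto -f$, is exactly the standard derivation of the real symmetric case, and it is sound in the main regime: $T$ rank-deficient with no atom at a self-conjugate frequency.

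Two loose ends are worth flagging. First, the uniqueness you lean on holds only when $r<d$; for full-rank $T$ the atomic decomposition is far from unique, so the conjugation argument alone does not produce a symmetric decomposition, and you would need an extra step (for instance, subtract a conjugate pair $a\bigl(v(f)v(f)^*+v(-f)v(-f)^*\bigr)$ with $a$ maximal subject to positive semidefiniteness, which reduces to the singular case). Second, the atoms at $f\in\{0,1/2\}$ cannot be ``dismissed by a generic perturbation'': $T$ is a fixed input and such atoms genuinely occur -- the all-ones matrix equals $v(0)v(0)^*$ and has odd rank $1$, so no decomposition with $|S|=r/2$ conjugate pairs exists and the literal statement only holds under a convention for these degenerate frequencies (an imprecision already present in the statement as quoted, not something your argument can remove). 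Also note that placing such an atom into two identical columns ``with shared weight'' doubles its contribution unless the weight is halved; this is harmless bookkeeping but should be said explicitly. With the full-rank case treated separately and the self-conjugate frequencies handled by convention rather than perturbation, your proof goes through.
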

\textbf{A basic Fourier-based approach and why it fails.} One approach to show the existence of a near optimal low-rank approximation, which itself is Toeplitz, is to show that retaining only the top (nearly) $k$ entries in the diagonal matrix $D$ in the Vandermonde decomposition $T=F_SDF_S^{*}$ and zeroing out the rest would give a near optimal low-rank approximation. This approach is natural, as this operation trivially preserves the Toeplitz structure. However, to formally argue that such an approach works would require us to relate the Vandermonde decomposition to the eigendecomposition of $T$. This is challenging in general because the Fourier matrix $F_S$ could potentially be highly ill-conditioned \cite{moitra2015super}, whereas the eigenvector matrix of $T$ has condition number $1$. Therefore, these decompositions could be very far from each other.\\
\\
\textbf{The special case of circulant matrices.} For the special but important case of circulant matrices, as defined below, relating these two decompositions turns out to be much easier.
\begin{definition}\label{def:circulant}
	A Toeplitz matrix $C \in \R^{d\times d}$ is called {\em circulant} if there exists a vector $c \in \R^d$ such that $C_{i, j} = c_{i + j - 1 \mod d}$.
\end{definition}
Essentially, a circulant matrix $C \in \R^{d \times d}$ is a matrix composed of all cyclic permutations of some vector $c \in \R^d$. Any circulant matrix is a Toeplitz matrix. A Vandermonde decomposition of a symmetric circulant matrix has the property that all of its frequencies are multiples of $1/d$~\cite{cybenko1982moment}. Thus the $F_S$ matrix is actually the discrete Fourier transform matrix, and its frequency vectors are orthogonal. In particular, the Vandermonde and eigendecompositions are identical in this special case, and we can easily argue that the best rank-$k$ approximation to a symmetric circulant matrix itself is circulant! This is because the best rank-$k$ approximation is given by retaining the top $k$ elements of $D$ in $F_SDF_S^{*}$, which has the circulant property by definition.\\
\\
\textbf{Toeplitz matrices.} In a similar spirit to the easy case of circulant matrices mentioned above, for a general Toeplitz matrix $T$, it is natural to ask whether it is true that if the Vandermonde decomposition of $T$ contains $k$ frequencies with large corresponding values, then the number of large eigenvalues of $T$ is also $\Omega(k)$. This is in general not the case. For example, if $S = \{f, -f, f + \eps, -f -\eps \ldots, f + (k/2) \eps, -f - (k/2) \eps \}$ for some $f \in [0, 1]$, $D$ is the identity matrix and $\eps \to 0$, then  both $F_S$ and $T$ tend to a rank $2$ matrix, with the third eigenvalue tending towards $0$. This happens because when the frequencies are close, their frequency vectors are highly correlated. Thus, for the purposes of our analysis it makes sense to consider a group of close frequencies as a single entity, and this observation motivates our proof plan.

We first analyze the case when all frequencies in $S$ are `close'  (this analysis is presented in \Cref{case1}), then analyze the interaction of `clusters' of frequencies in \Cref{case2} and \Cref{case3}, and finally derive the main results of \Cref{sec:existence_main}, namely \Cref{spectralexistence} and \Cref{frobeniusexistence}, in \Cref{finalproof}.  \Cref{spectralexistence} and \Cref{frobeniusexistence} state that for any $k$, there exists a symmetric Toeplitz matrix $\wt{T}$ of rank \emph{almost} $k$ which is \emph{almost} as good a low-rank approximation to $T$ as $T_k$, the best rank-$k$ approximation to $T$, in the spectral or Frobenius norm, respectively. 
\begin{comment}
	\Cam{Do we need to state theorem 3 here? I don't think it's needed. Also we shouldn't restate theorem 4. We have already stated it earlier in the intro.}
	\begin{reptheorem}{spectralexistence}
		Given any PSD Toeplitz $T\in \mathbb{R}^{d\times d}$  $0<\delta<1$ and any integer $k\leq d$,  there exists a symmetric Toeplitz matrix $\wt{T}$ of rank $\wt{O}(k\log(1/\delta))$ such that the following holds.
		\begin{equation*}
			\|T-\wt{T}\|_2 \leq \wt{O}(1) \|T-T_k\|_2 + \delta\|T\|_F
		\end{equation*}
		where $\displaystyle T_k = \argmin_{B: \rank(B)\le k} \norm{T-B}_2$ is the best rank-$k$ approximation to $T$ in the spectral norm.
	\end{reptheorem}
	
	\begin{reptheorem}{frobeniusexistence}
		Given any PSD Toeplitz matrix $T\in \mathbb{R}^{d\times d}$, $0<\eps,\delta<1$ and an integer $k\leq d$, there exists a symmetric Toeplitz matrix $\wt{T}$ of rank $\wt{O}((k/\eps)\log(1/\delta))$ such that the following holds.
		\[
		\|T-\wt{T}\|_F \leq (1+\eps)\|T-T_k\|_F + \delta\|T\|_F
		\] where  $\displaystyle T_k = \argmin_{B: \rank(B)\le k} \norm{T-B}_F$ is the best rank-$k$ approximation to $T$ in the Frobenius norm.
	\end{reptheorem}
\end{comment}
We now give a high-level overview of the organization of \Cref{sec:existence_main} and constituent proofs.\\ % of these theorems, and how this section is organized. 
\\
\textbf{Clustered case.} In Subsection \ref{case1}, we consider the case where the distance between each pair of frequencies in $S$ is at most $\Delta$ for some $\Delta\in [0, 1/2]$ (we ultimately choose $\Delta=O(1/d)$). This case is referred to as the \emph{clustered} case, formalized in the definition below.
\begin{definition}\label{clustered_case-mod}
	For $f^*\in [0, 1/2]$ and $\Delta\in [0, 1/2]$ we say that a Toeplitz matrix $T$ with Vandermonde decomposition $T=F_SDF_S^{*}$ is $(f^*,\Delta)$-clustered if $S\subset [f^*-\Delta,f^*+\Delta]$ for some $f^*\in [0,1/2]$ and $\Delta\geq 0$. Thus we can write any $f \in S$ as $f = f^*+r_f$ for some  $r_f\in [-\Delta, +\Delta]$. 
\end{definition}
%\Cam{Why $O(\log d)$? I don't think one would intuitively expect that. Maybe need to mention that this is via approximating them all with a polynomial. Or just don't mention $log d$ here.} 
Intuitively, in this case all the columns (i.e. frequency vectors) of $F_S$ are nearly identical, so we would expect $F_S$ (and thus $T$) to be close to an almost constant rank matrix. Furthermore, we can even show that this matrix $\wt{T}$ is Toeplitz itself, as specified by the following lemma.
\begin{lemma}\label{lem:clusteredcase}
	There exists a universal constant $C_1>0$ such that given any symmetric PSD Toeplitz  $T=F_SDF_S^*$ that is $(f^*,\Delta)$-clustered for some $f^*\in [0,1/2]$ and $\Delta \leq 1/d$ and  $0<\varepsilon,\delta,\gamma<1$ satisfying $\gamma\leq \varepsilon/(tr(T)2^{C_1\log^7 d})$ the following conditions hold.  There exists a symmetric Toeplitz matrix $\wt{T}=F_{\wt{S}}\wt{D}F_{\wt{S}}^{*}$ of rank at most $O(\ell)$ for $\ell=O(\log d + \log(1/\delta))$ such that % \Cam{It is weird to specify the constants on $\ell$ so precisely when $\ell$ itself has big Oh notation. Is this needed?}
	\begin{enumerate}
		\item $\wt{S}=\{f^*+j\gamma\}_{j=1}^{\ell+1}\cup \{f^*-j\gamma\}_{j=1}^{\ell+1}$ %\Cam{This doesn't seem to mach with Def 1.2. This set has $2\ell+2$ frequencies so the corresponding Fourier matrix would have $4\ell+4$ columns.},
		and $\wt{D}$ is a diagonal matrix such that for any $f\in \wt{S}$ the weights corresponding to $f$ and $-f$ are identical. 
		\item $\|T-\wt{T}\|_{F} \leq \delta (\sum_{f\in S}a_{f}) +\varepsilon d$.%\Cam{$a_f$ is undefined here.}
	\end{enumerate} 
\end{lemma}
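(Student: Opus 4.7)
The plan is to construct $\wt T$ by matching its diagonals against those of $T$ directly, through a moment-matching argument in the frequency domain that exploits the extreme smoothness of the frequency vectors when all frequencies of $T$ lie within a window of width $\Delta\leq 1/d$.

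First, I invoke (the forthcoming) Claim~\ref{frobtoeuclid} to rewrite $\|T-\wt T\|_F^2=\sum_{|n|<d}(d-|n|)\,h_n^2$, where $h_n:=t_n-\wt t_n$ is the discrepancy on the $n$th diagonal. For each $f=f^*+r_f\in S$ with $|r_f|\leq\Delta$, a product-to-sum identity gives
\[
t_n \;=\; 2\cos(2\pi f^* n)\sum_{f\in S}a_f\cos(2\pi r_f n)\;-\;2\sin(2\pi f^* n)\sum_{f\in S}a_f\sin(2\pi r_f n).
\]
Setting $\wt a_{\pm(f^*+j\gamma)}=\alpha_j$ and $\wt a_{\pm(f^*-j\gamma)}=\beta_j$ (so the symmetry condition in item~1 is automatic), the same identity yields
\[
\wt t_n \;=\; 2\cos(2\pi f^* n)\sum_{j=1}^{\ell+1}(\alpha_j+\beta_j)\cos(2\pi j\gamma n)\;+\;2\sin(2\pi f^* n)\sum_{j=1}^{\ell+1}(\beta_j-\alpha_j)\sin(2\pi j\gamma n).
\]
Matching $t_n$ to $\wt t_n$ decouples into two sub-problems, one in $c_j:=\alpha_j+\beta_j$ (cosine part) and one in $d_j:=\beta_j-\alpha_j$ (sine part).

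For each sub-problem I Taylor-expand $\cos$ and $\sin$ in $n$ to degree $2\ell$, where $\ell=\Theta(\log d+\log(1/\delta))$. Since $|r_f n|,|j\gamma n|\leq 1$, the tail beyond degree $2\ell$ is bounded by $(2\pi)^{2\ell+2}/(2\ell+2)!$, which is $\leq \delta/\mathrm{poly}(d)$ for this $\ell$. Equating coefficients of $n^{2m}$ (resp.\ $n^{2m+1}$) for $m=0,1,\ldots,\ell$ yields Vandermonde systems
\[
\sum_{j=1}^{\ell+1}c_j\,(j\gamma)^{2m}\;=\;\sum_{f\in S}a_f\,r_f^{2m},\qquad \sum_{j=1}^{\ell+1}d_j\,(j\gamma)^{2m+1}\;=\;-\sum_{f\in S}a_f\,r_f^{2m+1},
\]
which are uniquely solvable because the $(j\gamma)^2$ are distinct. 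Pulling powers of $\gamma$ out of the coefficient matrix reduces each system to a \emph{fixed} Vandermonde on $\{1,\ldots,\ell+1\}$ (independent of $\gamma$), so the solutions obey $\|c\|_\infty,\|d\|_\infty\leq \gamma^{-(2\ell+1)}\cdot K_\ell\cdot\sum_{f\in S}a_f$ for a factor $K_\ell$ depending only on $\ell$.

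Finally, I bound $h_n$ as the sum of (i) the degree-$2\ell$ Taylor tail on the $T$-side, contributing at most $\delta\cdot\mathrm{poly}(d)^{-1}\sum_f a_f$, and (ii) the tail on the $\wt T$-side, bounded by $\|c\|_\infty\cdot((\ell+1)\gamma d)^{2\ell+2}/(2\ell+2)!$; plugging in the bound on $\|c\|_\infty$, this contribution is at most $\sum_f a_f\cdot\gamma\cdot 2^{O(\log^7 d)}$, where the exponential absorbs $K_\ell$ and all remaining $\mathrm{poly}(d,\ell)$ factors. The hypothesis $\gamma\leq \varepsilon/(\tr(T)\cdot 2^{C_1\log^7 d})$, combined with $\sum_{f\in S}a_f\leq \tr(T)$, forces this to be $\leq\varepsilon/d$. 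Summing the weighted squared discrepancies over $|n|<d$ and taking square roots yields $\|T-\wt T\|_F\leq \delta\sum_{f\in S} a_f+\varepsilon d$, and $\rank(\wt T)\leq|\wt S|=4(\ell+1)=O(\ell)$ by construction. The main obstacle is the delicate balance between the two Taylor truncations: $\ell$ must be large enough that the $T$-side tail is $\leq\delta$, yet the Vandermonde inverse on the $\wt T$-side blows up like $\gamma^{-(2\ell+1)}$, forcing $\gamma$ to be taken astronomically small — which is precisely the form of the hypothesis on $\gamma$.
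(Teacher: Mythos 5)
Your proposal is correct and follows essentially the same route as the paper's proof: Taylor-truncate the clustered trigonometric sum at degree $\ell=O(\log d+\log(1/\delta))$, reproduce the truncated part exactly on the equispaced frequencies $f^*\pm j\gamma$ by solving a Vandermonde system in the new weights, and let the hypothesis $\gamma\leq \varepsilon/(\tr(T)2^{C_1\log^7 d})$ (together with $\tr(T)=2d\sum_{f\in S}a_f$) absorb the blow-up of those weights and of the second Taylor tail. The only differences are presentational: the paper routes the coefficient-matching step through Lemma~\ref{lowdeg_taylorapprox} and its adaptation of Lemma 8.8 of \cite{chen2016fourier} (Lemma~\ref{polytofourier}) applied to the even/odd parts of the modulated polynomials, and bounds all entries uniformly, whereas you inline the same Vandermonde argument on the real cosine/sine components of the diagonals and sum the per-diagonal errors with the weights of Claim~\ref{frobtoeuclid}.
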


\begin{remark}
	The value of the diagonal elements in $\widetilde{D}$ depends on $\varepsilon$, and in particular, the value of the diagonal elements would go to infinity as $\varepsilon$ goes to zero. However, we do not state this tradeoff explicitly in the lemma, because we apply it later in Section 4 to obtain leverage score upper bounds on Fourier sparse functions based on the work of \cite{chen2016fourier}. These bounds do not depend on the magnitude of the coefficients in the function, but only on its sparsity (as long as the coefficients are finite). Thus $\varepsilon$ can be set to any strictly positive number of one's choice.
\end{remark}

Note that the error term in point 2 of the above theorem is negligible, as the dependence of the rank of $\wt{T}$ on $1/\delta$ is logarithmic. The proof of this result uses tools from polynomial approximation and is achieved in three steps. First, we observe that since the $t$-th entry in the first column of $T$ is the linear combination of complex exponentials of form $e^{2 \pi i (f^* + r_f) t}$ where $r_f \leq \Delta$,  this column can be well approximated by a sum of polynomials of the form $p_j(t) = e^{(2\pi i f^{*}t)}\sum_{m=0}^l a_f \frac{(2 \pi i r_ft)^m}{m!}$ instead through the Taylor series. We then show that this sum of polynomials can be approximated by another polynomial on complex exponentials of the form $\wt{p}(t) =e^{(2\pi i  f^{*}t)} \sum_{m=-l - 1}^{l + 1} \alpha_m (e^{2 \pi i m \gamma t})$, from which we can then finally construct the matrix $\wt{T}$. The details are presented in \Cref{case1}.\\
\\
\textbf{Relating Vandermonde and eigenvalue decompositions.} 
%\mmtodoil{should this part have some intro as well?}
We now discuss the high-level strategy of how we relate the Vandermonde and eigenvalue decompositions for general Toeplitz matrices. Central to our proof is the notion of a \emph{bucket}. We divide the interval $[0,1/2]$ into $d/2$ equal-sized sub-intervals, and refer to the frequencies of $F_S$ (recall $T=F_S D F_S^{*}$ is the Vandermonde decomposition of $T$) in each sub-interval as a \emph{bucket}. Each bucket corresponds to a group of close frequencies, corresponding to the {\bf clustered case} discussed above. The \emph{weight} of a bucket is the sum of the coefficients in $D$ with corresponding frequencies in that bucket. Using this notion, our first structural result that relates the two decompositions is as follows\footnote{Note that the theorem below is a natural relaxation of the claim that $k$ frequencies with coefficients at least $\lambda$ in the Vandermonde decomposition of a circulant matrix imply at least $k$ eigenvalues of value at least $\lambda$.}:

\begin{lemma}\label{heavybucketlemma}
	For every $\lambda,k>0$, if $T$ has at least $k$ buckets with weight at least $\lambda$, then $T$ has at least $k/\log^3 d$ eigenvalues that have value $\Omega(d\lambda/\log d)$. 
\end{lemma}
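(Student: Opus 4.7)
The plan is a reduction to the clustered case of Lemma~\ref{lem:clusteredcase}, followed by a variational argument with orthonormal test vectors on a shifted DFT grid. I write $T = \sum_b T_b$ with $T_b := \sum_{f \in S \cap b} a_f\bigl(v(f)v(f)^* + v(-f)v(-f)^*\bigr)$ the contribution of bucket $b$. Each $T_b$ is $(f^*_b, 1/(2d))$-clustered, where $f^*_b := (b - \tfrac{1}{2})/d$ is the bucket midpoint, so Lemma~\ref{lem:clusteredcase} applied with suitable parameters produces a Toeplitz approximation $\widetilde T_b$ of rank $\ell = O(\log d)$ with $\|T_b - \widetilde T_b\|_F$ negligibly small and $\tr(\widetilde T_b) \approx 2 d w_b$.

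Next I dyadically partition the $k$ heavy buckets by weight into groups $G_j := \{b : w_b \in [2^j \lambda, 2^{j+1}\lambda)\}$. I handle ``super-heavy'' buckets (weight $\gtrsim \lambda \log^2 d$) separately: each such bucket contributes an eigenvalue of $T$ of size $\gtrsim d \lambda \log d \gg d\lambda/\log d$ via $\lambda_1(\widetilde T_b) \ge \tr(\widetilde T_b)/\ell$ and $T \succeq T_b$, so if $\Omega(k/\log^3 d)$ super-heavy buckets exist the lemma holds immediately. Otherwise at least $k/2$ heavy buckets have weight in $[\lambda, O(\lambda \log^2 d)]$, and a pigeonhole on the $O(\log \log d)$ dyadic scales in that range yields a group $G$ of size $k' = \Omega(k/\log d)$ in which all $w_b$ lie in some $[w, 2w]$ with $w \ge \lambda$.

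For the variational step, take the orthonormal test vectors $u_b := v(f^*_b)/\sqrt{d}$ for $b \in G$; they are orthogonal because $f^*_b - f^*_{b'} = (b - b')/d$ is a nonzero integer multiple of $1/d$, so $v(f^*_b)^* v(f^*_{b'}) = \sin(\pi(b - b'))/\sin(\pi (b - b')/d) = 0$. Let $V$ be the $d \times k'$ isometry with columns $\{u_b\}$ and $M := V^* T V$. Standard Dirichlet-kernel estimates, $|u_b^* v(f)|^2 \ge 4d/\pi^2$ for $f$ in bucket $b$ and $|u_b^* v(f)| \le \sqrt{d}/(2|m-b|)$ for $f$ in bucket $m \ne b$, give $M_{b,b} \ge (4/\pi^2)\,d w_b \gtrsim d w$ and, for $b \ne b'$,
\[
|M_{b,b'}| \;\lesssim\; \frac{d (w_b + w_{b'})}{|b - b'|} \;+\; \sum_{m \ne b, b'} \frac{d w_m}{|m - b|\cdot|m - b'|},
\]
with an analogous bound from the $-f$ symmetric part. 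Summing using $\sum_{b'} 1/|b - b'| = O(\log d)$ and the cross-sum identity $\sum_m 1/(|m - b||m - b'|) = O(\log d/|b - b'|)$, together with $w_m \lesssim w$ (uniform across all buckets after super-heavy isolation), yields $\|M\|_2 \le \max_b \sum_{b'} |M_{b,b'}| = O(dw \log^2 d)$. Setting $\tau := c d w/\log d$ for a small enough constant $c$, the trace-versus-spectral-norm estimate $m \ge (\tr(M) - k' \tau)/\|M\|_2$ for the number of eigenvalues of $M$ at least $\tau$ gives $m = \Omega(k'/\log^2 d) = \Omega(k/\log^3 d)$, and Cauchy interlacing (applied since $V$ has orthonormal columns) transfers these to as many eigenvalues of $T$ of magnitude $\Omega(d\lambda/\log d)$.

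The main obstacle is the bound $\|M\|_2 = O(dw \log^2 d)$: off-diagonal entries of $M$ inherit only a slow $1/|b - b'|$ decay from the Dirichlet kernel, and a priori the weights of buckets outside $G$ could be much larger than $w$; the super-heavy isolation step is essential for a uniform bound $w_m \lesssim w$ that allows the off-diagonal sum to be controlled by $dw$ (rather than $dW$ for some maximum weight $W$). The three $\log d$ factors in the final $\log^3 d$ loss are (i) the dyadic weight pigeonhole that extracts $G$, (ii) the single Dirichlet tail $\sum 1/|b - b'|$ summed across the test sites, and (iii) a second Dirichlet tail from the cross sums $\sum 1/(|m - b||m - b'|)$ across distant buckets, which together convert the lower bound on $\tr(M)$ into the desired count of large eigenvalues.
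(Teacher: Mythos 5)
Your route is genuinely different from the paper's: you compress $T$ against orthonormal on-grid test vectors $u_b = v(f^*_b)/\sqrt{d}$, use Dirichlet-kernel decay to control $M = V^*TV$, and count large eigenvalues by a trace-versus-spectral-norm argument plus interlacing, whereas the paper first reduces every heavy bucket to weight exactly $\lambda$, subsamples buckets to enforce pairwise separation (Lemma~\ref{wellseparatedreduction}), and then runs a block-Gershgorin argument on $D^{1/2}F_S^*F_SD^{1/2}$ (Lemmas~\ref{blockgershgorin}, \ref{frobnormupperbound}, \ref{heavybucketspecialcase}) together with Weyl's inequality. Your skeleton is plausible, and the kernel estimates, orthogonality of the $u_b$, and the counting/interlacing steps are fine; however, as written the proof has two genuine gaps, both in how you handle non-uniform bucket weights.

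First, the super-heavy branch is not ``immediate.'' From $T \succeq T_b$ you only conclude $\lambda_1(T) \ge \lambda_1(T_b)$ for each super-heavy bucket $b$ separately; having $\Omega(k/\log^3 d)$ such buckets does not by itself yield $\Omega(k/\log^3 d)$ distinct large eigenvalues of $T$. Controlling the interaction between different heavy buckets is exactly the content of the lemma, and it cannot be dispensed with by citing $T \succeq T_b$ bucket by bucket. Second, in the ``otherwise'' branch the asserted uniform bound $w_m \lesssim w$ over \emph{all} buckets is false: that branch only guarantees there are \emph{few} super-heavy buckets, not none, and even the non-super-heavy heavy buckets can have weight up to $\Theta(\lambda\log^2 d)$ while the scale $w$ of your group $G$ may be as small as $\lambda$. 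Since $M = V^*TV$ receives off-diagonal contributions from every bucket of $S$, the bound $\|M\|_2 = O(dw\log^2 d)$ --- and hence the final count $\Omega(k/\log^3 d)$ at level $\Omega(d\lambda/\log d)$ --- is not established. Both gaps are repairable by using PSD monotonicity more aggressively, exactly as the paper does: reduce each heavy bucket's weight to exactly $\lambda$ and discard all remaining buckets (equivalently, run your variational argument on $T_G := \sum_{b \in G} T_b \preceq T$ rather than on $T$ itself). Then the weights are uniform by construction, no dyadic pigeonhole or super-heavy case is needed, the off-diagonal row sums are genuinely $O(d\lambda\log^2 d)$, and the resulting eigenvalue lower bounds transfer to $T$ by the Courant--Fischer principle, recovering the stated guarantee.
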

At a high level, this result follows from first showing that the interaction between frequencies, measured as an inner product $v(f_1)^* v(f_2)$ of their frequency vectors, has a harmonic decay as a function of the distance between $f_1$ and $f_2$. Then, we subsample the set of all heavy buckets to ensure that interaction between any remaining pair of buckets is small. Finally, we use a strengthening of Gershgorin's circle theorem for block matrices to guarantee that the contribution to each eigenvalue is dominated by only one bucket.

In Subsection \ref{case3}, we apply the same tools to show a complementary result\footnote{Again, note that the theorem below is a natural relaxation of the claim that if all frequencies in the Vandermonde decomposition of a  circulant matrix  have coefficients bounded by $\lambda$, then the spectral norm of $T$ is bounded by $d \lambda$.}:
\begin{lemma}\label{lightbucketlemma}
	For every $\lambda\geq 0$, if all buckets of $T$ have weight at most $\lambda$, then $\|T\|_2\leq O(d\lambda\log d)$.
\end{lemma}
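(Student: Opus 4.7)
The plan is to pass to the Gram matrix associated with the Vandermonde decomposition, view it as a block matrix indexed by buckets, and apply a block Gershgorin bound in which off-diagonal blocks are controlled by the Dirichlet-kernel decay of the frequency inner products $v(f)^*v(g)$.

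First I would use \Cref{thm:vandermonde} to write $T = F_S D F_S^*$ and absorb the $\pm$ symmetry by setting $S' := S \cup (-S) \subset [-1/2, 1/2]$ with $a_{-f} := a_f$, so that $T = \sum_{f \in S'} a_f v(f) v(f)^*$. Since the nonzero spectra of $AB$ and $BA$ coincide, $\|T\|_2 = \|G\|_2$ for the PSD Gram matrix $G := D^{1/2} F_{S'}^* F_{S'} D^{1/2}$. I would then extend the paper's bucket partition of $[0, 1/2]$ symmetrically to a partition of $[-1/2, 1/2]$ into $d$ buckets $\tilde B_1, \dots, \tilde B_d$ of width $1/d$; by $a_f = a_{-f}$ the extended bucket weights still satisfy $\tilde w_i := \sum_{f \in S' \cap \tilde B_i} a_f \leq \lambda$. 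Indexing the rows and columns of $G$ by bucket membership presents $G$ as a $d \times d$ block matrix with $(i,j)$-block $G_{ij} := D_i^{1/2} F_{S'_i}^* F_{S'_j} D_j^{1/2}$, where $S'_i := S' \cap \tilde B_i$.

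Second, I would apply the block Gershgorin bound $\|G\|_2 \leq \max_i \sum_j \|G_{ij}\|_2$, which follows by partitioning a unit vector $x$ compatibly, using $|x^* G x| \leq \sum_{ij} \|x_i\|\,\|G_{ij}\|_2\,\|x_j\|$, and then Gershgorin on the symmetric scalar matrix $(\|G_{ij}\|_2)_{ij}$. For the diagonal block, $G_{ii}$ shares its nonzero spectrum with the PSD matrix $T_i := \sum_{f \in S'_i} a_f v(f) v(f)^*$, so the triangle inequality gives $\|G_{ii}\|_2 \leq \sum_{f \in S'_i} a_f \|v(f)\|_2^2 = d \tilde w_i \leq d\lambda$. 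For an off-diagonal block at circular bucket distance $r := \min(|i-j|, d-|i-j|) \geq 2$, any $f \in S'_i$ and $g \in S'_j$ satisfy $\|f-g\|_\mathbb{T} \geq (r-1)/d$, and the Dirichlet identity $v(f)^* v(g) = \sin(\pi d(g-f))/\sin(\pi(g-f))$ yields $|v(f)^* v(g)| \leq 1/(2\|f-g\|_\mathbb{T}) \leq C d /r$. Bounding the operator norm by the Frobenius norm then gives
\[
\|G_{ij}\|_2 \leq \|G_{ij}\|_F \leq \frac{C d}{r}\sqrt{\tilde w_i \tilde w_j} \leq \frac{C d \lambda}{r}.
\]

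Finally, summing over $j$, $\sum_j \|G_{ij}\|_2 \leq O(d\lambda) + C d \lambda \sum_{r=2}^{d/2} 1/r = O(d\lambda \log d)$, and therefore $\|T\|_2 = \|G\|_2 \leq O(d\lambda \log d)$ as claimed. The main obstacle will be handling the short-range interactions cleanly: for buckets at circular distance $r \in \{0, 1\}$ the Dirichlet decay is not yet operative, so the $O(1)$ such terms per row must be controlled by the coarser $d\lambda$ bound and shown to be absorbed into the final logarithmic factor, while the long-range sum is what actually produces the $\log d$ via the harmonic series.
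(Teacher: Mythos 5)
Your proposal is correct and follows essentially the same route as the paper's proof: pass to the Gram matrix $D^{1/2}F_S^*F_SD^{1/2}$, block it by frequency buckets, apply block Gershgorin, bound diagonal blocks by the trace ($d\lambda$), nearby blocks by the coarse Cauchy--Schwarz bound ($d\lambda$, exactly how the paper absorbs the adjacent-bucket issue you flag), and distant blocks via the Dirichlet-kernel decay summed as a harmonic series. The only difference---symmetrizing to $S\cup(-S)$ instead of splitting each block into the four $\pm$ cross terms as the paper does---is cosmetic.
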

In other words, a uniform bound on bucket weight implies a uniform bound over eigenvalues. Combined with the \Cref{heavybucketlemma}, this implies that the bucket structure characterizes the eigenvalue structure in some sense, up to polylogarithmic precision. These structural statements can also be seen as providing fine grained insights into the eigenvalue structure of arbitrary off-grid Fourier matrices beyond just condition number bounds \cite{moitra2015super}. This completes the overview of our approach of relating the potentially ill-conditioned Vandermonde decompostion and the eigenvalue decomposition for general Toeplitz matrices.\\
\\
\textbf{Putting it together: Toeplitz low-rank approximation in Frobenius and spectral norm.} Finally, Subsection \ref{finalproof} uses the structural statements of Subsections \ref{case2} and \ref{case3} to prove Theorems \ref{spectralexistence} and \ref{frobeniusexistence}.

We first prove Theorem \ref{spectralexistence}, the high level idea of which is as follows. Consider the Toeplitz matrix $\wt{T}$ obtained by taking only those buckets of weight at least $\wt{\Omega}(\lambda_{k+1}(T)/d)$, which we will call \emph{heavy}. By \Cref{lightbucketlemma}, $\|T-\wt{T}\|_2 \leq \wt{O}(\lambda_{k+1}(T)) \leq \wt{O}(1)\|T-T_k\|_2$. Moreover, by \Cref{heavybucketlemma} the matrix $\wt{T}$ cannot contain more than $\wt{O}(k)$ heavy buckets. Each bucket corresponds to a Toeplitz matrix with clustered frequencies, so replacing each bucket with its $\wt{O}(1)$-rank approximation by \Cref{lem:clusteredcase} incurs the additive error of $\delta \|T\|_F$ and finishes the proof. 

In general, however, a spectral norm low-rank approximation guarantee does not imply a Frobenius norm low-rank approximation guarantee. The main idea behind Theorem \ref{frobeniusexistence} is to consider the $\wt{T}$ obtained by taking a few more buckets than in the proof of Theorem \ref{spectralexistence}. Doing so ensures that $T$ is even closer to $\wt{T}$, such that even the top $k$ eigenvalues of $T-\wt{T}$ can be bounded in terms of the $d-k$ smallest eigenvalues of $T$. %\hltodo{TODO: update this for larger cutoff for 1+eps approx: I updated this before, but looks like it was overwritten by a merge. }
In particular, let $\wt{T}$ be the Toeplitz matrix obtained by taking only those buckets of weight at least $\wt{\Omega}(\lambda_{\wt{O}(k/\eps)}(T)/d)$. Again by \Cref{lightbucketlemma}, $\wt{T}$ cannot contain more than $\wt{O}(k/\eps)$ heavy buckets. \Cref{heavybucketlemma} bounds each of the top $k+1$ eigenvalues of $T-\wt{T}$ by $\wt{O}(\lambda_{\wt{\Omega}(k/\eps)}(T))$, so their contribution to $\|T-\wt{T}\|_F^2$ is at most $k\wt{O}(\lambda_{\wt{\Omega}(k/\eps)}(T)^2)\leq \eps \sum_{i=k+1}^{\wt{O}(k/\eps)} \lambda_i(T)^2\leq \eps\|T-T_k\|_F^2$. On the other hand, since $T-\wt{T}\preceq T$, we can bound any eigenvalue of $T-\wt{T}$ except the top $k+1$ by the corresponding eigenvalue of $T$. Thus the remaining eigenvalues' contribution to $\|T-\wt{T}\|_F^2$ can be bounded by $\sum_{i=k+2}^{d}\lambda_{i}^2(T)=\|T-T_k\|_F^2$. Overall this results in $\|T-\wt{T}\|_F \leq (1+\eps)\|T-T_k\|_F$. Replacing
each bucket in $\wt{T}$ with the $\wt{O}(1)$-rank matrix from \Cref{lem:clusteredcase} incurs the additive error and completes the proof. 
%\Cam{I think this part should either explain the $\epsilon$ approximation variant -- where $2k+1$ becomes something like $k/\epsilon + k$ or should explicitly say that here we just consider constant factor approximation and that to get $(1+\epsilon)$ you take $O(k/\epsilon)$ buckets, but don't explain the details.}
\\
\\
\textbf{Sublinear query algorithm.} Now we present the main ideas behind our recovery algorithm, \Cref{toeplitzrecovery}, and its theoretical guarantees as stated in Theorem \ref{sublinearqueryalgo}.

%\Cam{I don't think we should restate this -- we have already stated it in the intro.}

%\Cam{Little $d$ is bad notation here since it collides with the dimension. Should be changed everwhere to be $a$ to match Section 4.}
\begin{comment}
	\begin{reptheorem}{sublinearqueryalgo}
		Assume we are given query access to a PSD Toeplitz matrix $T\in \mathbb{R}^{d\times d}$ and parameters $k,\delta$. Let  $r = \wt{O}(\frac{k}{\epsilon}\log(1/\delta))$. Then Algorithm \ref{toeplitzrecovery} makes $\wt{O}\Big( \frac{k^2\log(\frac{1}{\delta})\log^3(\frac{k}{\epsilon})}{\epsilon^9} \Big)$ 
		queries to $T$ and  returns $F\in \mathbb{C}^{d\times r},a\in \mathbb{R}^{r}$ such that $F \diag(a)F^{*}$ is a symmetric Toeplitz matrix of rank $r$ and, with probability at least $99/100$,
		%	\begin{enumerate}
			%		\item Algorithm \ref{toeplitzrecovery} makes $\wt{O}\Big( \frac{k^2\log(\frac{1}{\delta})\log^3(\frac{k}{\epsilon})}{\epsilon^9} \Big)$ 
			%		 queries to $T$. 
			$$\|T-F\diag(a)F^{*}\|_F\leq (1+\epsilon)\|T-T_k\|_F + \delta(1+\epsilon) \|T\|_F.$$
			%\end{enumerate}
		\end{reptheorem}
	\end{comment}
	We treat the $\wt{T}=F_{\wt{S}}\wt{D}F_{\wt{S}}^{*}$ that is guaranteed to exist by Theorem \ref{frobeniusexistence} as the true underlying matrix and $T$ as a noisy version of it that we have access to. Note that $T$ is completely determined by its first column $T_{1}\in \mathbb{R}^{d}$ and the first column of $\wt{T}$ is $F_{\wt{S}}\wt{d}$ where $\wt{d}\in \mathbb{R}^{\tilde O(k/\epsilon)}$ is the vector of values on the diagonal of $\wt{D}$.
	
	This suggests the following strategy - if the algorithm knew $\wt{S}$ exactly, it could try to find $\wt{d}$ by solving the  regression problem $\argmin_{a\in \mathbb{R}^d}\|F_{\wt{S}}a-T_1\|_2$. We can solve this regression problem approximately without reading too many entries of $T_1$ using the technique of leverage score sampling \cite{drineas2006sampling}, and using universal leverage score upper bounds for off-grid Fourier matrices that are independent of the set of frequencies $\wt{S}$ defining $F_{\wt{S}}$ \cite{chen2016fourier,eldar2020toeplitz}.
	
	The first issue with this strategy is the fact that the error bounds in fitting the first column $T_1$ would not translate to error bounds on fitting the entire matrix $T$ in the Frobenius norm. This is because each entry of $T_1$ appears a different number of times in $T$. For example $T_{1,1}$ appears $d$ times whereas $T_{d,1}$ only appears twice. To circumvent this, we need to solve an alternate regression problem where each row of $[F_{\wt{S}};T_1]$ is weighted differently to account for the asymmetry in fitting the first column $T_1$ versus fitting the entire matrix $T$. Using a geometric grouping technique we are able to obtain leverage score upper bounds for this modified regression problem that only suffer a logarithmic overhead compared to those known for the unweighted case \cite{eldar2020toeplitz}.
	
	The second issue is the assumption on the knowledge of $\wt{S}$ which can be circumvented by brute-force searching over the set of all possible $\wt{S}$ and choose the one with the smallest error.
	
	%\Cam{This should be moved up above the part on brute force search. Then once you talk about how to solve the weighted regression problem with $\tilde S$, talk about brute for searching over them.}

	This suggests the following algorithm. It first obtains a sample set containing a few entries of $T_1$ using this universal leverage score distribution that is valid for any off-grid Fourier matrix, then searches for all possible sets of $\wt{O}(k)$ frequencies that could be the set $\wt{S}$. For each such guess of $\wt{S}$ it finds an approximately optimal $\wt{d}$ by approximately solving the weighted regression problem described previously, using the same sample set that works for any set $\wt{S}$. Finally, it returns the best $\wt{d}$ among all the guesses. 
	
	%\Cam{Work this in somewhere.}
	To prove that this algorithm works with good probability is still challenging. This is because standard sample-efficient regression results based on leverage score sampling \cite{Sarlos:2006vn,Woodruff:2014tg} do not suffice in our setting for two reasons. First, we search over many possible $\wt S$ and thus solve many regression problems -- we must take a union bound to argue that our sample set gives a good approximation for all these problems. This presents an issue for standard results, which typically require sample complexity depending linearly on $1/\eta$, where $1-\eta$ is the probability of success. Second, we require identifying a frequency set $\wt S$ with near minimal error -- i.e., we must compare the errors of the many regression problems that we solve. Standard leverage score based sampling results however, typically do not output an estimate of the actual regression error, making it impossible to chose a near optimal $\wt S$. To overcome these issues, we use a two stage algorithm, as in \cite{musco2021active}. Following techniques of \cite{eldar2020toeplitz}, we first find $\tilde S_1$ achieving a  constant factor of the optimal. We then find $\tilde S_2$ which gives a $(1+\epsilon)$ relative error fit to the residual remaining after regressing onto $\tilde S_1$. Our final frequency subset is $\tilde S_1 \cup \tilde S_2$. This approach allows us to use a modified analysis of leverage score sampling for fitting $\tilde S_2$, which both gives high probability bounds (with $\log(1/\eta)$ dependence for failure probability $\eta$) and regression error estimates, as required. A detailed description of the algorithm together with its analysis are presented in Algorithm \ref{algorithm}.
	
	%. We also show that such a low-rank approximation can be obtained by only reading sublinearly many entries from $T$.
\section{Notation and preliminaries.} 

In this section, we introduce notation and preliminary concepts that are used throughout this paper.

\label{sec:preliminaries}
\subsection{Notation.}
For any functions $f,g:\mathbb{R}\rightarrow \mathbb{R}$, $f(n)\lsim g(n)$ means that there exists a constant $C>0$ such that  $f(n)\leq C g(n)$. For any positive integer $n$, let $[n] = \{1,2,\ldots,n\}$. For any set $S\subset \mathbb{R}$,  let $-S$ denote the set obtained by negating each element in $S$ and let $+S$ denote $S$ itself. For any set $N$, let $N^n$ denote the set of all subsets of $N$ with $n$ elements.
For a matrix $A$, let $A^{T}$ and $A^*$ denote its transpose and Hermitian transpose, respectively. For any vector $x\in \mathbb{C}^d$, let $\|x\|_2 = \sqrt{x^*x}$ denote its $\ell_2$ norm. For a matrix $A$ with $d$ columns, let $\|A\|_2 = \sup_{x\in \mathbb{C}^d} \|Ax\|_2/\|x\|_2$ denote its spectral norm and $\|A\|_F=\sqrt{\sum_{i \in [d]} \sum_{j \in [d]} A_{i, j}^2}$ denote its Frobenius norm. For a square matrix $A$, let $tr(A)$ denote its trace. 

A Hermitian matrix $A\in \mathbb{C}^{d\times d}$ is positive semidefinite (PSD) if for all $x\in \mathbb{C}^d$, $x^*Ax \geq 0$. Let $\lambda_{1}(A)\geq \ldots \geq \lambda_{d}(A)\geq 0$ denote its eigenvalues. Let $\preceq$ denote the Loewner ordering, that is $A\preceq B$ if and only if $B-A$ is PSD. Let $A= U\Sigma V^*$ denote the compact singular value decomposition of $A$, and when $A$ is PSD note that $U \Sigma U^*$ is its eigenvalue decomposition and let $A^{1/2} = U\Sigma^{1/2}$ denote its matrix square root, where $\Sigma^{1/2}$ is obtained by taking the elementwise square root of $\Sigma$. Let $A_k = U_k \Sigma_k V_k^{*}$ denote the projection of $A$ onto its top $k$ singular vectors. Here, $\Sigma_k\in \mathbb{R}^{k\times k}$ is the diagonal matrix containing the $k$ largest singular values of $A$, and $U_k, V_k \in \mathbb{C}^{d\times k}$ denote the corresponding $k$ left and right singular vectors of $A$. Note that $A_k$ is the optimal rank $k$ approximation to $A$ in the spectral and Frobenius norms, that is $A_{k} = \argmin_{\text{rank $k$ }\wt{A}}\|A-\wt{A}\|_2$ and $A_{k} = \argmin_{\text{rank $k$ }\wt{A}}\|A-\wt{A}\|_F$. Finally, for any vector $y\in \mathbb{R}^{d}$, let $T(y)\in \mathbb{R}^{d\times d}$ denote the symmetric Toeplitz matrix whose first column is $y$. 

\begin{comment}

	For any function $x:\Z \rightarrow \mathbb{C}$, and integer $d>0$, we let $\|x\|_{d}^2 := \sum_{j=1}^d |x(j)|^2$. For a function $x$ defined on $\Z$ or $[d]$, we will interchangeably use the notations $x_j$ or $x(j)$ to refer to its value at point $j$.
\begin{definition}[Discrete-time Fourier transform]
		For $x: \Z \to \mathbb{C}$, its Fourier transform $\wh{x}(f): [0, 1] \to \mathbb{C}$ is defined as
		\[
		\wh{x}(f) =  \sum_{n \in \Z} x_n e^{-2 \pi i f n}
		\]
		The inverse transform is 
		\[
		x_n  = \int_{0}^{1} \wh{x}(f) e^{2 \pi i f n} \dif f
		\]
	\end{definition}
	The discrete time version of parseval's theorem allows us to relate the energy in time and fourier domains, and it is formally stated as follows.

	\begin{lemma}[Parseval's identity]\label{lem:parseval_dtft}
		For $x:\Z \rightarrow \mathbb{C}$ with its Fourier transform $\wh x:[0,1]\rightarrow \mathbb{C}$, we have the following
		\begin{equation}
			\sum_{t=-\infty}^{\infty} |x(t)|^2 = \int_{0}^{1} |\wh x(f)|^2 df .
		\end{equation}
	\end{lemma}
\end{comment}
\subsection{Fourier analytic and linear algebra tools.}
Let $T$ denote a $d\times d$ symmetric PSD Toeplitz matrix. We heavily rely on the Fourier structure of Toeplitz matrices.

While circulant matrices can be diagonalized by the discrete Fourier transform, this does not hold in general for Toeplitz matrices. However, a classical result called the Caratheodory-Fejer-Pisarenko decomposition (also called the Vandermonde Decomposition), which we also stated in Section \ref{sec:overview}, says that they still can be decomposed into a product of off-grid Fourier and diagonal matrices; this is formalized in the following result below.
%\Cam{I think this should be moved to where we introduce the Vandemonde decomp in the tech overview.}
%\Cam{Theorem number problem. Flagging for you to check everywhere.}
\begin{reptheorem}{thm:vandermonde}
	Any real-valued PSD Toeplitz matrix $T \in \R^{d\times d}$ of rank $r$ can be expressed as $F_SDF_S^{*}$ where $F_S \in \mathbb{C}^{d\times r}$ is a symmetric Fourier matrix with frequencies $S\subseteq [0,1/2]$, $|S| = r/2$ and $D\in \mathbb{R}^{r\times r}$ is a diagonal matrix with $r$ positive entries. Moreover, for any $f\in S$ the values in $D$ corresponding to columns $v(f),v(-f)$, which we refer to as the weights of $f,-f$ respectively, are identical. Thus $D$ is uniquely defined by values $\{a_f\}_{f\in S}$.%\hltodoil{If $|S|=\frac{r}{2}$, then $\frac{r}{2}$ is an integer and the floor is unnecessary. As well as the +1?}
	% We refer to $S\cap [0,1/2]$ as the un-conjugated and $S\cap [1/2,1]$ as the conjugated set of frequencies of $S$.
\end{reptheorem}
%\mmtodoil{I fixed the above theorem to the best of my ability, but I am not sure what is the consesus on the meaning of $S$ now is. Please fix if it is not right. Also, I feel like the last sentence is useless.}

We also define the wrap around distance between two freqencies as follows.
\begin{definition}[Wrap around distance]\label{wraparounddefn}
	The wrap around distance between any two frequencies $f,g\in [-1/2,1/2]$ is defined as $|f-g|_{\circ} = \min\{|f_i-f_j|,1-|f_i-f_j|\}$.
\end{definition}
Using the cyclic property of the trace, we can show the following Lemma, which relates the entries in $D$ to the eigenvalues of $T$.
\begin{lemma}\label{tracetrick}
	$2d\sum_{f\in S}a_f = tr(T) = \sum_{j=1}^d \lambda_j(T)$.
\end{lemma}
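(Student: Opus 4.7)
The plan is to use the cyclic property of the trace together with the explicit form of the Vandermonde decomposition given in \Cref{thm:vandermonde}. The equality $\tr(T)=\sum_{j=1}^{d}\lambda_j(T)$ is the standard identity relating the trace of a Hermitian matrix to the sum of its eigenvalues, so no further argument is needed for the right-hand equality. The entire content of the lemma is in the first equality $2d\sum_{f\in S}a_f=\tr(T)$.

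First I would write $T=F_S D F_S^{*}$ using the Vandermonde decomposition and apply the cyclic property of trace to get $\tr(T)=\tr(F_S D F_S^{*})=\tr(D\,F_S^{*}F_S)$. Since $D$ is diagonal, only the diagonal entries of $F_S^{*}F_S$ contribute, so it suffices to compute $(F_S^{*}F_S)_{jj}$ for each column $j$ of $F_S$. Each such column is a frequency vector $v(f)$ for some $f\in S\cup(-S)$, and by \Cref{freqvector} we have
\[
v(f)^{*}v(f)=\sum_{t=0}^{d-1}\bigl|e^{2\pi i f t}\bigr|^{2}=d,
\]
so every diagonal entry of $F_S^{*}F_S$ equals $d$.

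Therefore $\tr(D\,F_S^{*}F_S)=d\cdot\tr(D)$. By the last sentence of \Cref{thm:vandermonde}, for each $f\in S$ the two diagonal entries of $D$ corresponding to the columns $v(f)$ and $v(-f)$ are both equal to $a_f$, so $\tr(D)=2\sum_{f\in S}a_f$. Combining these yields $\tr(T)=2d\sum_{f\in S}a_f$, which is the desired identity. There is no real obstacle here: the lemma is essentially a bookkeeping consequence of the fact that frequency vectors have squared $\ell_2$ norm $d$, together with the symmetry of the Vandermonde weights under $f\mapsto -f$.
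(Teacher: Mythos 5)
Your proof is correct and follows essentially the same route as the paper: the cyclic property of the trace, the observation that each diagonal entry of $F_S^{*}F_S$ equals $d$, and the pairing of weights for $f$ and $-f$ giving the factor $2$. You simply make explicit the norm computation $v(f)^{*}v(f)=d$ and the eigenvalue identity that the paper states without elaboration.
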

\begin{proof}
	Note that $tr(FDF^{*}) = tr(T) = tr(U\Sigma U^{T}) = \sum_{j=1}^d\lambda_j(T)$. Here $T=U\Sigma U^{T}$ is the eigendecomposition of $T$. Further, $tr(FDF^{*}) = tr(F^{*}FD)$. Since each diagonal entry of $F^{*}F$ is $d$ and $D$ is just a diagonal matrix with entries $\{a_{f}\}_{f\in S}$, we get that $tr(F^{*}FD)=2d\sum_{f\in S}a_f$. Thus $2d\sum_{f\in S}a_f = \sum_{j=1}^d \lambda_j(T)$.
\end{proof}

%\Cam{This is out of place now that that definition has moved.}
We will need the notion of \emph{statistical leverage scores} \cite{spielman2011graph,drineas2006sampling}, which are used to define non-uniform row sampling schemes. These schemes then enable randomized matrix compressions, which provide spectral approximation guarantees and preserve significant information \cite{cohen2015dimensionality,musco2017recursive}. Their precise definition is as follows.
\begin{definition}[Leverage score]\label{leveragescoresdefn}
	For any $A\in \mathbb{C}^{d\times r}$, let $\tau_j(A)$ denote the leverage score of the $j^{th}$ row of $A$:
	\begin{equation}
		\tau_j(A) = \max_{y\in \mathbb{C}^{s}} \frac{|(Ay)_i|^2}{\sum_{j=1}^d |(Ay)_j|^2}.
	\end{equation}
\end{definition} 

Finally, we will repeatedly use Weyl's eigenvalue perturbation bound for Hermitian matrices. %This is stated in the following theorem.
\begin{theorem}[Weyl's inequality]\label{weylineq}
	For any $n>0$ and Hermitian matrices $B,C\in \mathbb{R}^{n\times n} $ and $A = B+C$, the following holds for all $i\in [n]$
	\begin{equation*}
		\lambda_{i}(B) - \|C\|_2 \leq \lambda_i(A) \leq \lambda_{i}(B)+\|C\|_2.
	\end{equation*}
\end{theorem}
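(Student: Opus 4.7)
The plan is to prove Weyl's inequality via the Courant--Fischer min--max characterization of eigenvalues of Hermitian matrices. Recall that for any Hermitian $M \in \mathbb{C}^{n \times n}$ and any $i \in [n]$,
\begin{equation*}
\lambda_i(M) \;=\; \max_{\substack{V \subseteq \mathbb{C}^n \\ \dim V = n-i+1}} \; \min_{\substack{x \in V \\ \|x\|_2 = 1}} \; x^* M x \;=\; \min_{\substack{W \subseteq \mathbb{C}^n \\ \dim W = i}} \; \max_{\substack{x \in W \\ \|x\|_2 = 1}} \; x^* M x.
\end{equation*}
I would take this as a black box (it is standard and proved in any linear algebra reference). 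The entire proof then reduces to comparing quadratic forms.

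Next, I would use the additive decomposition $A = B + C$ to split the quadratic form: for any unit vector $x$, $x^* A x = x^* B x + x^* C x$. Since $C$ is Hermitian, its spectral norm equals $\max_{\|x\|_2=1}|x^* C x|$, so $-\|C\|_2 \le x^* C x \le \|C\|_2$ for every unit $x$. Substituting into the max-min expression for $\lambda_i(A)$ and using that adding a uniform constant to the inner quadratic form passes through both $\min$ and $\max$, I would obtain
\begin{equation*}
\lambda_i(A) \;\le\; \max_{\dim V = n-i+1} \min_{x \in V,\, \|x\|_2=1} \bigl( x^* B x + \|C\|_2 \bigr) \;=\; \lambda_i(B) + \|C\|_2,
\end{equation*}
and symmetrically $\lambda_i(A) \ge \lambda_i(B) - \|C\|_2$. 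This yields both inequalities in the statement.

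There is no real obstacle here: the result is a textbook consequence of the min--max principle, and I am including it only for reference since later arguments (e.g., the bucket-based eigenvalue analysis behind \Cref{heavybucketlemma} and \Cref{lightbucketlemma}) invoke it to control how eigenvalues shift under small Hermitian perturbations. The only subtlety worth flagging is that the bound uses the spectral norm $\|C\|_2$ rather than, say, $\|C\|_F$, and crucially requires $B$ and $C$ (and hence $A$) to be Hermitian so that the variational characterization applies; otherwise one would have to work with singular values via the analogous Mirsky-type inequalities.
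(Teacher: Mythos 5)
Your proposal is correct, but it is worth noting that the paper does not actually prove Theorem \ref{weylineq} at all: it states Weyl's inequality and defers the proof to Section 1.3 of \cite{tao2012topics}. Your argument is the standard self-contained route via the Courant--Fischer min--max principle, and the core steps are exactly right: split $x^*Ax = x^*Bx + x^*Cx$, use Hermitianness of $C$ to get $|x^*Cx| \le \|C\|_2$ for unit $x$, and observe that a uniform additive shift passes through both the inner $\min$ and the outer $\max$, giving the two-sided bound. One small precision issue: the min--max characterization as you wrote it (outer $\max$ over subspaces of dimension $n-i+1$, alternatively $\min$ over subspaces of dimension $i$) corresponds to eigenvalues indexed in \emph{increasing} order, whereas the paper orders eigenvalues decreasingly, $\lambda_1(A) \ge \cdots \ge \lambda_n(A)$; under the paper's convention the subspace dimensions should be swapped ($\max$ over $\dim V = i$, $\min$ over $\dim W = n-i+1$). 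This is harmless here, since the conclusion $\lambda_i(B) - \|C\|_2 \le \lambda_i(A) \le \lambda_i(B) + \|C\|_2$ is invariant under reversing the index (replace $i$ by $n-i+1$ simultaneously for $A$ and $B$), so your proof establishes the statement as used in the paper; but if this proof were spliced in verbatim, the indexing convention should be aligned with the paper's to avoid confusion in the downstream applications in Lemmas \ref{heavybucketlemma} and \ref{lightbucketlemma}.
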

For a proof of Weyl's inequality, we refer the reader to Section 1.3 in \cite{tao2012topics}.
\section{Existence of a near optimal low-rank approximation which itself is Toeplitz.}
\label{sec:existence_main}

The goal of this section is to prove Theorems \ref{spectralexistence} and \ref{frobeniusexistence}. 
%\hltodo{Update hyperlinks to informal versions at start of paper.}
Recall that we are given PSD Toeplitz matrix $T$ and $T=F_SDF_S^{*}$, where $F_S$ is a symmetric Fourier matrix (recall Definition \ref{def:symmetric_fourier_matrix}), with frequencies $S\subset [0,1/2]$ and  $D= diag(\{a_f\}_{f\in S})$ in its Vandermonde decomposition. We briefly discuss how this section is organized. In subsection \ref{case1}, we consider the case when all the frequencies in $S$ are very close to each other; this is the \emph{clustered} case.
Then in subsection \ref{case2}, we tackle the general case when the frequencies in $S$ are not necessarily clustered. We partition the frequency domain into buckets, where the weight of each bucket is the sum of the weights of all frequencies in $S$ landing in that bucket. The formal definition of buckets is as follows.
\begin{definition}\label{bucketing} %\Cam{$S$ and $a_f$  in this definition have no context. So you need to say that like... Given a toeplitz matrix with vandermonde decomposition $F_S D F_S^*$ as in...}
	For any PSD Toeplitz matrix $T=F_SDF_S^{*}$, where $F_S$ is a Fourier matrix with frequencies $S\subset [0,1/2]$,  $D= diag(\{a_f\}_{f\in S})$, and $j$ is a positive integer, we define the $j$-th bucket $B_j$ by 
	$$
	B_j := \left[\frac{j-1}{d},\frac{j}{d}\right)\cap S,
	$$
	and let
	$$
	w(B_j)=\sum_{f\in B_j} a_f,
	$$
	denote the weight of the $j$-th bucket.
\end{definition}
The main claim in this subsection is to show that having heavy buckets implies that $T$ has many large eigenvalues. This claim is shown in two steps. First, we show the claim for Toeplitz matrices that have a well-separated spectrum as defined formally below.
\begin{definition}[$(\lambda,w)$-well separated]\label{wellseparatedcase}
	A PSD Toeplitz matrix $T=F_SDF_S^{*}$ is said to be $(\lambda,w)$-well separated if all non-empty buckets of $T$ have weight exactly $\lambda$, and for any two non-empty buckets, the minimum wrap-around distance between any two frequencies in these two buckets is at least $w$.
\end{definition}
Second, we show how to reduce from proving the claim for general Toeplitz matrices, to proving it for well separated Toeplitz matrices.
In subsection \ref{case3}, we use the same tools to complete the characterization relating buckets and eigenvalues. In particular, we show that if all buckets have weight smaller than $\lambda$, then all eigenvalues of $T$ are smaller than $O(d\lambda\log d)$. 
Finally, in subsection \ref{finalproof}, we show how to use the structural statements proven in subsections \ref{case2} and \ref{case3} to prove Theorems \ref{frobeniusexistence} and \ref{spectralexistence}.
\subsection{Case of clustered frequencies.}\label{case1}
We begin with the case when all the frequencies in $S$ are clustered, as formalized by Definition \ref{clustered_case-mod}. We first state some preliminaries. Define the $t^{th}$ entry of the $k^{th}$ column denoted by $T_k(t)$ for $t\in [0,d-1]$ as follows %\Cam{Should say in words what it is. I.e., that its }
\begin{equation*}
	T_{k}(t) = \sum_{f\in S} a_fe^{2\pi i (f^*+r_f) (t-(k-1))} + \sum_{f\in S} a_fe^{-2\pi i (f^*+r_f) (t-(k-1))} \quad \forall k\in [d].
\end{equation*}
Then the following lemma, which follows trivially from expanding $T$'s Vandermonde decomposition $F_SDF_S^{*}$, says that the $k^{\text{th}}$ column of $T$ is the evaluation of $T_{k}(t)$ at $t=0,\ldots,d-1$.
\begin{lemma}
	For all $(m,k) \in [d]\times [d]$, $T_{m,k} = T_{k}(m-1)$. 
\end{lemma}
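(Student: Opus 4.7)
The plan is to unpack the Vandermonde decomposition $T = F_S D F_S^*$ entrywise and verify that the resulting formula matches the definition of $T_k(t)$ evaluated at $t = m-1$. Since the statement is purely a rewriting of definitions, no conceptual obstacle arises; the only care needed is to keep the indexing convention $v(f)_m = e^{2\pi i f(m-1)}$ consistent, and to exploit the pairing of $v(f)$ with $v(-f)$ (with a common weight $a_f$) guaranteed by Theorem~\ref{thm:vandermonde}.

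Concretely, I would first write the $(m,k)$-th entry of $F_S D F_S^*$ as a sum over its columns. Since the columns of $F_S$ come in pairs $v(f), v(-f)$ for $f \in S$, and the corresponding diagonal entries of $D$ both equal $a_f$ by Theorem~\ref{thm:vandermonde}, one obtains
$$T_{m,k} = \sum_{f\in S} a_f \bigl( v(f)_m\, \overline{v(f)_k} + v(-f)_m\, \overline{v(-f)_k}\bigr).$$
Next I would substitute $v(f)_m = e^{2\pi i f(m-1)}$ and $v(-f)_m = e^{-2\pi i f(m-1)}$ from Definition~\ref{freqvector}, combine the conjugate factors to pull out $(m-1)-(k-1)$ in each exponent, and write $f = f^* + r_f$ as in Definition~\ref{clustered_case-mod}. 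The two exponentials in each summand then become $e^{\pm 2\pi i (f^*+r_f)((m-1)-(k-1))}$, and summing over $f \in S$ reproduces exactly the definition of $T_k(t)$ with $t = m-1$. This yields $T_{m,k} = T_k(m-1)$, as required.
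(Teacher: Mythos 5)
Your proposal is correct and matches the paper's proof: both simply expand $(F_S D F_S^*)_{m,k}$ entrywise using the paired columns $v(f), v(-f)$ with common weight $a_f$, combine exponents to get $m-k$, and identify the result with $T_k(m-1)$. Your version is in fact slightly more careful with the $(m-1)$ versus $m$ indexing of the frequency vector, but the argument is the same.
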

\begin{proof}
	Denote by $F_{S, i}$ the $i$-th row of $F_S$. Then
	\begin{multline*}
		T_{m, k} = (F_S D F_S^*)_{m, k} = F_{S, m} D F_{S, k}^* =\\
		\sum_{f\in S} a_f e^{2\pi i (f^*+r_f) m} e^{-2\pi i (f^*+r_f) k} + \sum_{f\in S} a_f e^{-2\pi i (f^*+r_f) m} e^{2\pi i (f^*+r_f) k} = \\
		\sum_{f\in S} a_f e^{2\pi i (f^*+r_f) (m - k)} + 			\sum_{f\in S} a_f e^{-2\pi i (f^*+r_f) (m - k)} = T_k(m - 1).
	\end{multline*}
	This concludes the proof of the lemma.
\end{proof}

Our main result in this section is stated in the following lemma.
\begin{replemma}{lem:clusteredcase}
	There exists a universal constant $C_1>0$ such that, given any symmetric PSD Toeplitz  $T$ that is $(f^*,\Delta)$-clustered for some $f^*\in [0,1/2]$ and $\Delta \leq 1/d$, and  for $0<\varepsilon,\delta,\gamma<1$ satisfying $\gamma\leq \varepsilon/(\tr(T)2^{C_1\log^7 d})$, the following conditions hold.  There exists a symmetric Toeplitz matrix $\wt{T}=F_{\wt{S}}\wt{D}F_{\wt{S}}^{*}$ of rank at most $O(\ell)$ for $\ell=O(\log d + \log(1/\delta))$ such that
	\begin{enumerate}
		\item $\wt{S}=\{f^*+j\gamma\}_{j=1}^{\ell+1}\cup \{f^*-j\gamma\}_{j=1}^{\ell+1}$, and $\wt{D}$ is a diagonal matrix such that for any $f\in \wt{S}$, the weights corresponding to $f$ and $-f$ are identical. 
		\item $\|T-\wt{T}\|_{F} \leq \delta (\sum_{f\in S}a_{f}) +\varepsilon d$.
	\end{enumerate} 
\end{replemma}
\begin{remark}
	The value of the diagonal elements in $\widetilde{D}$ depends on $\varepsilon$, and in particular the value of the diagonal elements would go to infinity as $\varepsilon$ goes to zero. However, we do not state this tradeoff explicitly in the lemma, since we use this lemma later in Section 4 to obtain leverage score upper bounds on Fourier sparse functions (based on the work of \cite{chen2016fourier}). These bounds do not depend on the magnitude of the coefficients in the function; instead, they depend only on its sparsity, as long as the coefficients are finite. Thus $\varepsilon$ can be set to any strictly positive number of one's choice.
\end{remark}

\begin{proof}
	\begin{comment}
		Define
		\begin{equation}\label{eq:t-def}
			T(t) = \sum_{f\in S} a_fe^{2\pi i (f^*+r_f)t } + \sum_{f\in S} a_fe^{-2\pi i (f^*+r_f)t}
		\end{equation} 
		for $t\in [-d,d]$. 
	\end{comment}
	
	%Let $S$ be $(f_*, \Delta)$.
	Define for all $t\in [-d,d]$
	\begin{equation}\label{eq:t-def}
		\begin{split}
			T(t) &= \sum_{f\in S} a_ f e^{2\pi i (f^*+r_f) t} + \sum_{f\in S} a_fe^{-2\pi i (f^*+r_f) t}.
			%&= \sum_{f\in S} a_ f \left(e^{2\pi i (f_*+r_f) t} + e^{-2\pi i (f_*+r_f) t}\right)\\
			%&= \sum_{f\in S} a_ f \cos\left(2\pi  (f_*+r_f) t\right)\\
		\end{split}.
	\end{equation}

	Note that $T_{k}(t)$ is just the restriction of $T(t)$ to $t\in [-(k-1),d-k]$ for all $k\in [d]$. Thus we focus on uniformly approximating $T(t)$ over $t\in [-d,d]$, as this will yield uniform approximations for each $T_k(t)$ simultaneously.

	The next Lemma shows how to approximate $T(t)$ with a modulated low degree polynomial. The proof of this Lemma is deferred to the end of this subsection.
	\begin{lemma}\label{lowdeg_taylorapprox}
		For polynomials $p_1,p_2$ of degree $\ell=O(\log d +\log(1/\delta))$ defined as \\$p_{1}(t) = \sum_{m=0}^{\ell} \left[\sum_{f\in S} a_f\frac{(2\pi ir_f)^m}{m!}\right]t^m$ and $p_{2}(t) = \sum_{m=0}^{\ell}$ $ \left[\sum_{f\in S} a_f\frac{(-2\pi ir_f)^m}{m!}\right]t^m$, the following equality holds:
		\begin{equation*}
			|T(t) - e^{2\pi i f^* t}p_{1}(t) - e^{-2\pi i f^* t}p_{2}(t) |\leq \delta(\sum_{f\in S} a_f)  \quad \forall t\in [-d,d].
		\end{equation*}
	\end{lemma}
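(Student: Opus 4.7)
The plan is to reduce the claim to a standard tail bound for the Taylor series of $e^{iy}$. First, I would rewrite $T(t)$ by factoring out the modulation by $f^*$, obtaining
\[
T(t) = e^{2\pi i f^* t}\sum_{f\in S} a_f e^{2\pi i r_f t} + e^{-2\pi i f^* t}\sum_{f\in S} a_f e^{-2\pi i r_f t}.
\]
Interchanging sums in the definition of $p_1$ and $p_2$, I would note that
\[
p_1(t) = \sum_{f\in S} a_f \sum_{m=0}^{\ell} \frac{(2\pi i r_f t)^m}{m!}, \qquad p_2(t) = \sum_{f\in S} a_f \sum_{m=0}^{\ell} \frac{(-2\pi i r_f t)^m}{m!},
\]
so that $p_1(t)$ is precisely the sum over $f\in S$ of the degree-$\ell$ Maclaurin polynomial of $a_f e^{2\pi i r_f t}$ (and similarly for $p_2$). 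Subtracting and applying the triangle inequality, the error becomes
\[
\bigl|T(t) - e^{2\pi i f^* t}p_1(t) - e^{-2\pi i f^* t}p_2(t)\bigr| \;\leq\; \sum_{f\in S} a_f\,\bigl(R_\ell(2\pi r_f t) + R_\ell(-2\pi r_f t)\bigr),
\]
where $R_\ell(y) = \bigl|e^{iy} - \sum_{m=0}^\ell (iy)^m/m!\bigr|$.

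Second, since every derivative of $e^{iy}$ has modulus one, the Lagrange form of the remainder gives the clean bound $R_\ell(y) \leq |y|^{\ell+1}/(\ell+1)!$. By the hypothesis $\Delta\leq 1/d$ and the range $t\in[-d,d]$, we have $|2\pi r_f t|\leq 2\pi \Delta d\leq 2\pi$ for every $f\in S$, and hence
\[
R_\ell(\pm 2\pi r_f t) \;\leq\; \frac{(2\pi)^{\ell+1}}{(\ell+1)!}.
\]

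Third, I would apply Stirling's bound $(\ell+1)! \geq ((\ell+1)/e)^{\ell+1}$ to obtain $(2\pi)^{\ell+1}/(\ell+1)! \leq \bigl(2\pi e/(\ell+1)\bigr)^{\ell+1}$, and then pick $\ell$ large enough that $2\bigl(2\pi e/(\ell+1)\bigr)^{\ell+1}\leq \delta$. Solving this inequality, it suffices to take $\ell = C\log(1/\delta)$ for a sufficiently large absolute constant $C$, which is comfortably within the stated bound $\ell = O(\log d + \log(1/\delta))$. Plugging this back, the total error is at most $\delta\sum_{f\in S}a_f$ uniformly in $t\in[-d,d]$, as desired.

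The proof is essentially an exercise in Taylor remainder estimation; there is no serious obstacle. The only point requiring mild care is verifying that the bound $|2\pi r_f t|\leq 2\pi$ (rather than something growing with $d$) holds, which is exactly why the hypothesis $\Delta\leq 1/d$ is imposed in \Cref{lem:clusteredcase} --- absent this assumption, the argument would require $\ell$ to grow polynomially in $\Delta d$, not logarithmically.
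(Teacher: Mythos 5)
Your proposal is correct and follows essentially the same route as the paper: factor out the modulation by $f^*$, recognize $p_1,p_2$ as the truncated Taylor expansions of $\sum_{f\in S} a_f e^{\pm 2\pi i r_f t}$, and bound the truncation error uniformly over $t\in[-d,d]$ using $|2\pi r_f t|\leq 2\pi\Delta d\leq 2\pi$ together with nonnegativity of the $a_f$. The only difference is cosmetic: you estimate the remainder via the Lagrange/integral form plus Stirling, whereas the paper bounds the infinite tail term-by-term using $m!\geq (m/2)^{m/2}$ and a geometric series; both give $\ell=O(\log d+\log(1/\delta))$.
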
 
	
	Observe that we can write the $p_{1}(t)$ and $p_{2}(t)$ obtained from Lemma~\ref{lowdeg_taylorapprox} as $p_1(t) = p_{even}(t)+ ip_{odd}(t)$ and $p_2(t) = p_{even}(t) - ip_{odd}(t)$, where $p_{even}$ and $p_{odd}$ contain the even and odd powered terms in $\sum_{m=0}^{\ell} \left[\sum_{f\in S} a_f\frac{(2\pi ir_f)^m}{m!}\right]t^m$ respectively. Now we can use the following lemma, which is a minor variation of Lemma 8.8 in \cite{chen2016fourier}, to express $p_{even}(t)$ and $p_{odd}(t)$ as $O(\log d+\log(1/\delta))$-Fourier sparse functions. For completeness, we present its proof at the end of this section.
	\begin{lemma}[Lemma 8.8 in \cite{chen2016fourier}]\label{polytofourier}
		Let $p(t)$ be a degree $\ell$ polynomial with coefficients $c_1,\ldots,c_{\ell}$, defined over $t\in [-d,d]$ and containing only even powers of $t$. For every  $\varepsilon\in (0, 1)$ and $\gamma \in (0, 1)$ satisfying
		\begin{equation*}
			\gamma \leq \varepsilon/(d2^{\Theta({\ell}^3\log (\ell))}\underset{1\leq i \leq p}{\max}|c_i|),
		\end{equation*}
		there exists $\wt{p}(t) = \sum_{j=1}^{\ell+1} \alpha_j(e^{2\pi i (j\gamma)t }+e^{-2\pi i (j\gamma)t })$ such that 
		\begin{equation*}
			|p(t) - \wt{p}(t)| \leq \varepsilon \quad \forall t\in [-d,d].
		\end{equation*}
		If $p(t)$ has only odd powers of $t$, then there exists $\wt{p}(t) = -i \sum_{j=1}^{\ell+1} \beta_j(e^{2\pi i (j\gamma)t }-e^{-2\pi i (j\gamma)t })$ that satisfies the above guarantee.
	\end{lemma}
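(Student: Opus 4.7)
The plan is to Taylor-expand the cosine basis so that the problem of uniformly approximating $p$ on $[-d,d]$ becomes a linear algebra problem (matching formal power series coefficients), solve the resulting square Vandermonde system for the $\alpha_j$'s, and then bound the residual using the rapid factorial decay of the Taylor coefficients when $\gamma$ is sufficiently small. I sketch only the even case; the odd case is strictly analogous.

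Using $e^{2\pi i(j\gamma)t}+e^{-2\pi i(j\gamma)t}=2\cos(2\pi j\gamma t)=2\sum_{k\ge 0}(-1)^{k}(2\pi j\gamma)^{2k}t^{2k}/(2k)!$, the candidate $\widetilde p(t)=\sum_{j=1}^{\ell+1}\alpha_{j}\bigl(e^{2\pi i(j\gamma)t}+e^{-2\pi i(j\gamma)t}\bigr)$ is a formal power series in $t^{2}$ whose coefficient of $t^{2k}$ equals $(-1)^{k}(2\pi\gamma)^{2k}M_{k}/(2k)!$, where $M_{k}:=2\sum_{j=1}^{\ell+1}\alpha_{j}\,j^{2k}$. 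Since $p$ has only even powers and degree at most $\ell$, I will choose $\alpha\in\mathbb{R}^{\ell+1}$ to solve the square linear system that forces $M_{k}$ to take the value needed for the $t^{2k}$-coefficient of $\widetilde p$ to match that of $p$, for $k=0,1,\dots,\lfloor\ell/2\rfloor$, and forces $M_{k}=0$ for $k=\lfloor\ell/2\rfloor+1,\dots,\ell$. The coefficient matrix is the Vandermonde matrix $V$ in the $\ell+1$ distinct positive nodes $1^{2},2^{2},\dots,(\ell+1)^{2}$, hence invertible.

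Once $\alpha$ is fixed, $\widetilde p$ and $p$ agree on every $t^{2k}$-coefficient for $k\le\ell$, so the pointwise error is driven entirely by the Taylor tail:
\begin{equation*}
|p(t)-\widetilde p(t)|\;\le\;\sum_{k\ge \ell+1}\frac{(2\pi\gamma|t|)^{2k}}{(2k)!}\,|M_{k}|\;\le\;2(\ell+1)\,\|\alpha\|_{\infty}\sum_{k\ge \ell+1}\frac{\bigl((\ell+1)\cdot 2\pi\gamma d\bigr)^{2k}}{(2k)!}.
\end{equation*}
The remaining ingredients are (i) a product-of-differences bound of Gautschi type for $V^{-1}$, which gives $\|V^{-1}\|_{\infty}$ at most some $2^{\mathrm{poly}(\ell)}$ for squared-integer nodes, (ii) the size estimate $|M_{k}/2|\le \max_{i}|c_{i}|\cdot(2k)!/(2\pi\gamma)^{2k}$ on the right-hand side of the Vandermonde system, yielding $\|\alpha\|_{\infty}\le 2^{\mathrm{poly}(\ell)}\max_{i}|c_{i}|/\gamma^{\ell}$, and (iii) Stirling applied to $(2\ell+2)!$ to bound the leading tail term by $(e\pi\gamma d)^{2\ell+2}$ (valid once $(\ell+1)\cdot 2\pi\gamma d$ is small, which is implied by the hypothesis on $\gamma$). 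Combining these yields an error bound of the form $2^{\mathrm{poly}(\ell)}\max_{i}|c_{i}|\cdot d^{2\ell+2}\gamma^{\ell+2}$, and requiring this to be at most $\varepsilon$ gives exactly a condition on $\gamma$ of the stated type, with the (coarse) constant $2^{\Theta(\ell^{3}\log\ell)}$ absorbing all the polynomial-in-$\ell$ and factorial slack from the Vandermonde inverse bound and the tail estimate.

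For the odd case, replace $2\cos(2\pi j\gamma t)$ by $2i\sin(2\pi j\gamma t)=2i\sum_{k\ge 0}(-1)^{k}(2\pi j\gamma)^{2k+1}t^{2k+1}/(2k+1)!$, which is a power series in odd powers of $t$; factoring $j$ out of $j^{2k+1}=j\cdot(j^{2})^{k}$ reduces the matrix of the corresponding linear system in $\beta_{j}$ to the same squared-integer Vandermonde (scaled by a diagonal), and the rest of the argument goes through verbatim. The main technical obstacle is a quantitatively good upper bound on $\|V^{-1}\|_{\infty}$ for the squared-integer Vandermonde: a direct Cramer's-rule expansion already suffices (and is the source of the loose $2^{\Theta(\ell^{3}\log\ell)}$ exponent), but it is the one place where the analysis departs from routine Taylor-expansion manipulations.
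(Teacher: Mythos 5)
Your plan is structurally the same as the paper's proof, which itself is a light adaptation of Lemma 8.8 of \cite{chen2016fourier}: Taylor-expand the sparse exponential ansatz, pick the coefficients so that all Taylor coefficients up to degree roughly $\ell$ match $p$ exactly by inverting an $(\ell+1)\times(\ell+1)$ Vandermonde system, and then bound the remaining Taylor tail using an inverse-Vandermonde bound, the factorial decay, and the smallness of $\gamma$. Your square system (nodes $1^2,\dots,(\ell+1)^2$, with the extra equations forcing the coefficients of $t^{2k}$ for $\lfloor\ell/2\rfloor<k\le\ell$ to vanish) is a harmless variant of the paper's (nodes $1,\dots,\ell+1$, with zero right-hand sides in the odd rows); both are invertible and play the same role, and your tail bookkeeping is the content the paper simply declares ``identical to the proof of Lemma 8.8.''

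The one step that does not go through as claimed is the last one. The error bound you derive is of the form $2^{\mathrm{poly}(\ell)}\max_i|c_i|\,d^{2\ell+2}\gamma^{\ell+2}$, so making it at most $\varepsilon$ requires $\gamma\lesssim 2^{-\mathrm{poly}(\ell)}\bigl(\varepsilon/\max_i|c_i|\bigr)^{1/(\ell+2)}d^{-(2\ell+2)/(\ell+2)}$, which is \emph{not} ``exactly a condition of the stated type'': the stated hypothesis is linear in $1/d$ and in $\varepsilon/\max_i|c_i|$, and it does not imply your derived condition when, e.g., $\max_i|c_i|\ll\varepsilon$ or $d\gg 2^{\ell^3\log\ell}$ (similarly, the hypothesis does not by itself make $(\ell+1)2\pi\gamma d$ small, which you invoke for the Stirling step). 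Indeed, for $p(t)=(\varepsilon/d)t^2$ the stated hypothesis permits a constant $\gamma$, yet then $\widetilde p$ is periodic with constant period $1/\gamma$ while $p$ changes by $\gg\varepsilon$ across one period near $t=d$, so no such $\widetilde p$ exists; the explicit threshold in the statement is itself too weak as written (it is a loose transcription of \cite{chen2016fourier}, where the smallness condition on $\gamma$ must track the time range and the monomial scales $|c_i|d^i$, as it effectively does in the paper's application with $\ell=\Omega(\log d)$). So this is a shared imprecision rather than a methodological divergence, but be aware that your closing sentence overstates what the computation yields: the $2^{\Theta(\ell^3\log\ell)}$ factor can absorb $\ell$-dependent slack, not the mismatched powers of $d$ and of $\varepsilon/\max_i|c_i|$.
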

	
	%Let $\epsilon = \delta \sum_{j=1}^d a_j$.
	We apply Lemma~\ref{polytofourier} to approximate $p_{even}(t)$ by $\wt{p}_{even}(t)=\sum_{j=1}^{\ell+1}\alpha_j(e^{2\pi i (j\gamma)t}+e^{-2\pi i (j\gamma)t})$ and $p_{odd}(t)$ by $\wt{p}_{odd}(t)=-i\sum_{j=1}^{\ell+1}\beta_j(e^{2\pi i (j\gamma)t}-e^{-2\pi i (j\gamma)t})$. Note that since $r_f\in [0,1]$ for all $f\in S$, we can trivially upper bound the absolute value of the $j^{th}$ coefficient of $p_{even}$ and $p_{odd}$ by $d\sum_{f\in S}a_f$. Thus we can choose any $\gamma \leq \varepsilon/(d(\sum_{f\in S} a_f)2^{C_1\log^7 d})$ while applying Lemma~\ref{polytofourier}, since $\ell^3 \log (\ell)\leq O(\log^7 d)$. Now define $\wt{T}(t)$ as follows:
	\begin{align*}
		\wt{T}(t) &= e^{2\pi i f^*t}(\wt{p}_{even}(t)+i\wt{p}_{odd}(t)) +  e^{-2\pi i f^*t}(\wt{p}_{even}(t)-i\wt{p}_{odd}(t))\\
		&=\sum_{j=1}^{\ell+1}\alpha_j\left[(e^{2\pi i (f^*+j\gamma)t}+e^{-2\pi i (f^*+j\gamma)t})+(e^{2\pi i (f^*-j\gamma)t}+e^{-2\pi i (f^*-j\gamma)t})\right]\\
		&+\sum_{j=1}^{\ell+1}\beta_j\left[(e^{2\pi i (f^*+j\gamma)t}+e^{-2\pi i (f^*+j\gamma)t})-(e^{2\pi i (f^*-j\gamma)t}+e^{-2\pi i (f^*-j\gamma)t})\right]\\
		&=\sum_{j=1}^{\ell+1}(\alpha_j+\beta_j)\left[(e^{2\pi i (f^*+j\gamma)t}+e^{-2\pi i (f^*+j\gamma)t})\right]+(\alpha_j-\beta_j)\left[(e^{2\pi i (f^*-j\gamma)t}+e^{-2\pi i (f^*-j\gamma)t})\right].
	\end{align*}
	%Thus letting $\wt{p}_k(t)=\sum_{j=1}^{O(\log^2 d)} \alpha_j(e^{2\pi i (j\gamma)(t-(k-1)) }+e^{2\pi i (-j\gamma)(t-(k-1)) })$, that is the restriction of $\wt{p}(t)$ to $t\in [-(k-1),d-k]$, 
	Let $\wt{T}_{k}(t)$ be the restriction of $\wt{T}(t)$ to $t\in [-(k-1),d-k]$. Thus applying the error guarantees of Lemma~\ref{lowdeg_taylorapprox} and \ref{polytofourier} and applying triangle inequality, we get that for all $k\in [d]$
	\begin{equation*}
		|T_k(t) - \wt{T}_k(t)| \leq \delta (\sum_{f\in S} a_f) + \varepsilon \quad \forall t \in [-(k-1),d-k].
	\end{equation*}
	
	Now $\wt{T}_k(t)$ for $k\in\{1,\ldots,d\}$ naturally defines the rank $4(\ell+1)$ symmetric Toeplitz matrix $\wt{T}=F_{\wt{S}}\wt{D}F_{\wt{S}}^{*}$, where $\wt{S}=\{f^*+j\gamma\}_{j=1}^{\ell+1}\cup \{f^*-j\gamma\}_{j=1}^{\ell+1}$ and $\wt{D}=diag(\{\alpha_j+\beta_j\}_{j=1}^{\ell+1}\cup\{\alpha_j-\beta_j\}_{j=1}^{\ell+1})$. The previous equation and this observation immediately that for $\wt{T}$,
	\begin{equation*}
		\|T-\wt{T}\|_F \leq \sqrt{\sum_{(i,j)\in [d]\times [d]} ( \delta (\sum_{f\in S} a_f) + \varepsilon)^2} = \delta d(\sum_{f\in S} a_f) + \varepsilon d.
	\end{equation*}
	We redefine $\delta$ as $\delta/d$. This concludes the proof of statement 2 in Lemma~\ref{lem:clusteredcase}.
	Lemma~\ref{tracetrick} implies that choosing $\gamma \leq \varepsilon/(d(\sum_{f\in S} a_f)2^{C_1\log^7 d}) = \varepsilon/(\tr(T)2^{C_1\log^7 d}) $ suffices for the first claim of the lemma to hold. 
\end{proof}
We now state the proof of Lemma~\ref{lowdeg_taylorapprox}.
\begin{proof}
	First, using a Taylor expansion we can write $T(t)$ as follows
	\begin{align*}
		T(t) &= \sum_{f\in S} a_fe^{2\pi i (f^*+r_f) t} + \sum_{f\in S} a_fe^{-2\pi i (f^*+r_f)t}\\
		& = e^{2\pi i f^* t}\left(\sum_{f\in S} a_f(\sum_{m=0}^{\infty} \frac{(2\pi i r_f t)^m}{m!})\right)+e^{-2\pi i f^* t}\left(\sum_{f\in S} a_f(\sum_{m=0}^{\infty} \frac{(-2\pi i r_f t)^m}{m!})\right)\\
		& = e^{2\pi i f^* t}\sum_{m=0}^{\ell} \left[\sum_{f\in S} a_f\frac{(2\pi ir_f)^m}{m!}\right]t^m+e^{-2\pi i f^* t}\sum_{m=0}^{\ell} \left[\sum_{f\in S} a_f\frac{(-2\pi ir_f)^m}{m!}\right]t^m\\
		&+ e^{2\pi i f^* t}\left(\sum_{f\in S} a_f(\sum_{m=\ell}^{\infty} \frac{(2\pi i r_f t)^m}{m!})\right)+ e^{-2\pi i f^* t}\left(\sum_{f\in S} a_f(\sum_{m=\ell}^{\infty} \frac{(-2\pi i r_f t)^m}{m!} )\right).
	\end{align*}
	Using the fact that $|r_f|\leq \Delta$ and $t\in [-d,d]$, we have that $|2\pi i r_f|< 10 \Delta d$ for all $f\in S$. Also note that for any $m>1, m!\geq (m/2)^{m/2}$. This implies that the following holds for all $r_f$ and $t\in[-d,d]$
	\begin{align*}
		\left|\sum_{m=\ell}^{\infty} \frac{(2\pi i r_f t)^m}{m!}\right|&\leq \sum_{m=\ell}^{\infty} \left|\frac{(2\pi i r_f t)^m}{m!}\right|  \leq \sum_{m=\ell}^{\infty} (10\Delta d)^m/(m/2)^{m/2} \\
		&\leq \sum_{m=\ell}^{\infty} (10\Delta d/\sqrt{m/2})^m \leq \sum_{m=\ell}^{\infty} (10/100 \sqrt{\log d})^{m}\\
		&= \sum_{m=\ell}^{\infty} (0.1)^m \leq (0.1)^{O(\log(1/\delta))}/0.9 \leq \delta/2.
	\end{align*}
	Thus letting $p_{1}(t) = \sum_{m=0}^{\ell} \left[\sum_{f\in S} a_f\frac{(2\pi ir_f)^m}{m!}\right]t^m$ and $p_{2}(t) = \sum_{m=0}^{\ell}$ $ \left[\sum_{f\in S} a_j\frac{(-2\pi ir_f)^m}{m!}\right]t^m$,  we get that
	\begin{align*}
		|T(t) - e^{2\pi i f^*t}p_{1}(t) - e^{-2\pi i f^*t}p_{2}(t)| &\leq \sum_{f\in S} a_{j}\left(\left|\sum_{m=\ell}^{\infty} \frac{(2\pi i r_f t)^m}{m!} \right|+ \left|\sum_{m=\ell}^{\infty} \frac{(-2\pi i r_f t)^m}{m!} \right|\right)\\&\leq \delta(\sum_{f\in S} a_j),
	\end{align*}
	where in the second inequality we used that all the $a_j$'s are non-negative.
\end{proof}
We finish this subsection with the proof of Lemma~\ref{polytofourier}.\begin{proof}
	The proof of this is almost identical to the proof of Lemma 8.8 in \cite{chen2016fourier}. We first consider the case when the degree $\ell$ polynomial $p(t)$ has only even powers of $t$. Let $p(t) = \sum_{j=0}^{\ell/2} c_{2j}t^{2j}$. First, note we can write $\wt{p}(t)$ as 
	\begin{align*}
		\wt{p}(t) &= \sum_{j=1}^{\ell+1} \alpha_{j}( e^{2\pi i (\gamma j)t}+e^{-2\pi i (\gamma j)t})\\
		&= \sum_{j=1}^{\ell+1} 2\alpha_{j} (\sum_{k=0:k \text{ even}}^{\infty} \frac{(2\pi i \gamma j t)^k}{k!})\\
		&= \sum_{k=0:k\text{ even}}^{\infty} 2\frac{(2\pi i \gamma  t)^k}{j!}\sum_{j=1}^{\ell+1} \alpha_j j^k\\
		&= \sum_{k=0:k\text{ even}}^{\ell} 2\frac{(2\pi i \gamma  t)^k}{j!}\sum_{j=1}^{\ell+1} \alpha_j j^k+\sum_{k=\ell+1:k\text{ even}}^{\infty} 2\frac{(2\pi i \gamma  t)^k}{j!}\sum_{j=1}^{\ell+1} \alpha_j j^k\\
		& = p(t)+ (\sum_{k=0:k\text{ even}}^{\ell} 2\frac{(2\pi i \gamma  t)^k}{j!}\sum_{j=1}^{\ell+1} \alpha_j j^k-p(t))+\sum_{k=\ell+1:k\text{ even}}^{\infty} 2\frac{(2\pi i \gamma  t)^k}{j!}\sum_{j=1}^{\ell+1} \alpha_j j^k.
	\end{align*}
	We will show that there exists some $\gamma$ and $\alpha_1,\ldots,\alpha_{\ell+1}$ such that $\sum_{k=0:k\text{ even}}^{\ell} 2\frac{(2\pi i \gamma  t)^k}{j!}\sum_{j=1}^{\ell+1} \alpha_j j^k-p(t)$ is identically zero. The $\ell+1 \times \ell+1$ Vandermonde matrix $A'$ is the same as in the proof of Lemma 8.8 in \cite{chen2016fourier}; however, the $\ell+1$ dimensional vector $c'$ is slightly different: $c'_{j} =  c_j j!/(2\pi i \gamma)^j$ for $j$ even and $c'_j=0$ for $j$ odd. The rest of the argument is identical to the proof of Lemma 8.8 in \cite{chen2016fourier}.
	Now, the argument to show that the absolute value of $C_2 = \sum_{k=\ell+1:k\text{ even}}^{\infty} 2\frac{(2\pi i \gamma  t)^k}{j!}\sum_{j=1}^{\ell+1} \alpha_j j^k$ is less than $\epsilon$ for all $|t|\leq d$ (for the choice of $\gamma$ and $\alpha_1,\ldots,\alpha_{\ell+1}$ shown to exist previously) is also nearly identical to the proof of Lemma 8.8 in \cite{chen2016fourier}. The only difference is that, while upper bounding the value of $|C_2|$, we replace $t^j$ with $|t|^j$ for all $j\in [\ell+1,\infty]$.
	
	The proof for the case when $p(t)$ has only odd powered terms is identical to the proof discussed above, with the exception that the goal is to show the existence of  $\wt{p}(t) = -i\sum_{j=1}^{\ell+1} \beta_j(e^{2\pi i(\gamma j)t}-e^{-2\pi i(\gamma j)t})$ satisfying the claim of the theorem. 
\end{proof}

\subsection{Many heavy buckets implies many large eigenvalues.}\label{case2}

In this subsection, we present our first structural result relating the Vandermonde decomposition $T = F_S D F_{S}^{*}$ to the eigenvalue decomposition of $T=U\Sigma U^{T}$. We first bucket the frequencies in the Vandermonde decomposition, as described in Definition \ref{bucketing}. 
\begin{comment}
	\begin{Definition}\label{bucketing}
		Define the $j^{th}$ bucket of $T$ as $B(j) = \{f \in S | f \in [(j-1)/d,j/d]\}$ for all $j\in \mathbb{N}$. Also, let $w(B(j))=\underset{f\in B(j)}{\sum}a_f$ be the weight of the $j^{th}$ bucket. 
	\end{Definition}
\end{comment}

Equipped with this definition, we now state the main result of this section.
\begin{replemma}{heavybucketlemma}
	For every $\lambda,k>0$, if $T$ has at least $k$ buckets with weight at least $\lambda$, then $T$ has at least $k/\log^3 d$ eigenvalues that have value $\Omega(d\lambda/\log d)$. 
\end{replemma}
Before we prove Lemma~\ref{heavybucketlemma}, we state the following helper lemmas. We will then present the proof of Lemma~\ref{heavybucketlemma} using these helper lemmas, and finally end this subsection with their proofs. % of these helper lemmas.
\begin{lemma}\label{innerproduct}
	For any two frequencies $f,g\in [-1/2,1/2]$, 
	\begin{align*}
		|v(f)^{*}v(g)|\leq O(1/|f-g|_{\circ}),
	\end{align*}
	where $v(f),v(g)$ are the corresponding frequency vectors (recall Definition \ref{freqvector} of frequency vectors) and $|f-g|_{\circ}$ is the wrap-around distance between $f$ and $g$.  
\end{lemma}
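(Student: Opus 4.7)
The plan is to compute $v(f)^{*}v(g)$ explicitly as a geometric series and then bound it in terms of the wrap-around distance via the standard Dirichlet-kernel estimate. By definition of the frequency vectors,
\[
v(f)^{*}v(g) = \sum_{k=0}^{d-1} e^{-2\pi i f k} e^{2\pi i g k} = \sum_{k=0}^{d-1} z^{k}, \qquad z := e^{2\pi i (g-f)}.
\]
If $f=g$ this equals $d$, which trivially satisfies the bound (since the right-hand side is interpreted as $+\infty$ in that case); otherwise $z\neq 1$, and I would apply the geometric series formula to obtain
\[
\bigl|v(f)^{*}v(g)\bigr| = \left|\frac{z^{d}-1}{z-1}\right| = \frac{|\sin(\pi (g-f) d)|}{|\sin(\pi (g-f))|}.
\]

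The next step is to bound the numerator trivially by $1$ and the denominator from below in terms of $|f-g|_{\circ}$. Since $f,g\in[-1/2,1/2]$, the difference $g-f$ lies in $[-1,1]$, and $|\sin(\pi(g-f))|$ is unchanged by reflecting $g-f$ mod $1$, so it depends only on $|f-g|_{\circ}\in[0,1/2]$. I would then invoke the elementary inequality $|\sin(\pi x)|\geq 2|x|$ for $x\in[-1/2,1/2]$ (which follows from the concavity of $\sin$ on $[0,\pi/2]$ and comparison with the chord through $(0,0)$ and $(\pi/2,1)$). This gives
\[
|\sin(\pi(g-f))| \;\geq\; 2\,|f-g|_{\circ},
\]
and combining the two bounds yields
\[
\bigl|v(f)^{*}v(g)\bigr| \;\leq\; \frac{1}{2\,|f-g|_{\circ}} = O\!\left(\frac{1}{|f-g|_{\circ}}\right),
\]
as required.

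There is no real obstacle here: the only subtlety is to make sure the wrap-around convention is handled correctly, namely that replacing $g-f$ by $g-f\pm 1$ (whichever puts it in $[-1/2,1/2]$) leaves both the numerator and the denominator in the ratio of sines unchanged, so one may assume without loss of generality that $|g-f|=|f-g|_{\circ}\leq 1/2$ before applying the sine lower bound. A brief sentence addressing this case distinction will suffice.
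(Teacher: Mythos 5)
Your proof is correct and follows essentially the same route as the paper's: write $v(f)^*v(g)$ as a geometric series, reduce to the Dirichlet-kernel ratio $|\sin(\pi(g-f)d)|/|\sin(\pi(g-f))|$, bound the numerator by $1$, and lower-bound the denominator by $2|f-g|_\circ$ via the chord inequality for $\sin$. Your explicit handling of the wrap-around reduction and the $f=g$ case is slightly more careful than the paper's, but the argument is the same.
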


The next lemma essentially is a strengthening of Gershgorin's circle theorem for block matrices. 
\begin{lemma}[Section 1.13 in \cite{tretter2008spectral}, or Corollary 3.2 in \cite{echeverria2018block}]
	\label{blockgershgorin}
	Let $A = (A_{ij})\in \mathbb{C}^{dn\times dn}$ be a Hermitian matrix composed of blocks $A_{ij}\in \mathbb{C}^{d\times d}$. Then each $A_{ii}$ is Hermitian, and the following is true:
	\begin{equation*}
		\|A\|_2\leq \max_{i\in [n]}(\|A_{ii}\|_2+\sum_{\substack{j\in [n]\\ j\neq i}}\|A_{ij}\|_2).
	\end{equation*}
\end{lemma}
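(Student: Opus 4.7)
The plan is to bound $\|A\|_2$ by reducing to the ordinary Gershgorin bound applied to a small $n\times n$ matrix whose entries are the block operator norms. Since $A$ is Hermitian, $\|A\|_2 = \max_{\|x\|_2 = 1} |x^* A x|$. I would write any unit vector $x \in \mathbb{C}^{dn}$ in block form $x = (x_1, \ldots, x_n)$ with $x_i \in \mathbb{C}^d$, and expand
\[
x^* A x \;=\; \sum_{i,j \in [n]} x_i^* A_{ij} x_j.
\]

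Next, I would apply the triangle inequality followed by the sub-multiplicativity of the spectral norm to get $|x_i^* A_{ij} x_j| \leq \|A_{ij}\|_2 \cdot \|x_i\|_2 \cdot \|x_j\|_2$. Defining the vector $y \in \mathbb{R}^n_{\geq 0}$ by $y_i = \|x_i\|_2$ (so $\sum_i y_i^2 = 1$) and the nonnegative symmetric matrix $B \in \mathbb{R}^{n\times n}$ by $B_{ij} = \|A_{ij}\|_2$ (symmetric since $A^* = A$ forces $A_{ji} = A_{ij}^*$, hence $\|A_{ji}\|_2 = \|A_{ij}\|_2$), I obtain $|x^* A x| \leq y^T B y \leq \|B\|_2$.

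It then suffices to bound $\|B\|_2$. Since $B$ is a real symmetric matrix, I would invoke the standard Gershgorin circle theorem: every eigenvalue $\mu$ of $B$ satisfies $|\mu - B_{ii}| \leq \sum_{j \neq i} |B_{ij}|$ for some $i$, so
\[
|\mu| \;\leq\; B_{ii} + \sum_{j \neq i} B_{ij} \;=\; \|A_{ii}\|_2 + \sum_{j \neq i} \|A_{ij}\|_2 \;\leq\; \max_{i \in [n]} \Bigl( \|A_{ii}\|_2 + \sum_{j \neq i} \|A_{ij}\|_2 \Bigr).
\]
Taking a maximum over eigenvalues yields the same bound on $\|B\|_2$, and combining with the previous paragraph gives the claimed bound on $\|A\|_2$.

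I do not expect serious obstacles: the only point requiring a moment of care is the Hermiticity of each diagonal block $A_{ii}$ (immediate from $A^* = A$, which ensures $A_{ii}^* = A_{ii}$) and the symmetry of the auxiliary matrix $B$, which is what lets me invoke scalar Gershgorin cleanly. Everything else is triangle inequality plus $|u^* M v| \leq \|M\|_2 \|u\|_2 \|v\|_2$.
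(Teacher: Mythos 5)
Your proof is correct. The paper itself gives no proof of this lemma --- it is cited to Tretter and to Echeverr\'ia et al. --- so there is nothing internal to compare against, but your argument is a complete and valid elementary derivation. Each step checks out: Hermiticity of $A$ gives $\|A\|_2 = \max_{\|x\|_2=1}|x^*Ax|$ and forces $A_{ji} = A_{ij}^*$, so the comparison matrix $B_{ij} = \|A_{ij}\|_2$ is symmetric with nonnegative entries; the bound $|x^*Ax| \le y^TBy \le \|B\|_2$ with $y_i = \|x_i\|_2$ and $\|y\|_2 = 1$ is sound; and scalar Gershgorin applied to $B$ delivers exactly the claimed row-sum bound. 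It is worth noting that the cited sources prove a strictly stronger spectral localization statement (every eigenvalue of $A$ lies within distance $\sum_{j\neq i}\|A_{ij}\|_2$ of the spectrum of some diagonal block $A_{ii}$), typically via the eigenvector argument: take $Av = \lambda v$, select the block $v_i$ of largest norm, and bound $\|(\lambda I - A_{ii})v_i\|_2$ by the off-diagonal row sum times $\|v_i\|_2$. That stronger form is what the paper actually uses implicitly in the proof of Lemma 3.11, where eigenvalues of $A+E$ are compared to those of the block-diagonal part; for the norm bound as literally stated, however, your quadratic-form reduction to an $n\times n$ comparison matrix is a perfectly adequate and arguably cleaner route.
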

The final helper lemma will help us upper bound the norms of the off-diagonal blocks when applying Lemma~\ref{blockgershgorin}. 
\begin{lemma}\label{frobnormupperbound}
	For any $\lambda\geq 0$, $\sigma_1,\sigma_2\in \{+,-\}$, sets of frequencies $S_1,S_2\subseteq [0,1/2]$ and corresponding diagonal weight matrices $D_1,D_2$ both of whose traces are at most $\lambda$, $A = D_1^{1/2}F_{\sigma_1S_1}^{*}F_{\sigma_2S_2}D_{2}^{1/2}$ satisfies $\|A\|_F \leq O(\lambda/d(\sigma_1S_1,\sigma_2S_2))$. Here recall that for any set of frequencies $S\subset [0,1/2]$, $+S$ and $-S$ contain the frequencies in $S$ and their negations, respectively, and $d(A,B)$ for any sets $A,B\subseteq[-1/2,1/2]$ is the minimum wraparound distance between any two frequencies in $A,B$.
\end{lemma}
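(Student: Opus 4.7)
The plan is to bound the Frobenius norm of $A$ by a direct entrywise argument, using Lemma~\ref{innerproduct} to control each $(i,j)$ entry and then summing. Concretely, enumerate $S_1 = \{f_1,\ldots,f_{|S_1|}\}$ and $S_2 = \{g_1,\ldots,g_{|S_2|}\}$ with the corresponding diagonal weights $(D_1)_{ii}$ and $(D_2)_{jj}$. Since $F_{\sigma_1 S_1}$ has columns $v(\sigma_1 f_i)$ and likewise for $F_{\sigma_2 S_2}$, the $(i,j)$-th entry of $A = D_1^{1/2}F_{\sigma_1 S_1}^* F_{\sigma_2 S_2} D_2^{1/2}$ is
\[
A_{ij} \;=\; \sqrt{(D_1)_{ii}\,(D_2)_{jj}}\;\; v(\sigma_1 f_i)^{*} v(\sigma_2 g_j).
\]

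Next I would apply Lemma~\ref{innerproduct} to the factor $v(\sigma_1 f_i)^{*} v(\sigma_2 g_j)$, which gives $|v(\sigma_1 f_i)^{*} v(\sigma_2 g_j)| \leq O\bigl(1/|\sigma_1 f_i - \sigma_2 g_j|_{\circ}\bigr)$. Since $\sigma_1 f_i \in \sigma_1 S_1$ and $\sigma_2 g_j \in \sigma_2 S_2$, the definition of $d(\sigma_1 S_1,\sigma_2 S_2)$ as the minimum wrap-around distance over all such pairs implies $|\sigma_1 f_i - \sigma_2 g_j|_{\circ} \geq d(\sigma_1 S_1,\sigma_2 S_2)$. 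Consequently,
\[
|A_{ij}|^2 \;\leq\; (D_1)_{ii}\,(D_2)_{jj}\cdot O\!\left(\frac{1}{d(\sigma_1 S_1,\sigma_2 S_2)^2}\right).
\]

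Summing this bound over all pairs $(i,j)$ and factoring the double sum yields
\[
\|A\|_F^2 \;=\; \sum_{i,j} |A_{ij}|^2 \;\leq\; O\!\left(\frac{1}{d(\sigma_1 S_1,\sigma_2 S_2)^2}\right) \Bigl(\sum_i (D_1)_{ii}\Bigr)\Bigl(\sum_j (D_2)_{jj}\Bigr) \;=\; O\!\left(\frac{\tr(D_1)\,\tr(D_2)}{d(\sigma_1 S_1,\sigma_2 S_2)^2}\right).
\]
Using the hypotheses $\tr(D_1),\tr(D_2)\leq \lambda$ and taking square roots gives $\|A\|_F \leq O(\lambda/d(\sigma_1 S_1,\sigma_2 S_2))$, as claimed.

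There is no real obstacle here; the proof is essentially a one-line calculation once Lemma~\ref{innerproduct} is in hand, together with the observation that the pairwise minimum distance lower-bounds every individual pair. The only point worth verifying carefully is that the wrap-around distance bound is applied to the signed frequencies $\sigma_1 f_i$ and $\sigma_2 g_j$ (rather than to $f_i,g_j$ themselves), so that the notation of Lemma~\ref{innerproduct} matches the frequency vectors actually appearing as columns of $F_{\sigma_1 S_1}$ and $F_{\sigma_2 S_2}$.
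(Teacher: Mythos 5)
Your proposal is correct and matches the paper's own proof essentially line for line: bound each entry of $A$ by $\sqrt{(D_1)_{ii}(D_2)_{jj}}$ times the inner-product bound of Lemma~\ref{innerproduct}, lower-bound each pairwise wrap-around distance by $d(\sigma_1 S_1,\sigma_2 S_2)$, then sum, factor the double sum into $\tr(D_1)\tr(D_2)\leq \lambda^2$, and take square roots. Your closing remark about applying the distance bound to the signed frequencies is exactly the right point of care, and nothing further is needed.
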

\begin{comment}
	\begin{lemma}\label{frobnormupperbound}
		For $m\neq n$ let $A_{m,n} = D_{m}^{1/2}F_{B_m}^{*}F_{B_n}D_{n}^{1/2}$ where $F_{B_m},D_m$ are the Fourier and diagonal weight matrices corresponding the frequencies in the $m^{th}$ bucket respectively. Let $\lambda\geq 0$ be such that $w(B_m)$ and $w(B_n)$ are at most $\lambda$ and let $d_{m,n}$ be the minimum wrap around distance between any frequency in the $m^{th}$ and $n^{th}$ bucket. Then $\|A_{m,n}\|_F \leq O(\lambda/d_{m,n})$. 
	\end{lemma}
\end{comment}

We first consider a Toeplitz matrix with a well separated spectrum (as per Definition \ref{wellseparatedcase}) and prove Lemma~\ref{heavybucketlemma} for this case. Then, we reduce the general case to the well separated case, which proves Lemma~\ref{heavybucketlemma} in full generality. 

It can be seen from the following lemma that, essentially, we can reduce from a general Toeplitz matrix to the well-separated case.
\begin{lemma}\label{wellseparatedreduction}
	For any $w,\lambda,k>0$ and a PSD Toeplitz matrix $T$ having at least $k$ buckets with weight at least $\lambda$, there exists a $(\lambda,w)$-well separated PSD Toeplitz matrix $T_{separated}$ having at least $k/dw$ non-empty buckets satisfying $T_{separated}\preceq T$.
\end{lemma}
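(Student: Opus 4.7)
The plan is to construct $T_{separated}$ by (i) selecting a well-spaced subset of the heavy buckets of $T$, (ii) down-weighting the frequencies in each selected bucket so that its total weight becomes exactly $\lambda$, and (iii) verifying both the well-separated property and the Loewner domination $T_{separated}\preceq T$. Throughout, recall that $T=F_SDF_S^*$ with $D=\diag(\{a_f\}_{f\in S})$, and that by symmetry $a_f=a_{-f}$, so we may write $T=\sum_{f\in S} a_f\bigl(v(f)v(f)^*+v(-f)v(-f)^*\bigr)$.

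\textbf{Step 1 (Selection of well-separated heavy buckets).} Let $\mathcal{H}=\{B_j:w(B_j)\ge\lambda\}$ be the collection of heavy buckets, with $|\mathcal{H}|\ge k$. Since buckets lie in $[0,1/2]$ and wrap-around distance coincides with ordinary distance on $[0,1/2]$, I partition $[0,1/2]$ into consecutive super-intervals of width $w$, of which there are $\lceil 1/(2w)\rceil$. Each such super-interval contains at most $\lceil dw\rceil$ standard buckets (each of width $1/d$). I then apply a standard greedy selection (process heavy buckets from left to right, picking the current one and discarding every heavy bucket within distance $w$ of the pick); because each selected bucket eliminates at most $O(dw)$ competitors, this yields a subcollection $\mathcal{B}\subseteq \mathcal{H}$ of size $|\mathcal{B}|\ge k/dw$ such that any two buckets in $\mathcal{B}$ have their frequencies at pairwise wrap-around distance at least $w$.

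\textbf{Step 2 (Weight rescaling).} For each $B\in\mathcal{B}$, set $\alpha_B=\lambda/w(B)\le 1$ and define the new weights $a'_f=\alpha_B\,a_f$ for every $f\in B$; put $a'_f=0$ for all $f\in S\setminus\bigcup\mathcal{B}$. By construction $\sum_{f\in B}a'_f=\lambda$ for each $B\in\mathcal{B}$, and the symmetry $a'_f=a'_{-f}$ is preserved. Let $S'=\bigcup\mathcal{B}$ and define
\begin{equation*}
	T_{separated}=F_{S'}D'F_{S'}^*, \qquad D'=\diag(\{a'_f\}_{f\in S'}).
\end{equation*}
This is a PSD symmetric Toeplitz matrix (its Vandermonde form with non-negative symmetric weights guarantees both properties), and by Steps 1--2 it is $(\lambda,w)$-well separated with at least $k/dw$ non-empty buckets.

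\textbf{Step 3 (Loewner domination).} Extending $a'_f$ by zero to all of $S$, I can write
\begin{equation*}
	T-T_{separated}=\sum_{f\in S}(a_f-a'_f)\bigl(v(f)v(f)^*+v(-f)v(-f)^*\bigr).
\end{equation*}
For every $f\in S'$, $a_f-a'_f=(1-\alpha_B)a_f\ge 0$, and for $f\in S\setminus S'$, $a_f-a'_f=a_f\ge 0$. Thus $T-T_{separated}$ is a non-negative combination of rank-one PSD matrices and is therefore PSD, giving $T_{separated}\preceq T$.

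The main obstacle is Step 1: the greedy/partitioning argument must be tuned so that the wrap-around distance condition is enforced while the count of selected buckets degrades only by the claimed factor of $dw$. Once the selection step is done, Steps 2 and 3 are essentially algebraic, relying only on the Vandermonde decomposition and the fact that scaling non-negative weights downward preserves positivity in the Loewner order.
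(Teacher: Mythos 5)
Your proof is correct and follows essentially the same route as the paper: subselect heavy buckets to enforce pairwise separation (the paper takes every $dw$-th heavy bucket in sorted order rather than a greedy sweep) and observe that dropping or down-scaling the non-negative Vandermonde weights only decreases the matrix in the Loewner order; your Step 2 rescaling to weight exactly $\lambda$ is performed in the paper as a without-loss-of-generality step inside the proof of Lemma~\ref{heavybucketlemma} rather than inside this claim, so your version is if anything more self-contained. The only nitpick is the count in Step 1: a pick eliminating ``at most $O(dw)$ competitors'' literally yields $k/O(dw)$ selected buckets rather than $k/dw$, a constant-factor slack that is harmless for the downstream application (the paper's residue-class selection has analogous slack, e.g.\ separation $w-1/d$ rather than $w$).
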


The following lemma essentially proves Lemma~\ref{heavybucketlemma} for the special case of well-separated matrices. 
\begin{lemma}\label{heavybucketspecialcase}
	For every $\lambda,w,k>0$ and a $(\lambda,w)$-well separated $T$ having at least $k$ non-empty buckets, $T$ has at least $k$ eigenvalues that have value at least $\Omega(\lambda(d/\log d - \log d/w))$.
\end{lemma}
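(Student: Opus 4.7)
My approach is to show that each of the $k$ heavy buckets contributes an essentially disjoint large eigendirection to $T$. For each non-empty bucket define $T_i := F_{B_i} D_i F_{B_i}^*$, where $F_{B_i}$ is the symmetric Fourier matrix for the $i$-th non-empty bucket and $D_i$ the corresponding block of $D$. Each $T_i$ is a PSD Toeplitz matrix whose frequencies lie in an interval of width $1/d$, i.e.\ it is $(f_i^*, 1/(2d))$-clustered at the bucket center $f_i^*$. Applying \Cref{lem:clusteredcase} with $\delta = 1/d$ and any sufficiently small $\varepsilon$ produces a rank-$O(\log d)$ symmetric Toeplitz $\wt{T_i}$ with $\|T_i - \wt{T_i}\|_F \le O(\lambda)$; optimality of truncated SVD forces all but the top $O(\log d)$ eigenvalues of $T_i$ to be at most $O(\lambda/d)$, contributing at most $O(\lambda)$ to the trace in aggregate. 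Since $\tr(T_i) = 2d\lambda$ by \Cref{tracetrick} (the bucket has weight $\lambda$ and $D_i$ includes both $\pm B_i$), the top $O(\log d)$ eigenvalues of $T_i$ sum to at least $(2 - o(1))d\lambda$, forcing $\sigma_i := \|T_i\|_2 = \Omega(d\lambda/\log d)$.

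Next, let $u_i$ be a unit top eigenvector of $T_i$, form $U = [u_1, \ldots, u_k] \in \mathbb{C}^{d \times k}$, and set $\Sigma = \diag(\sigma_1, \ldots, \sigma_k)$. Since $T = \sum_{i=1}^k T_i$ and $T_i \succeq \sigma_i u_i u_i^*$, we have $T \succeq U \Sigma U^*$. The nonzero eigenvalues of $U\Sigma U^*$ equal those of $\Sigma^{1/2} U^* U \Sigma^{1/2}$, and the PSD inequality $\Sigma^{1/2} U^* U \Sigma^{1/2} \succeq \lambda_{\min}(U^* U) \cdot \Sigma$ combined with monotonicity of eigenvalues under $\succeq$ yields
\[
	\lambda_k(T) \;\ge\; \lambda_k(U\Sigma U^*) \;\ge\; \lambda_{\min}(U^* U) \cdot \min_i \sigma_i.
\]
So it suffices to prove $\lambda_{\min}(U^* U) \ge 1 - O(\log^2 d / (dw))$.

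For this I apply Gershgorin to $U^* U$ (whose diagonal is $1$), so I must bound $\sum_{j \ne i} |u_i^* u_j|$. Writing $u_i = F_{B_i} D_i^{1/2} \xi_i / \sqrt{\sigma_i}$ for the unit top eigenvector $\xi_i$ of $D_i^{1/2} F_{B_i}^* F_{B_i} D_i^{1/2}$, one obtains $u_i^* u_j = \xi_i^* M_{ij} \xi_j / \sqrt{\sigma_i \sigma_j}$ with $M_{ij} := D_i^{1/2} F_{B_i}^* F_{B_j} D_j^{1/2}$, hence $|u_i^* u_j| \le \|M_{ij}\|_F / \sqrt{\sigma_i \sigma_j}$. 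Split $M_{ij}$ into its four $(\pm B_i, \pm B_j)$ sub-blocks and apply \Cref{frobnormupperbound} to each. Ordering the bucket centers as $p_1 < \cdots < p_k \in [0, 1/2]$, a short case analysis shows that $|p_i - p_j| \ge w$ implies $d(+B_i, +B_j) \ge |p_i - p_j|$ and $d(+B_i, -B_j) \ge \min(p_i + p_j, 1 - p_i - p_j) \ge w$ (using that $|p_i - p_j| \ge w$ forces $p_i + p_j \ge w$ as well as $p_i + p_j \le 1 - w$, since WLOG $p_j \ge p_i + w$ gives $p_j \ge w$ and $p_i \le 1/2 - w$). Combined with $\sigma_i, \sigma_j = \Omega(d\lambda/\log d)$,
\[
	|u_i^* u_j| \;\le\; O\!\big(\tfrac{\log d}{d}\big)\cdot \Big(\tfrac{1}{|p_i - p_j|} + \tfrac{1}{\min(p_i + p_j,\, 1 - p_i - p_j)}\Big).
\]
Both $\{p_j\}$ and the reflected set $\{1 - p_j\}$ are $w$-separated in $[0, 1]$, so each harmonic sum over $j \ne i$ is bounded by $O(\log k / w) = O(\log d / w)$, giving $\sum_{j \ne i} |u_i^* u_j| \le O(\log^2 d / (dw))$. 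Gershgorin then yields $\lambda_{\min}(U^* U) \ge 1 - O(\log^2 d/(dw))$, and plugging into the previous display,
\[
	\lambda_k(T) \;\ge\; \big(1 - O(\log^2 d / (dw))\big) \cdot \Omega(d\lambda / \log d) \;=\; \Omega\!\big(\lambda (d/\log d - \log d / w)\big),
\]
as claimed.

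The main obstacle is this last near-orthogonality argument: while the ``diagonal'' $(+B_i, +B_j)$ sub-blocks enjoy the strong bound $d \ge |i - j|\cdot w$ under the natural ordering, the ``anti-diagonal'' $(+B_i, -B_j)$ sub-blocks have separation $\min(p_i + p_j, 1 - p_i - p_j)$ which does not in general scale with $|i - j|$. The key technical observation is that the reflected set $\{1 - p_i\}$ nevertheless inherits $w$-separation from $\{p_i\}$, which keeps the corresponding harmonic sum at $O(\log d/w)$ and prevents these anti-diagonal terms from overwhelming the diagonal contribution.
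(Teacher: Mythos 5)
Your proof is correct, and it shares the paper's two key quantitative inputs while assembling them by a different mechanism in the second half. Like the paper, you obtain the per-bucket bound $\|T_i\|_2 = \Omega(d\lambda/\log d)$ from \Cref{lem:clusteredcase} together with the trace identity (\Cref{tracetrick}), and you control cross-bucket interactions through \Cref{frobnormupperbound} plus a harmonic sum over the $w$-separated buckets. The paper then passes to the Gram matrix $D^{1/2}F_S^{*}F_SD^{1/2}$, splits it into a block-diagonal part plus an off-diagonal part $E$, bounds $\|E\|_2 = O(\lambda\log d/w)$ via the block Gershgorin bound (\Cref{blockgershgorin}), and finishes with Weyl's inequality; you instead minorize $T \succeq U\Sigma U^{*}$ using one top eigenvector per bucket and prove near-orthogonality of these eigenvectors by ordinary scalar Gershgorin applied to the $k\times k$ Gram matrix $U^{*}U$. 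Your route is somewhat leaner -- it needs neither the block strengthening of Gershgorin nor Weyl, and it only has to control interactions between the $k$ chosen eigenvectors rather than between whole blocks -- whereas the paper's block decomposition reuses machinery that also drives \Cref{lightbucketlemma}; your explicit treatment of the anti-diagonal $(+B_i,-B_j)$ terms via $w$-separation of the reflected centers is in fact more careful than the paper's, which handles them implicitly in the extremal-configuration estimate. Minor points, none fatal: with $\delta = 1/d$ and $\varepsilon$ small the error from \Cref{lem:clusteredcase} is $O(\lambda/d)$, not merely the $O(\lambda)$ you first state -- you need that stronger form (or a Cauchy--Schwarz step using $\sum_{j>\ell}\lambda_j(T_i)^2 \le O(\lambda^2)$ from optimality of the truncated SVD) to conclude the tail contributes $o(d\lambda)$ to the trace; when $T$ has more than $k$ non-empty buckets you should write $T \succeq \sum_{i=1}^{k} T_i$ over any $k$ chosen buckets rather than equality; and your distance lower bounds stated in terms of bucket centers hold only up to an additive $O(1/d)$, which is harmless in the only regime ($w \gtrsim \log^2 d/d$) where the lemma's conclusion is non-vacuous -- the paper's own proof is equally informal on this point.
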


Equipped with these helper lemmas, we are ready to present the proof of Lemma~\ref{heavybucketlemma}.
\begin{proof}
	Refer to the $k$ buckets with weight at least $\lambda$ as \emph{heavy buckets}. Without loss of generality, we can assume that each bucket has weight exactly $\lambda$. This can be seen by the following argument. We reduce the weight of each heavy bucket if needed to ensure that each heavy bucket has weight exactly $\lambda$. This change will only make $T$ PSD smaller, and thus it suffices to lower bound eigenvalues of $T$ after this operation. We can easily obtain a $(\lambda,w)$-well separated Toeplitz matrix from $T$ as per Claim \ref{wellseparatedreduction}.
	
	Consider the matrix $T_{separated}$ guaranteed to exist by Lemma \ref{wellseparatedreduction}. Thus, invoking Lemma~\ref{heavybucketspecialcase} for $T_{separated}$ with parameters $\lambda,k/dw,w$ for $w=\log^3d/d$, we get that $T_{separated}$ has at least $k/\log^3d$ eigenvalues of value $\Omega(d\lambda (1/\log d - 1/\log^2d)) = \Omega(d\lambda/\log d)$. Since $T_{separated} \preceq T$, $T$ also has at least $k/\log^3d$ eigenvalues of value $\Omega(d\lambda (1/\log d - 1/\log^2d)) = \Omega(d\lambda/\log d)$. This finishes the proof of Lemma~\ref{heavybucketlemma}.
\end{proof}
Now we present the proof of the well separated case (Lemma~\ref{heavybucketspecialcase}).
\begin{proof}
	Let $T=F_S D F_S^{*}$ denote the Vandermonde decomposition of $T$.
	%and
	%denote the $k$ non-zero buckets of $T$ by $B(j_1),B(j_2)\ldots,B(j_k)$ with corresponding diagonal weight matrices $D_1,\ldots D_k$.
	Observe that $F_S D F_S^{*}$ has the same eigenvalues as $D^{1/2}F_S^{*}F_SD^{1/2}$. This is because, for any matrix $A$, $A^* A$ and $AA^{*}$ have the same eigenvalues; apply this fact to $A=F_SD^{1/2}$. It can be easily seen that $D^{1/2}F_S^{*}F_SD^{1/2}$ is a block matrix, where each of its blocks is of the form $D_{m}^{1/2}F_{B_m}^{*}F_{B_n}D_{n}^{1/2}$ for some $m,n$, and $D_m$ is the diagonal matrix containing weights of the frequencies in bucket $B_m$ for every $m$. We can pad each block with enough zero rows and columns to ensure that the dimensions of each block are the same, without affecting its spectral norm. Thus we can express $D^{1/2}F_S^{*}F_SD^{1/2} = A+E$, where $A$ contains the diagonal blocks with zero matrices on the off diagonal blocks, and $E$ contains the off-diagonal blocks with zero matrices on the diagonal blocks. Our high level strategy is to prove the theorem statement for $A$ by using the fact that it is block diagonal. Then, we show that $E$ has small spectral norm, so the eigenvalues of $A+E$ are close to the eigenvalues of $A$.\\
	\\
	\textbf{Lower bounding largest eigenvalue of diagonal blocks.} $A$ is block diagonal with at least $k$ blocks. Consider any of its blocks $D_{m}^{1/2}F_{B_m}^{*}F_{B_m}D_{m}^{1/2}$ for some $m$. We know that $F_{B_m}D_{m}F_{B_m}^{*}$ is $((m-1/2)/d,1/d)$-clustered. Thus by Lemma~\ref{lem:clusteredcase} with $\delta=1/2d^{11}$ and $\varepsilon = \lambda/2d^{12}$,$F_{B_m}D_{m}F_{B_m}^{*}$ is  $\delta(\sum_{f\in B_m}a_{f}) + \varepsilon d = \lambda/d^{11}$ close to a symmetric Toeplitz matrix $\wt{T}$ with rank at most $C\log d$ in the Frobenius norm for some universal constant $C>0$. Thus by Weyl's inequality (Theorem \ref{weylineq}), we get the following for any $k>C\log d$
	
	\begin{align*}
		\lambda_{k}(F_{B_m}D_{m}F_{B_m}^{*})&\leq \lambda_{k}(\wt{T})+\|F_{B_m}D_{m}F_{B_m}^{*}-\wt{T}\|_{2}\\
		&\leq 0+ \|F_{B_m}D_{m}F_{B_m}^{*}-\wt{T}\|_{F}\\
		&\leq \lambda/d^{11}.
	\end{align*}
	Thus we get the following bound on the trace of $F_{B_m}D_{m}F_{B_m}^{*}$
	\begin{equation*}
		\tr(F_{B_m}D_{m}F_{B_m}^{*}) \leq C\log d\lambda_{1}(F_{B_m}D_{m}F_{B_m}^{*})+\lambda/d^{10}.    
	\end{equation*}
	This implies that
	\begin{align*}
		\lambda_{max}(D_{m}^{1/2}F_{B_m}^{*}F_{B_m}D_{m}^{1/2})&=\lambda_{1}(F_{B_m}D_{m}F_{B_m}^{*})\\
		&\geq \tr(F_{B_m}D_{m}F_{B_m}^{*})/(C\log d) - \lambda/(Cd^{10}\log d)\\
		&=d\lambda/(C\log d) - \lambda/(Cd^{10}\log d) = \Omega(d\lambda/\log d).
	\end{align*}
	Since $A$ is a block diagonal matrix with at least $k$ blocks, and each block has largest eigenvalue $\Omega(d\lambda/\log d)$, we get that $A$ has at least $k$ eigenvalues which are at least $\Omega(d\lambda/\log d)$ in value.\\
	\\
	\textbf{Upper bounding the contribution of off-diagonal blocks.} In the rest of the proof, we bound the spectral norm of $E$. We do this by applying Lemma~\ref{blockgershgorin} to $E$ where its diagonal blocks are $0$ and any off diagonal block is of the form $D_{m}^{1/2}F_{B_m}^{*}F_{B_n}D_{n}^{1/2}$ for some buckets $B_m,B_n$ satisfying $m\neq n$. 
	Consider an arbitrary but fixed bucket index $m$. In order to apply Lemma~\ref{blockgershgorin}, we need to upper bound $\sum_{n:n\neq m} \|D_{m}^{1/2}F_{B_m}^{*}F_{B_n}D_{n}^{1/2}\|_2\leq \sum_{n:n\neq m} \|D_{m}^{1/2}F_{B_m}^{*}F_{B_n}D_{n}^{1/2}\|_F$. We recall that for any $n$, $F_{B_n}= [F_{+B_n};F_{-B_n}]$ as per the notation in Definition \ref{def:symmetric_fourier_matrix}. We apply Lemma~\ref{frobnormupperbound} to bound $\|D_{m}^{1/2}F_{B_m}^{*}F_{B_n}D_{n}^{1/2}\|_F$ as follows
	\begin{align*}
		\|D_{m}^{1/2}F_{B_m}^{*}F_{B_n}D_{n}^{1/2}\|_F &\leq \|D_{m}^{1/2}F_{+B_m}^{*}F_{+B_n}D_{n}^{1/2}\|_F+\|D_{m}^{1/2}F_{+B_m}^{*}F_{-B_n}D_{n}^{1/2}\|_F\\&+\|D_{m}^{1/2}F_{-B_m}^{*}F_{+B_n}D_{n}^{1/2}\|_F+\|D_{m}^{1/2}F_{-B_m}^{*}F_{-B_n}D_{n}^{1/2}\|_F\\
		&\leq O(\lambda(1/d(+B_m,+B_n)+1/d(+B_m,-B_n)+\\
		&1/d(-B_m,+B_n)+1/d(-B_m,-B_n))).
	\end{align*}
	Thus, we need to upper bound the following
	\begin{align*}
		\sum_{n:n\neq m}O(\lambda(1/d(+B_m,+B_n)+1/d(+B_m,-B_n)+
		1/d(-B_m,+B_n)+1/d(-B_m,-B_n))).
	\end{align*}
	It is easy to see that the maximum value the above expression can take will correspond to the case when, informally, there are as many buckets as possible in $S$,  while still ensuring that any two buckets have separation at least $w$. More formally, this corresponds to the case when for every $n\in [m-d/(2w),m+d/(2w)]$, there is a bucket at distance $d_{m,n} = \Theta(|m-n|w)$ from $B_m$. Thus, the above expression is bounded by
	\begin{align*}
		\sum_{n=m+1}^{m+d/(2w)}O(\lambda/((n-m)w))  &+\sum_{n=m-d/(2w)}^{m-1}O(\lambda/((m-n)w)) \\
		& \leq 2\sum_{l=1}^d O(\lambda/(lw))\\&= O(\lambda/w)\sum_{l=1}^{d}1/l\\
		&= O(\lambda \log d/w).
	\end{align*}
	Since any diagonal block has spectral norm $0$ and the previous bound holds for all $m$, Lemma~\ref{blockgershgorin} applied to $E$ implies that $\|E\|_2\leq O(\lambda \log d/w)$.
	Now, applying Weyl's inequality for all $i$ such that $\lambda_i(A)=\Omega( d\lambda/\log d)$, we have
	\begin{equation*}
		\lambda_i(A+E)\geq\lambda_{i}(A) - \|E\|_2 = \Omega(d\lambda/\log d) - \|E\|_2\geq \Omega(\lambda(d/\log d-\log d/w)).    
	\end{equation*}
	Since there are at least $k$ such indices $i$, this completes the proof of the Lemma.
\end{proof}
We now present the proof of Lemma~\ref{innerproduct}.
\begin{proof}
	Denote by $\delta = |f - g|_{\circ}$ the wrap around distance between $f$ and $g$. 
	Since $|v_{f}^{*}v_{g}| = |v_{f}^{*}v_{g}|$, we have the following. %maybe add more explanation here ?
	
	\[
	|v(g)^{*}v(f)| = \left|\sum_{l = 0}^{d - 1} \exp(2 \pi i \delta l)\right| = \left|\frac{\exp(2 \pi i \delta d) - 1}{\exp(2 \pi i \delta) - 1}\right| = \left|\frac{\exp(\pi i \delta d)}{\exp(\pi i \delta)} \cdot \frac{\sin(\pi \delta d)}{\sin(\pi \delta)}\right|.
	\]
	Therefore, it holds that
	\begin{equation*}
		|v(g)^{*}v(f)| = \left| \frac{\sin(\pi \delta d)}{\sin(\pi \delta )} \right| \leq \frac{1}{|\sin(\pi \delta)|} \leq  O\left(\frac{1}{\delta}\right).
	\end{equation*}
	Here, the last inequality follows by the fact that $\sin(\pi x)\geq 2x$ for $x\in [0,1/2]$, and the wrap around distance between any two frequencies lies between $0$ and $1/2$.
\end{proof}
Now, we present the proof of Lemma~\ref{frobnormupperbound}.

\begin{proof}
	We have the following for all $i,j$:
	\begin{equation*}
		|(A)_{i,j}| = \sqrt{(D_1)_i(D_2)_j}|v(f_i)^{*}v(f_j)|\leq O(\sqrt{(D_1)_i(D_2)_j}/|f_i-f_j|)\leq O(\sqrt{(D_1)_i(D_2)_j}/d(\sigma_1S_1,\sigma_2S_2)),
	\end{equation*}
	where $(D_1)_i,(D_2)_j$ and $f_{i},f_{j}$ are the weights and frequencies of the $i^{th}$ frequency in the $\sigma_1S_1$ and $j^{th}$ frequency in the $\sigma_2S_2$ respectively. Also, note that the second to last inequality follows from Lemma~\ref{innerproduct}.
	%Thus $|f_{i,m}-f_{j,m}|=\Theta(d_{i,j})$ as each bucket has width $\log d/d$ and the buckets are at least $\log^4 d/d$ apart.
	This implies
	\begin{align*}
		\|A\|_{F}^2 &\leq O( \sum_{i,j}(D_1)_i(D_2)_j/d(\sigma_1S_1,\sigma_2S_2)^2) \\
		&=O((\sum_{i}(D_1)_i)(\sum_{j}(D_2)_j)/d(\sigma_1S_1,\sigma_2S_2)^2)\\
		&\leq O(\lambda^2/d(\sigma_1S_1,\sigma_2S_2)^2).
	\end{align*}
	Thus $\|A\|_F \leq O( \lambda/d(\sigma_1S_1,\sigma_2S_2))$.  
\end{proof}
We finally end this sub-section with the proof of Claim \ref{wellseparatedreduction}.
\begin{proof}
	Sort the $k$ heavy buckets of $T$ according to their central frequency, and index them by $[k]$. Now consider the Toeplitz matrix $T_{separated}$, obtained by taking the buckets indexed by the residue class $1 \mod dw$ out of the $k$ heavy buckets of $T$. There are at least $k/dw$ such buckets, and the pairwise separation between any two such buckets is at least $w$. This implies $T_{separated}$ is $(\lambda,w)$-well separated and has $k/dw$ non-empty buckets. Moreover, since $T_{separated}$ is obtained by taking a subset of the non-empty buckets of $T$, this implies that $T_{separated}\preceq T$.
\end{proof}
\subsection{All buckets being light implies all eigenvalues are small.}\label{case3}
In this subsection, we show that if all buckets of $T$ have small weight, then all eigenvalues of $T$ are small. This result, combined with the previous section, will suffice to prove Theorems \ref{spectralexistence} and \ref{frobeniusexistence}. This is formalized in the lemma below.

%\Cam{Flagging number here.}
\begin{replemma}{lightbucketlemma}
	For every $\lambda\geq 0$, if all buckets of $T$ have weight at most $\lambda$, then $\|T\|_2\leq O(d\lambda\log d)$.
\end{replemma}
\begin{proof}
	We need to upper bound $\|F_SDF_S^{*}\|_2=\|D^{1/2}F_S^{*}F_SD^{1/2}\|_2$. Again, $A:=D^{1/2}F_S^{*}F_SD^{1/2}$ is a block matrix, where its blocks are of the form $A_{m,n}:=D_{m}^{1/2}F_{B_m}^{*}F_{B_n}D_{n}^{1/2}$ for some buckets $B_m,B_n$. Here $F_{B_m}$ and $D_{m}$ are the Fourier and diagonal weight matrices, respectively, of the $m^{th}$ bucket. To prove the Lemma, we will apply Lemma~\ref{blockgershgorin} on $A$. To do so, we need to upper bound $\|A_{m,m}\|_2+\sum_{n:n\neq m}\|A_{m,n}\|_2\leq \|A_{m,m}\|_2+\sum_{n:n\neq m}\|A_{m,n}\|_F$ for all $m$. Since $A_{m,m}$ is PSD, we upper bound $\|A_{m,m}\|_2$ as follows
	\begin{equation*}
		\|A_{m,m}\|_2\leq \tr(A_{m,m}) = d\sum_{i} (D_m)_i =dw(B_m)\leq d\lambda.
	\end{equation*}
	We can trivially upper bound $\|A_{m,m+1}\|_F$ by first applying Cauchy-Schwarz to each entry $(A_{m,m+1})_{i,j}$:
	\begin{align*}
		|(A_{m,m+1})_{i,j}| = \sqrt{(D_{m})_i(D_{(m+1)})_j}|v(f_{i})^{*}v(f_{j})|\leq & \sqrt{(D_{m})_i(D_{(m+1)})_j}\|v(f_{i})\|_2\|v(f_{j})\|_2\\ &= \sqrt{(D_{m})_i(D_{(m+1)})_j}d,
	\end{align*}
	where $f_{i},(D_m)_i$ are the $i^{th}$ frequency and its corresponding weight in the $m^{th}$ bucket respectively.
	Then we use the previous bound to get that the following holds for any $m$:
	\begin{equation*}
		\|A_{m,m+1}\|_F \leq \sqrt{(\sum_{i}(D_{m})_i)(\sum_{j}(D_{m})_j)d^2}=d\lambda.
	\end{equation*}
	The same bound holds for $\|A_{m,m-1}\|_F$. Thus we obtain the bound
	\begin{equation*}
		\|A_{m,m}\|_2+\sum_{n:n\neq m}\|A_{m,n}\|_F \leq 3d\lambda + \sum_{n:n\notin \{m-1,m,m+1\}} \|A_{m,n}\|_F.
	\end{equation*}
	Since $d(+B_m,+B_{m+1})$ and $d(+B_m,+B_{m-1})$ could be arbitrarily close to $0$ (adjacent buckets could have frequencies very close to each other), the bound $\|A_{m,m+1}\|_F\leq d\lambda$ is tight. For the remaining $n\notin \{m,m-1,m+1\}$, we use Lemma~\ref{frobnormupperbound} to bound $\|A_{m,n}\|_F$ as follows:
	\begin{align*}
		\|A_{m,n}\|_F&\leq O(\lambda(1/d(+B_m,+B_n)+1/d(+B_m,-B_n)+\\
		&1/d(-B_m,+B_n)+1/d(-B_m,-B_n))).
	\end{align*}
	Similar to the proof in the previous subsection, it is easy to see that the maximum value \\$\sum_{n:n\notin \{m-1,m,m+1\}} \|A_{m,n}\|_F$ can take will correspond to the case when all buckets in $[0,1/2]$ are non-empty. More formally, for every $n\in [m-d/2,m+d/2]\setminus \{m-1,m,m+1\}$, there is a bucket at distance $ \Theta(|m-n|/d)$ from bucket $B_m$. Thus we get that the above is bounded by
	\begin{align*}
		\sum_{n=m+2}^{m+d/2}O(\lambda/((n-m)/d))  +&\sum_{n=m-d/2}^{m-2}O(\lambda/((m-n)/d)) \\
		& \leq 2\sum_{l=1}^d O(\lambda/(l/d))\\&= O(d\lambda)\sum_{l=1}^{d}1/l\\
		&= O(d\lambda\log d).
	\end{align*}
	Thus applying Lemma~\ref{blockgershgorin} to $A$, we get that $\|T\|_2 = \|A\|_2 \leq 3d\lambda+ O(d \lambda\log d) = O(d\lambda\log d)$. This completes the proof of Lemma~\ref{lightbucketlemma}.
\end{proof}
\subsection{Existence of Toeplitz low-rank approximation in the spectral and Frobenius norm.}\label{finalproof}
In this section, we use Lemma~\ref{heavybucketlemma} and Lemma~\ref{lightbucketlemma} to show the existence of a near optimal low-rank approximation in the both the spectral and Frobenius norm, which itself is Toeplitz. The first claim is formalized in the following theorem.
\begin{theorem}\label{spectralexistence}
	For any PSD Toeplitz $T\in \mathbb{R}^{d\times d}$,  $0<\delta<1$ and any integer $k\leq d$,  there exists a symmetric Toeplitz matrix $\wt{T}$ of rank $\wt{O}(k\log(1/\delta))$ such that the following holds
	\begin{equation*}
		\|T-\wt{T}\|_2 \leq \wt{O}(1) \|T-T_k\|_2 + \delta\|T\|_F.
	\end{equation*}
	where $T_k=\underset{B:rank(B)\leq k}{\|T-B\|_2}$ is the best rank-$k$ approximation to $T$ in the spectral norm.
\end{theorem}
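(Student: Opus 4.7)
Partition the Vandermonde spectrum of $T$ into \emph{heavy} and \emph{light} buckets at a threshold $\lambda$ proportional to $\lambda_{k+1}(T)/d$ (up to polylogarithmic factors), giving a decomposition $T = T_H + T_L$ into PSD symmetric Toeplitz pieces.  The light piece is absorbed into the error: its spectral norm is $\wt{O}(\lambda_{k+1}(T))$ by Lemma~\ref{lightbucketlemma}.  On each of the $\wt{O}(k)$ heavy buckets (the count being capped by the contrapositive of Lemma~\ref{heavybucketlemma}), we apply the clustered-case existence result, Lemma~\ref{lem:clusteredcase}, to replace the corresponding Toeplitz block by a rank-$O(\log d + \log(1/\delta))$ symmetric Toeplitz surrogate.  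Summing the surrogates yields the final $\wt{T}$, which is Toeplitz and has total rank $\wt{O}(k \log(1/\delta))$ as required.

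\textbf{Setup and control of the light and heavy pieces.} Fix $\lambda := C \log^4 d \cdot \lambda_{k+1}(T)/d$ for a large absolute constant $C$, declare $B_j$ heavy iff $w(B_j) \ge \lambda$, and write $T_H := \sum_{j:\, B_j \text{ heavy}} F_{B_j} D_{B_j} F_{B_j}^*$ and $T_L := T - T_H$.  Since each $B_j$-block is symmetric PSD Toeplitz and the buckets partition $S$, both $T_H$ and $T_L$ are PSD Toeplitz.  All buckets of $T_L$ have weight $<\lambda$, so Lemma~\ref{lightbucketlemma} yields $\|T_L\|_2 \le O(d\lambda \log d) \le \wt{O}(1)\,\lambda_{k+1}(T) = \wt{O}(1)\,\|T-T_k\|_2$.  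On the other hand, if $T$ had more than $N := 2k \log^3 d$ heavy buckets, Lemma~\ref{heavybucketlemma} would produce more than $k$ eigenvalues of $T$ of value $\Omega(d\lambda/\log d) = \Omega(C \log^3 d)\cdot \lambda_{k+1}(T)$, strictly exceeding $\lambda_{k+1}(T)$ for $C$ large, contradicting the definition of $\lambda_{k+1}(T)$.  Hence $T$ has at most $N = \wt{O}(k)$ heavy buckets.

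\textbf{Clustered approximation and assembly.} Every bucket $B_j$ lies in an interval of width $1/d$, so $F_{B_j} D_{B_j} F_{B_j}^*$ is $(f_j^*, 1/d)$-clustered with $f_j^* := (j-1/2)/d$.  Invoke Lemma~\ref{lem:clusteredcase} on each heavy $B_j$ with $\delta_0 := \delta/2$ and $\varepsilon_0 := \delta \|T\|_F/(2Nd)$ to produce a symmetric Toeplitz $\wt{T}_j$ of rank $O(\log d + \log(1/\delta_0)) = O(\log d + \log(1/\delta))$ satisfying $\|F_{B_j} D_{B_j} F_{B_j}^* - \wt{T}_j\|_F \le \delta_0\, w(B_j) + \varepsilon_0 d$.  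Set $\wt{T} := \sum_{j:\, B_j \text{ heavy}} \wt{T}_j$: a sum of $\le N$ symmetric Toeplitz matrices is symmetric Toeplitz, and $\rank(\wt{T}) \le N \cdot O(\log d + \log(1/\delta)) = \wt{O}(k \log(1/\delta))$.  Using $\|\cdot\|_2 \le \|\cdot\|_F$, the triangle inequality, and Lemma~\ref{tracetrick} (which gives $\sum_j w(B_j) \le \sum_{f\in S} a_f = \tr(T)/(2d) \le \|T\|_F/(2\sqrt{d}) \le \|T\|_F$),
\begin{align*}
\|T_H - \wt{T}\|_2 &\le \sum_{j:\, B_j \text{ heavy}} \|F_{B_j} D_{B_j} F_{B_j}^* - \wt{T}_j\|_F \\
&\le \delta_0 \sum_{j:\, B_j \text{ heavy}} w(B_j) + N \varepsilon_0 d \le \tfrac{\delta}{2}\|T\|_F + \tfrac{\delta}{2}\|T\|_F = \delta \|T\|_F.
\end{align*}
Combining with the light bound, $\|T - \wt{T}\|_2 \le \|T_L\|_2 + \|T_H - \wt{T}\|_2 \le \wt{O}(1)\,\|T-T_k\|_2 + \delta \|T\|_F$.

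\textbf{Main obstacle.} The only subtle point is calibrating the per-bucket parameters $(\delta_0,\varepsilon_0)$ so that the cumulative Frobenius error over $\wt{O}(k)$ heavy buckets stays below $\delta\|T\|_F$ while the per-bucket rank remains $O(\log d + \log(1/\delta))$ rather than $O(\log(N/\delta))$ multiplied by some polynomial factor.  This works cleanly because Lemma~\ref{lem:clusteredcase} depends only logarithmically on $1/\delta_0$, and Lemma~\ref{tracetrick} pins $\sum_f a_f$ to $\tr(T)$, which is controlled by $\|T\|_F$ up to a $\sqrt{d}$ factor; hence $\delta_0$ can be taken constant-sized in $\delta$ and only $\varepsilon_0$ needs to shrink with $N$, which is absorbed into polylogarithmic factors.
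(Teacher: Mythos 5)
Your proposal is correct and follows essentially the same route as the paper's proof: split $T$ into heavy and light buckets at a threshold proportional to $\lambda_{k+1}(T)/d$ (up to polylogs), bound the light part via Lemma~\ref{lightbucketlemma}, cap the number of heavy buckets at $\wt{O}(k)$ via the contrapositive of Lemma~\ref{heavybucketlemma}, and replace each heavy bucket by the rank-$O(\log d + \log(1/\delta))$ clustered approximation of Lemma~\ref{lem:clusteredcase}, with Lemma~\ref{tracetrick} controlling the accumulated Frobenius error. The only differences are cosmetic parameter choices (your explicit $\delta_0,\varepsilon_0$ split versus the paper's single choice $\varepsilon = \delta\|T\|_F/(rd)$), so there is nothing to fix.
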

\begin{proof}
	Let $T=F_S D F_S$ be $T$'s Vandermonde decomposition and let $\lambda = \lambda_{k+1}(T)\log d/c'd$, where $c'$ is the same constant appearing in the big-O notation in Lemma~\ref{heavybucketlemma}. Bucket the frequencies in $S$ as per Definition \ref{bucketing}. Let $T_{heavy}$ be PSD Toeplitz matrix obtained by considering all buckets with weight strictly more than $\lambda$. Then $T-T_{heavy}$ is also a PSD Toeplitz matrix, which contains only buckets of $T$ that have weight bounded by $\lambda$. Thus by Lemma~\ref{lightbucketlemma} we have the following
	\begin{equation*}
		\|T-T_{heavy}\|_2\leq  O(d\lambda\log d) = \wt{O}(1)\|T-T_k\|_2.
	\end{equation*}
	Now we claim that $T$ has at most $(k+2)\log^3 d$ buckets with weight strictly more than $\lambda_{k+1}(T)\log d/c'd$. (If not, then by Lemma~\ref{heavybucketlemma}, $T$ has at least $k+2$ eigenvalues with value strictly more than $\lambda_{k+1}(T)$; this contradicts the definition of $\lambda_{k+1}(T)$ being the $(k+1)^{th}$ largest eigenvalue of $T$.) Thus, $T_{heavy}$ contains at most $(k+2)\log^3 d $ non-empty buckets. Let $r\leq (k+2)\log^3 d$ be the number of these heavy buckets, and define $F_{S_1},\ldots, F_{S_r}$ and $D_1,D_2,\ldots, D_r$ to be the Fourier and diagonal weight matrices corresponding to these $r$ heavy buckets which define $T_{heavy}$. We can therefore write $T_{heavy}$ as 
	\begin{equation*}
		T_{heavy} = \sum_{m=1}^{r} F_{S_m}D_{m}F_{S_m}^{*}.
	\end{equation*}
	Since each $ F_{S_m}D_{m}F_{S_m}^{*}$ is $(f,1/d)$-clustered for some $f\in [0,1/2]$, we use Theorem \ref{lem:clusteredcase} to approximate each $ F_{S_m}D_{m}F_{S_m}^{*}$ with a rank $O(\log d+\log(1/\delta))$ symmetric Toeplitz matrix $ F_{\wt{S}_m}\wt{D}_{m}F_{\wt{S}_m}^{*}$ that satisfies 
	\begin{equation*}
		\|F_{S_m}D_{m}F_{S_m}^{*} - F_{\wt{S}_m}\wt{D}_{m}F_{\wt{S}_m}^{*}\|_{F}\leq \delta(\sum_{i} (D_m)_{i}) + \varepsilon d.
	\end{equation*}
	%\xxx[MK]{why the extra parentheses around the sums above and below?}
	This implies that
	\begin{align*}
		\left\|T-\sum_{m=1}^{r} F_{\wt{S}_m}\wt{D}_{m}F_{\wt{S}_m}^{*}\right\|_2 &\leq \left\|T-\sum_{m=1}^{r} F_{S_m}D_{m}F_{S_m}^{*}\right\|_2 + \left\|\sum_{m=1}^{r} (F_{S_m}D_{m}F_{S_m}^{*} - F_{\wt{S}_m}\wt{D}_{m}F_{\wt{S}_m}^{*}) \right\|_2\\
		& = \|T-T_{heavy}\|_2 + \left\|\sum_{m=1}^{r} (F_{S_m}D_{m}F_{S_m}^{*} - F_{\wt{S}_m}\wt{D}_{m}F_{\wt{S}_m}^{*} )\right\|_2\\
		&\leq \|T-T_{heavy}\|_2 + \sum_{m=1}^{r} \left\|F_{S_m}D_{m}F_{S_m}^{*} - F_{\wt{S}_m}\wt{D}_{m}F_{\wt{S}_m}^{*}\right\|_F\\
		& \leq \wt{O}(1) \|T-T_k\|_2 + \delta(\sum_{m,i}(D_m)_{i}) +\varepsilon r d \\
		&\leq  \wt{O}(1) \|T-T_k\|_2 + \delta(\sum_{i}(D)_{i}) + \varepsilon r d.
	\end{align*}
	Now using Lemma~\ref{tracetrick}, we get the following:
	\begin{equation*}
		\sum_{i}(D)_{i} = \sum_{i=1}^{d}\lambda_i(T)  /d \leq \left(\sqrt{d \sum_{i=1}^d \lambda_{i}(T)^2}\right)/d = \|T\|_F/\sqrt{d}.
	\end{equation*}
	Setting $\varepsilon = \delta\|T\|_F/rd$, we finally get that
	\begin{equation*}
		\left\|T-\sum_{m=1}^{r} F_{\wt{S}_m}\wt{D}_{m}F_{\wt{S}_m}^{*}\right\|_2 
		\leq \wt{O}(1) \|T-T_k\|_2 + \delta \|T\|_{F}.
	\end{equation*}
	Defining $\wt{T}$ as $\wt{T} = \sum_{m=1}^{r} F_{\wt{S}_m}\wt{D}_{m}F_{\wt{S}_m}^{*}$, which has rank at most $r(O(\log d+\log(1/\delta))=\wt{O}(k\log(1/\delta))$, we find that $\wt{T}$ satisfies the claim of the theorem.
	%add more details here
\end{proof}
Next, we state the our main result on the existence of a near optimal Toeplitz low-rank approximation in the Frobenius norm.
\begin{reptheorem}{frobeniusexistence}
	Given any PSD Toeplitz matrix $T\in \mathbb{R}^{d\times d}$, $0<\eps,\delta<1$ and an integer $k\leq d$, let $r_1=O(k\log^8d/\eps)$ and $r_2 = O(\log d +\log(1/\delta))$. Then there exists a symmetric Toeplitz matrix $\wt{T} = F_{\wt{S}}\wt{D}F_{\wt{S}}^{*}$ of rank $r=2r_1r_2=\wt{O}((k/\eps)\log(1/\delta))$ such that,
	\begin{enumerate}
		\item $\|T-\wt{T}\|_F \leq (1+\eps)\|T-T_k\|_F + \delta\|T\|_F$, where $T_k = \underset{B:rank(B)\leq k}{\|T-B\|_F}$ is the best rank-$k$ approximation to $T$ in the Frobenius norm.
		\item \label{thm:existence_freq_grid}$F_{\wt{S}},\wt{D}$ are Fourier and diagonal matrices respectively. The set of frequencies $\wt{S}$ can be partitioned into $r_1$ sets $\wt{S}_{1},\ldots, \wt{S}_{r_1}$ where each $\wt{S}_i$ is as follows:
		\begin{equation*}
			\wt{S}_i = \underset{1\leq j\leq r_2}{\bigcup}\{f_i+\gamma j,f_i-\gamma j\}.
		\end{equation*}
		\item  $f_i\in N$  for all $i\in [r_1]$, where $N:=\{1/2d,3/2d,\ldots, 1-1/2d\}$. Furthermore, $\gamma = \delta /(2^{C_2 \log^7d})$, where $C_2>0$ is a fixed constant.
	\end{enumerate}
	%\Cam{Its weird in the theorem to define $r_1$ with big-Oh notation,m but state $r_2$ with a fixed constant $C$. Make consistent.}
\end{reptheorem}
\begin{proof}
	Let $T=F_S D F_S$ be $T$'s Vandermonde decomposition and let $\lambda = \lambda_{Ck\log^5 d/\epsilon}(T)\log d/c'd$, where $c'$ is the same constant as in Lemma~\ref{heavybucketlemma} and $C$ is some large constant. Bucket the frequencies in $S$ as per Definition \ref{bucketing}. Let $T_{heavy}$ be the matrix obtained by considering all buckets of $T$ with weight strictly more than $\lambda$. Then $T-T_{heavy}$ is also a Toeplitz matrix defined by the remaining buckets, all of which have weight at most $\lambda$. Thus by Lemma~\ref{lightbucketlemma}, we have the following
	\begin{equation*}
		\|T-T_{heavy}\|_2\leq  O(\lambda d\log d) = O(\log^2d)\lambda_{Ck\log^5 d/\epsilon}(T).
	\end{equation*}
	Since $T-T_{heavy}$ just corresponds to retaining a subset of the frequencies in $S$ and their corresponding weights, it is a PSD change. More formally, $0\preceq T-T_{heavy}\preceq T$. Thus by Weyl's inequality, this implies that $\lambda_i(T-T_{heavy})\leq \lambda_{i}(T)$ for all $i\in [d]$. Thus we have that
	\begin{align*}
		\|T-T_{heavy}\|_F^2 &= \sum_{i=1}^{d} \lambda_{i}^2(T-T_{heavy}) = \sum_{i=1}^{k+1} \lambda_{i}^2(T-T_{heavy}) + \sum_{i=k+2}^{d} \lambda_{i}^2(T-T_{heavy})\\
		&\leq (k+1)\lambda_{1}^2(T-T_{heavy})+ \sum_{i=k+2}^{d} \lambda_{i}^2(T-T_{heavy})\\
		&\leq (k+1)\lambda_{1}^2(T-T_{heavy})+ \sum_{i=k+2}^{d} \lambda_{i}^2(T)    .
		%&\leq \wt{O}(1)(k\lambda_{k+1}^2(T)) + \sum_{i=k+2}^{d} \lambda_{i}^2(T)
	\end{align*}
	Here, we used the fact that $\lambda_i(T-T_{heavy})\leq \lambda_1(T-T_{heavy})$ for all $1\leq i\leq k+1$ and $\lambda_i(T-T_{heavy})\leq \lambda_{i}(T)$ for $i> k+1$. %\hltodo{This isn't clean: define it the right way from the start.} 
	Now, using that $\lambda_1(T-T_{heavy})\leq O(\log^2 d)\lambda_{Ck\log^5 d/\epsilon}(T)$, we get the following:
	\begin{align*}
		\|T-T_{heavy}\|_F^2 &\leq (k+1)\lambda_{1}^2(T-T_{heavy})+ \sum_{i=k+2}^{d} \lambda_{i}^2(T)\\
		&\leq O(k\log^4d)\lambda_{Ck\log^5d/\eps}^2(T) + \sum_{i=k+2}^{d} \lambda_{i}^2(T)\\
		&= \eps O(k\log^4d/\eps)\lambda_{Ck\log^5d/\eps}^2(T) + \sum_{i=k+2}^{d} \lambda_{i}^2(T)\\
		&\leq \eps\sum_{i=k+1}^{O(k\log^4d/\eps)}\lambda_{i}^2(T) + \sum_{i=k+2}^{d} \lambda_{i}^2(T)\\
		&\leq \eps\sum_{i=k+1}^{d}\lambda_{i}^2(T) + \sum_{i=k+1}^{d} \lambda_{i}^2(T)\\
		&= (1+\eps)\|T-T_{k}\|_F^2.
	\end{align*}
	Therefore, we have that $\|T-T_{heavy}\|_F\leq (1+\eps)\|T-T_{k}\|_F$, where $T_{heavy}$ contains all buckets with weight more than $\lambda_{Ck\log^5 d/\eps}\log d/c'd$. There are at most $r_1=O(k\log^8 d/\eps)$ such buckets by Lemma~\ref{heavybucketlemma}.
	Let $F_{S_1},\ldots, F_{S_{r_1}}$ and $D_1,D_2,\ldots, D_{r_1}$ to be the Fourier and diagonal weight matrices corresponding to these $r_1$ heavy buckets defining $T_{heavy}$. We can then write $T_{heavy}$ as follows
	\begin{equation*}
		T_{heavy} = \sum_{m=1}^{r_1} F_{S_m}D_{m}F_{S_m}^{*}.
	\end{equation*}
	
	Since each $ F_{S_m}D_{m}F_{S_m}^{*}$ is $(f,1/d)$-clustered for some $f\in N$, we use Theorem \ref{lem:clusteredcase} to approximate each $ F_{S_m}D_{m}F_{S_m}^{*}$ with a rank $O(\log d+\log(1/\delta))$ symmetric Toeplitz matrix $ F_{\wt{S}_m}\wt{D}_{m}F_{\wt{S}_m}^{*}$ that satisfies 
	\begin{equation*}
		\left\|F_{S_m}D_{m}F_{S_m}^{*} - F_{\wt{S}_m}\wt{D}_{m}F_{\wt{S}_m}^{*}\right\|_{F}\leq \delta(\sum_{i} (D_m)_{i}) + \varepsilon d.
	\end{equation*}
	This implies that
	\begin{align*}
		\left\|T-\sum_{m=1}^{r_1} F_{\wt{S}_m}\wt{D}_{m}F_{\wt{S}_m}^{*}\right\|_F &\leq \left\|T-\sum_{m=1}^{r_1} F_{S_m}D_{m}F_{S_m}^{*}\right\|_F + \left\|\sum_{m=1}^{r_1} (F_{S_m}D_{m}F_{S_m}^{*} - F_{\wt{S}_m}\wt{D}_{m}F_{\wt{S}_m}^{*}) \right\|_F\\
		& = \|T-T_{heavy}\|_F + \left\|\sum_{m=1}^{r_1} (F_{S_m}D_{m}F_{S_m}^{*} - F_{\wt{S}_m}\wt{D}_{m}F_{\wt{S}_m}^{*} )\right\|_F\\
		&\leq \|T-T_{heavy}\|_F + \sum_{m=1}^{r_1} \|F_{S_m}D_{m}F_{S_m}^{*} - F_{\wt{S}_m}\wt{D}_{m}F_{\wt{S}_m}^{*}\|_F\\
		& \leq (1+\epsilon) \|T-T_k\|_F + \delta(\sum_{m,i}(D_m)_{i}) +\varepsilon r d \\
		&\leq  (1+\epsilon) \|T-T_k\|_F + \delta(\sum_{i}(D)_{i}) + \varepsilon r d.
	\end{align*}
	Now using Lemma~\ref{tracetrick}, we get the following:
	\begin{equation*}
		\sum_{i}(D)_{i} = \sum_{i=1}^{d}\lambda_i(T)  /d \leq \left(\sqrt{d \sum_{i=1}^d \lambda_{i}(T)^2}\right)/d = \|T\|_F/\sqrt{d}.
	\end{equation*}
	Setting $\varepsilon = \delta\|T\|_F/rd$, we finally have
	\begin{equation*}
		\left\|T-\sum_{m=1}^{r_1} F_{\wt{S}_m}\wt{D}_{m}F_{\wt{S}_m}^{*}\right\|_F 
		\leq (1+\epsilon) \left\|T-T_k\|_2 + \delta \|T\right\|_{F} .
	\end{equation*}
	Defining $\wt{T}$ as $\wt{T} = \sum_{m=1}^{r_1} F_{\wt{S}_m}\wt{D}_{m}F_{\wt{S}_m}^{*}$, which has rank at most $r_1(C(\log d+\log(1/\delta))=\wt{O}((k/\epsilon)\log(1/\delta))$, we get that $\wt{T}$ satisfies the claim and point 1 of Theorem \ref{frobeniusexistence}. Observe that the Toeplitz matrix corresponding to any bucket $F_{S_m}D_mF_{S_m}^{*}$ is $(f,1/d)$-clustered for some $f\in N$. Thus, it follows by guarantee 1 of Lemma~\ref{lem:clusteredcase} that each $\wt{S}_m$ is of the form described in point 2 of Theorem \ref{frobeniusexistence}. Since $T_{1,1}\leq \|T\|_F$, with this value of $\varepsilon$ we can choose any $\gamma\leq \delta T_{1,1}/(\tr(T)rd^2 2^{C_1 \log^7 d})=\delta/(2^{C_2 \log^7 d})$ for a large enough constant $C_2$. (Here, we used the fact that $\tr(T)=dT_{1,1}$.) This completes the proof of Theorem \ref{frobeniusexistence}.
\end{proof}

% !TEX root = ./main.tex

\section{Low-rank approximation with sublinear query complexity.}\label{algorithm}
%Now that we have established the existence of a good low rank approximation $\wt{T}$ which itself is Toeplitz, we present an alg orithm that constructs such a low rank approximation by only reading $poly(k)<<d$ entries of $T$. Note that we can trivially recover $T$ exactly by reading all the entries of the first column, that is the trivial sample complexity for this problem is $d$ samples. \\
In this section, we present our main algorithm, Algorithm \ref{sublinearqueryalgo} and prove the corresponding Theorem \ref{sublinearqueryalgo}, which shows that the algorithm outputs a near optimal low-rank approximation to $T$, while reading only sublinearly many entries. 
We treat $\wt{T}$, the low-rank Toeplitz approximation to $T$ (guaranteed to exist from Theorem \ref{frobeniusexistence}), as the true matrix, which we noisily access by reading entries of $T$.

\subsection{Reduction to weighted linear regression.}

%\hltodo{Need to double check the sparsity is updated throughout this section for the $1+\epsilon$ approximation: $\wt{O}(\frac{k}{\epsilon}\log(\frac{1}{\delta}))$} 
The near optimal Toeplitz low-rank approximation $\wt T \in \R^{d \times d}$ guaranteed to exist by Theorem \ref{frobeniusexistence} is of the form $\wt T =F_{\wt{S}}\wt{D}F_{\wt{S}}^{*}$, where $ F_{\wt{S}}\in \mathbb{C}^{d\times r}$ and $\wt{D}\in \mathbb{R}^{r\times r}$ are Fourier and diagonal matrices, respectively, and $r=\wt{O}((k/\epsilon)\log(1/\delta))$. Algorithm \ref{sublinearqueryalgo} uses brute force search to find the frequencies in $\wt{S}$. In particular, it uses an approximate regression oracle to test the quality of any guess for the frequencies in $\wt{S}$, without reading many entries of $T$.  

Recall from Theorem \ref{frobeniusexistence} that these frequencies lie in $r_1 = \tilde O(k/\epsilon)$  clusters of $r_2$ equispaced frequencies centered around points in $N =\{1/2d,3/2d,\ldots, 1-1/2d\}$. Thus, our search space will be all subsets of $r_1$ elements of $N$, of which there at at most $N^{r_1}$. Formally, the frequencies of our Toeplitz low-rank approximation will lie in the following set:
\begin{definition}[Frequency Search Space]\label{def:search}
	Consider any positive integers $d, r_1,r_2$ with $r_1,r_2 < d$ and $\gamma \in (0,1)$, 
	Let $N = \{1/2d,3/2d,\ldots,1-1/2d\}$. For any set of frequencies $B$, let $S(B) = \bigcup_{b \in B} \bigcup_{1 \le j \le r_2} \{ b + \gamma j, b - \gamma j\}$. Let $\mathcal{N}_{d,r_1,r_2,\gamma} = \{S(B): B \in N^{r_1}\}$. 
\end{definition}

Since the first column of $\wt{T}$ defines the full matrix, the approximate regression oracle will simply attempt to fit the first column of $\wt{T}$ to be close to that of $T$. However, since different entries in the first column appear with different frequencies in the matrix, we require the following weighting function to translate error bounds in the first column into error bounds for approximating the entire matrix ${T}$ in the Frobenius norm. 
\begin{definition}\label{weightingdef}
	Let $w\in \mathbb{R}^{d}$ be defined as follows:
	\begin{equation*}
		w_{i} =  \begin{cases} 
			\sqrt{d} & i = 1 \\
			\sqrt{2(d-i+1)} & i > 1 
		\end{cases}
	\end{equation*}
	Let $W = \diag(w)\in \mathbb{R}^{d\times d}$.
\end{definition}
We have the following immediate claim, which expresses the Frobenius norm difference between two symmetric Toeplitz matrices as the weighted $\ell_2$ norm difference of their first columns. 
\begin{claim}\label{frobtoeuclid}
	Let $T,  \wt T\in \mathbb{R}^{d\times d}$ be symmetric Toeplitz matrices with first columns $T_{1}, \wt T_1 \in \mathbb{R}^{d}$  respectively. Then letting $W \in \R^{d \times d}$ be as in Definition \ref{weightingdef},  $\|T -  \wt T\|_F = \|WT_1 - W \wt T_1\|_2$. 
\end{claim}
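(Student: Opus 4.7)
The plan is to exploit the fact that the difference $T - \wt T$ is itself a symmetric Toeplitz matrix, and then simply count how often each entry of its first column appears in the full matrix. Concretely, letting $A = T - \wt T$ and $a = T_1 - \wt T_1 \in \R^d$, the symmetric Toeplitz property gives $A_{i,j} = a_{|i-j|+1}$ for all $i, j \in [d]$.

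First I would expand the Frobenius norm as
\begin{equation*}
\|A\|_F^2 \;=\; \sum_{i=1}^d \sum_{j=1}^d A_{i,j}^2 \;=\; \sum_{i=1}^d \sum_{j=1}^d a_{|i-j|+1}^2,
\end{equation*}
and then reindex the sum by $k = |i-j|$. For $k = 0$ there are exactly $d$ pairs $(i,j) \in [d]\times[d]$ with $i = j$, and for each $k \in \{1, \ldots, d-1\}$ there are exactly $2(d-k)$ pairs $(i,j)$ with $|i-j| = k$ (namely $(i,i+k)$ and $(i+k,i)$ for $i = 1, \ldots, d-k$). Therefore
\begin{equation*}
\|A\|_F^2 \;=\; d\, a_1^2 \;+\; \sum_{k=1}^{d-1} 2(d-k)\, a_{k+1}^2 \;=\; d\, a_1^2 \;+\; \sum_{i=2}^d 2(d-i+1)\, a_i^2.
\end{equation*}

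On the other hand, from Definition \ref{weightingdef},
\begin{equation*}
\|Wa\|_2^2 \;=\; \sum_{i=1}^d w_i^2\, a_i^2 \;=\; d\, a_1^2 \;+\; \sum_{i=2}^d 2(d-i+1)\, a_i^2,
\end{equation*}
which matches the expression above exactly. Taking square roots and recalling that $a = T_1 - \wt T_1$, we conclude $\|T - \wt T\|_F = \|W T_1 - W \wt T_1\|_2$, as claimed.

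There is no real obstacle here: the claim is essentially a counting identity matching the multiplicity of each diagonal in a symmetric Toeplitz matrix (namely $d$ for the main diagonal and $2(d-i+1)$ for the pair of $i$-th off-diagonals) with the squared entries $w_i^2$ of the weighting matrix $W$. The only thing to be careful about is the endpoints of the off-diagonal counting, which I have handled above by writing out the pairs $(i, i+k), (i+k, i)$ explicitly.
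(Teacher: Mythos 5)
Your proof is correct and is exactly the counting argument the paper treats as immediate: the difference $T-\wt T$ is symmetric Toeplitz, and the multiplicities $d$ and $2(d-i+1)$ of its diagonals match the squared weights $w_i^2$ in Definition \ref{weightingdef}. Nothing further is needed.
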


Now, for a Toeplitz matrix $T = F_S D F_S^*$, the first column can be expressed as $T_1 = F_S a$ where $a \in \R^{r}$ contains the diagonal entries of $D$. Further, if $T$ is real, $a$ must place equal weight on the conjugate frequencies in $F_S$. Thus, we can in fact write $T_1 = F_S R_S a$ where $a \in \R^{r/2}$ and $R_S \in \R^{r \times r/2}$ collapses the $2|S|=r$ conjugate pair columns of $F_S$ into $|S|=\frac{r}{2}$ real-valued columns, each corresponding to a degree of freedom of $a$. More formally:
\begin{definition}\label{def:RS} 
	Let $S \subset [0,\frac{1}{2})$ with $|S|=\frac{r}{2}$. % with $|S| = \frac{r}{2}$. 
	Then, define the matrix $R_S \in \mathbb{R}^{r \times \frac{r}{2}}$ by setting the $j^{th}$ column $(R_S)_{:,j}$ equal to $0$ everywhere except at $j$ and $|S|+j$, corresponding to the $j^{th}$ \emph{pair} of conjugate frequencies, where it is equal to $1$. 
\end{definition}

%Note in the above definition that if $r$, the rank of a Toeplitz matrix, is odd, then there exists some frequency in its Vandermonde decomposition which is its own conjugate. According to the convention of \ref{def:symmetric_fourier_matrix}, the frequency set $S$ is then of size $\frac{r}{2} \rceil$, where the column vector of $F_S$ corresponding to the self-conjugate frequency is repeated. This motivates the dimensionality of $R_S$ in the case when $r$ is odd. 

%It is trivial to check from this definition that $F_SR_S$ is always real-valued, so we can equivalently interpret $F_S$ applied to any $r$-length symmetric vector $x$ (which takes equal values on conjugate frequencies, ie. $x_j=x_{\frac{r}{2}+j}$ for $j\in[\frac{r}{2}]$) as $F_SR_S$ applied to the $\frac{r}{2}$-length scaled subset of $x$, $\frac{1}{2}R_S^Tx$. Therefore, we will apply leverage score sampling to the real-valued matrix $F_SR_S$, and optimize only over the desired $\frac{r}{2}$ degrees of freedom.

From Claim \ref{frobtoeuclid} and the existence proof of Theorem \ref{frobeniusexistence} we have the following lemma, which shows that fitting $\wt T$ can be reduced to a  real-valued linear regression problem:
\begin{lemma}[Toeplitz Approximation via Weighted Regression]\label{lem:regressionReduction}
	For any PSD Toeplitz matrix $T\in \mathbb{R}^{d\times d}$ with first column $T_1$, $\eps,\delta \in (0,1)$, and integer $k\leq d$, let $\mathcal{N}_{d,r_1,r_2,\gamma}$ be as in Def. \ref{def:search}, where $r_1,r_2,\gamma$ are as in Theorem \ref{frobeniusexistence}. Let $W \in \R^{d\times d}$ be as in Def. \ref{weightingdef}. Then if $\wt S$ and $\wt a \in \R^{|\tilde S|}$ satisfy for some $\alpha \ge 1$:
	\begin{align*}
		\norm{WF_{\wt S} R_{\wt S} \wt a - WT_1}_2 \le \alpha \cdot \min_{S \in \mathcal{N}_{d,r_1,r_2,\gamma}, a \in \R^{|S|}} \norm{WF_SR_S a - WT_1}_2,
	\end{align*}
	letting $\wt T \in \R^{d \times d}$ be the symmetric Toeplitz matrix $\wt T = F_{\wt S} \diag(\wt a) F_{\wt S}^*$,
	$$\norm{T-\wt T}_F \le (1+\epsilon)\alpha \cdot \norm{T-T_k}_F + \alpha \delta \norm{T}_F.$$
\end{lemma}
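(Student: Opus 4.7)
The plan is to use the existence result of Theorem \ref{frobeniusexistence} to exhibit a feasible candidate in the regression problem whose cost already matches the target bound, then transfer the assumed $\alpha$-approximate regression guarantee back into a Frobenius error bound on $\wt T$ via the first-column identity of Claim \ref{frobtoeuclid}.

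First, I invoke Theorem \ref{frobeniusexistence} with inputs $T, k, \epsilon, \delta$ to obtain a symmetric PSD Toeplitz matrix $\wt T^{\star} = F_{S^{\star}} \wt D^{\star} F_{S^{\star}}^*$ of rank $r = 2 r_1 r_2$ satisfying $\|T - \wt T^{\star}\|_F \le (1+\epsilon)\|T - T_k\|_F + \delta \|T\|_F$. Items 2 and 3 of that theorem state precisely that $S^{\star} \in \mathcal{N}_{d,r_1,r_2,\gamma}$, and by Theorem \ref{thm:vandermonde} the weights of $\wt D^{\star}$ on conjugate pairs of frequencies coincide, so its diagonal can be written as $R_{S^{\star}} a^{\star}$ for some $a^{\star}$ of the appropriate dimension, with $R_{S^{\star}}$ as in Definition \ref{def:RS}.

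Next, I compute the first column of $\wt T^{\star}$. Since the first coordinate of every frequency vector $v(f)$ equals $1$, we have $F_{S^{\star}}^{*} e_1 = \mathbf{1}$, so $(\wt T^{\star})_1 = F_{S^{\star}} \wt D^{\star} \mathbf{1} = F_{S^{\star}} R_{S^{\star}} a^{\star}$. Claim \ref{frobtoeuclid} applied to $T$ and $\wt T^{\star}$ then gives $\|W T_1 - W F_{S^{\star}} R_{S^{\star}} a^{\star}\|_2 = \|T - \wt T^{\star}\|_F \le (1+\epsilon)\|T - T_k\|_F + \delta \|T\|_F$. Since $(S^{\star}, a^{\star})$ is feasible in the regression problem, this bounds $\min_{S \in \mathcal{N}_{d,r_1,r_2,\gamma},\, a} \|W F_S R_S a - W T_1\|_2$ by the same quantity.

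Finally, the hypothesis on $(\wt S, \wt a)$ yields
\[
\|W F_{\wt S} R_{\wt S} \wt a - W T_1\|_2 \;\le\; \alpha \bigl[(1+\epsilon)\|T - T_k\|_F + \delta \|T\|_F\bigr].
\]
The same first-column computation (applied to $\wt T = F_{\wt S} \diag(\wt a) F_{\wt S}^*$, reading $\wt a$ as the conjugate-symmetric vector $R_{\wt S} \wt a$ on the diagonal) shows that the first column of $\wt T$ is $F_{\wt S} R_{\wt S} \wt a$, and a second application of Claim \ref{frobtoeuclid} rewrites the left hand side as $\|T - \wt T\|_F$, giving the claimed inequality. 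There is no real obstacle here; the one thing to verify carefully is that Theorem \ref{frobeniusexistence} genuinely produces a matrix lying in the feasibility class parameterized by $\mathcal{N}_{d,r_1,r_2,\gamma}$ and the operator $R_{(\cdot)}$, which is immediate from points 2 and 3 of that theorem together with the conjugate-weight symmetry of the Vandermonde decomposition.
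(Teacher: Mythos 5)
Your proof is correct and follows essentially the same route as the paper's: rewrite Frobenius error as weighted first-column error via Claim \ref{frobtoeuclid}, apply the assumed $\alpha$-approximation, and bound the minimum by exhibiting the matrix from Theorem \ref{frobeniusexistence} as a feasible candidate (whose frequency set lies in $\mathcal{N}_{d,r_1,r_2,\gamma}$ and whose conjugate-symmetric weights collapse to $R_{S^\star}a^\star$). You simply make explicit the feasibility details that the paper states in one line; the only cosmetic quibble is that the conjugate-weight symmetry of the approximating matrix comes from point 1 of Lemma \ref{lem:clusteredcase} (via the construction in Theorem \ref{frobeniusexistence}) rather than from Theorem \ref{thm:vandermonde}, which applies to PSD matrices.
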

\begin{proof}
	Letting $\wt T_1$ be the first column of $\wt T$, we have $\wt T_1 = F_{\wt S} R_{\wt S} \wt a$ where $R_{\wt S}$ is defined as in Def. \ref{def:RS}.
	Thus, by Claim \ref{frobtoeuclid},
	\begin{align*}
		\|T -  \wt T\|_F = \|WT_1 - W F_{\wt S}R_{\wt S } \wt a\|_2 &\le  \alpha \cdot \min_{S \in \mathcal{N}_{d,r_1,r_2,\gamma}, a \in \R^{|S|}} \norm{WT_1 - WF_SR_S a}_2\\
		&\le (1+\epsilon) \alpha \cdot \norm{T-T_k}_F + \alpha \delta \norm{T}_F,
	\end{align*}
	where the last inequality follows from Theorem \ref{frobeniusexistence}, which shows that there is some Toeplitz matrix $\wt T$ with frequency set in $\mathcal{N}_{d,r_1,r_2,\gamma}$ satisfying $\norm{T-\wt T}_F \le (1+\epsilon) \norm{T-T_k}_F + \delta \norm{T}_F$.
	%The theorem then follows by noting that for $\epsilon \le 1$, $(1+\epsilon)^2 \le 1+3 \epsilon$ and $(1+\epsilon) \delta \le 2 \delta$.
\end{proof}

\subsection{Leverage score preliminaries.}
Our goal  is now  to find $\wt S \in \mathcal{N}_{d,r_1,r_2,\gamma}$ and $\wt a \in \R^{|\wt S|}$ satisfying the approximate regression guarantee of Lemma \ref{lem:regressionReduction} for $\alpha \le 1+\epsilon$. We do this using leverage score sampling. It is well known that leverage score sampling can be used to approximately solve linear regression problems in a sample-efficient manner \cite{Sarlos:2006vn,Woodruff:2014tg}. In our setting, an additional challenge arises in that to find $\wt S,\wt a$ satisfying the bound of Lemma \ref{lem:regressionReduction} we must solve many regression problems -- corresponding to all possible subsets $\wt S \in \mathcal{N}_{d,r_1,r_2,\gamma}$ -- and output one with near minimal error. This is challenging, since standard results on leverage score sampling for sample-efficient regression 1) typically do not succeed with very high probability, making it difficult to union bound over all possible $\wt S$, and 2) typically do not output an estimate of the actual regression error, making it impossible to chose a near optimal $\wt S,\wt a$ as needed in Lemma \ref{lem:regressionReduction}. We show how to overcome these issues in Sections \ref{sec:const} and \ref{sec:eps}.

Recall the notion of matrix leverage scores as defined in Definition \ref{leveragescoresdefn}. Our algorithm will sample rows of $WF_SR_S$ and $WT_1$ via approximations to the leverage scores of $WF_SR_S$ to solve the regression problem of Lemma \ref{lem:regressionReduction}. For technical reasons, we will sample with a mixture of a leverage score distribution and the uniform distribution, defined below:
\begin{definition}[Leverage score sampling; Def. 2.7 of \cite{musco2021active}]\label{def:levsampling}
	For any number of samples $m$ and a set of leverage score bounds $\{\tilde \tau_j \}_{j \in [n]}$ with $T = \sum_{j=1}^d \wt{\tau}_j$, let $\mathcal{S} \in \mathbb{R}^{m \times d}$ be a sampling matrix with each row set independently to the $i^{th}$ standard basis vector multiplied by $(m \cdot p_i)^{-1/2}$, with probability 
	\begin{align*}
		p_i &= \frac{1}{2} \Bigg( \frac{\wt{\tau}_i}{T} + \frac{1}{d} \Bigg).
	\end{align*}
\end{definition}
We also have the following unbiasedness property of the sampling matrix $\mcS$.
\begin{claim}\label{claim:unbiased}
	Let $\mcS$ be defined as in \ref{def:levsampling}. Then
	\begin{align*} 
		\E[\norm{\mcS x}_2^2] &= \norm{x}_2^2.
	\end{align*}
\end{claim}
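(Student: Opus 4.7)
The plan is to compute $\E[\|\mcS x\|_2^2]$ by a direct calculation, exploiting the fact that $\mcS$ consists of $m$ independent and identically distributed rows, each supported on scaled standard basis vectors. First I would expand $\|\mcS x\|_2^2 = \sum_{k=1}^{m} (\mcS x)_k^2$ and use linearity of expectation to reduce the claim to computing the expectation of $(\mcS x)_k^2$ for a single row $k$.

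Next, I would analyze a single row. By Definition \ref{def:levsampling}, the $k$-th row of $\mcS$ equals $(m p_i)^{-1/2} e_i^T$ with probability $p_i$, so $(\mcS x)_k = x_i / \sqrt{m p_i}$ with probability $p_i$. Squaring and taking expectations,
\begin{equation*}
\E[(\mcS x)_k^2] = \sum_{i=1}^{d} p_i \cdot \frac{x_i^2}{m p_i} = \frac{1}{m} \sum_{i=1}^{d} x_i^2 = \frac{\|x\|_2^2}{m}.
\end{equation*}
Here the cancellation $p_i / p_i$ is well defined because the mixture with the uniform distribution guarantees $p_i \geq 1/(2d) > 0$ for every $i$, so the sampling probabilities never vanish. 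Summing over the $m$ independent rows yields $\E[\|\mcS x\|_2^2] = m \cdot \|x\|_2^2 / m = \|x\|_2^2$, as claimed.

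There is no real obstacle here: the claim is the standard unbiasedness property of an importance sampling estimator, and the chosen scaling $(m p_i)^{-1/2}$ is precisely what makes the per-row estimator have expectation $\|x\|_2^2 / m$. The only item worth flagging explicitly in the write-up is the positivity of $p_i$ (so the scaling is well defined and no mass is lost), which follows immediately from the $1/d$ term in Definition \ref{def:levsampling}.
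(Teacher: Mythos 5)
Your proof is correct: the paper states Claim \ref{claim:unbiased} without proof, treating it as the standard unbiasedness property of the leverage-score sampling matrix, and your row-by-row computation using the $(m p_i)^{-1/2}$ scaling (together with the observation that $p_i \ge 1/(2d) > 0$) is exactly the canonical argument one would supply.
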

Critically, we would like to take \emph{a single set} of samples and use them to approximately minimize $\norm{WT_1 - WF_SR_S a}_2$ for all $S \in \mathcal{N}_{d,r_1,r_2,\gamma}$ in order to solve the optimization problem of Lemma \ref{lem:regressionReduction}. We are able to do this due to the existence of \emph{universal} leverage score bounds on Fourier matrices \cite{Avron:2019tt,eldar2020toeplitz}, which are independent of the frequency subset $S$. We adapt these bounds to our  weighted setting. 
%To apply \ref{thm:updatedleveragesampling} in the proof of our approximate regression oracle, we therefore need good upper bounds on the leverage scores of $WF_SR_S$. We obtain such bounds in this section, which are in fact independent of the frequencies $S$. This is formalized in the following lemma.
\begin{lemma}[Fourier Matrix Leverage Score Bound]\label{leveragescorebound}
	Let $W$ be the weight matrix defined in \ref{weightingdef}. Then there exist non-negative numbers $\{\wt{\tau}_{j}\}_{j=1}^d$ such that the following hold for all frequency sets $S\subset [0,\frac{1}{2}), |S| = \frac{r}{2}\leq d$: 
	\begin{enumerate}
		\item $\tau_j(WF_SM_S) \leq \tau_j(WF_S) \leq \wt{\tau}_{j}$ for all $j\in [d]$.
		\item $\sum_{j=1}^{d} \wt{\tau}_{j} \leq O(r\log r \cdot \log d)$.
	\end{enumerate}
	Here, $F_S\in \mathbb{C}^{d\times r}$ is the symmetric Fourier matrix with frequency set $S$ as defined in \ref{def:symmetric_fourier_matrix}, and $M_S$ is any real-valued matrix with $r$ rows, which may depend on $S$. (For example, this includes the case $M_S=R_S$ as defined in Definition \ref{def:RS}.) Finally, $\tau_j(\cdot)$ is as defined in Def. \ref{leveragescoresdefn}.
\end{lemma}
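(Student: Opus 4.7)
The plan is to prove the two stated inequalities separately. The first, $\tau_j(WF_SM_S) \le \tau_j(WF_S)$, is standard and follows immediately from the characterization $\tau_j(A) = \max_{x \in \mathrm{Range}(A)} x_j^2/\|x\|_2^2$ together with the inclusion $\mathrm{Range}(AB) \subseteq \mathrm{Range}(A)$ that holds for any matrices $A,B$ with compatible dimensions. The substantive work lies in constructing the universal bounds $\wt\tau_j$ on $\tau_j(WF_S)$. For this I would use a geometric grouping argument that reduces the weighted problem to the known universal leverage score bounds for unweighted off-grid Fourier matrices from \cite{chen2016fourier,eldar2020toeplitz}.

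The key observation is that the weights $w_i = \sqrt{2(d-i+1)}$ (for $i>1$) vary slowly enough that the rows can be partitioned into $O(\log d)$ groups within each of which the weights are nearly constant. Specifically, I would define $G_\ell = \{i \in [d] : 2^\ell \le d-i+1 < 2^{\ell+1}\}$ for $\ell \in \{0, 1, \ldots, \lceil \log_2 d \rceil\}$, absorbing the irregular index $i=1$ into the top group, so that any two weights in the same $G_\ell$ differ by at most a factor of $\sqrt{2}$. For any $j \in G_\ell$, restricting the denominator sum to indices in $G_\ell$ and pulling out the minimum weight gives
\begin{align*}
\tau_j(WF_S) = \max_{y \in \mathbb{C}^r} \frac{w_j^2|(F_Sy)_j|^2}{\sum_{i=1}^d w_i^2|(F_Sy)_i|^2} \;\le\; \frac{w_j^2}{\min_{i\in G_\ell} w_i^2} \cdot \max_{y} \frac{|(F_Sy)_j|^2}{\sum_{i \in G_\ell}|(F_Sy)_i|^2} \;\le\; 2\,\tau_j\bigl((F_S)_{G_\ell,:}\bigr),
\end{align*}
where $(F_S)_{G_\ell,:}$ is the row submatrix of $F_S$ indexed by $G_\ell$, and the final step uses both the weight ratio bound from the partition and the definition of the leverage score of the submatrix.

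Because $G_\ell$ is a contiguous block of integer row indices, $(F_S)_{G_\ell,:}$ agrees, up to right-multiplication by a diagonal unitary matrix (corresponding to a uniform time-shift applied column by column), with an unweighted off-grid Fourier matrix of size $|G_\ell| \times r$ on the standard grid $\{0,1,\ldots,|G_\ell|-1\}$ with the same frequency set $S$. Such right-multiplication preserves leverage scores since it preserves the column span, so the universal Fourier leverage score bounds of \cite{chen2016fourier,eldar2020toeplitz,Avron:2019tt} apply: there exist non-negative numbers $\wt\tau_j^{(\ell)}$, depending only on $j$, $|G_\ell|$, and $r$ (and not on $S$), such that $\tau_j\bigl((F_S)_{G_\ell,:}\bigr) \le \wt\tau_j^{(\ell)}$ and $\sum_{j \in G_\ell}\wt\tau_j^{(\ell)} = O(r \log r)$. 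Setting $\wt\tau_j := 2\wt\tau_j^{(\ell(j))}$, where $\ell(j)$ is the unique group containing $j$, yields the desired pointwise bound; summing over the $O(\log d)$ groups then gives $\sum_{j=1}^d \wt\tau_j = O(r \log r \cdot \log d)$, which is the second claim.

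The main technical points to verify are that the cited universal bound for unweighted off-grid Fourier matrices is available in precisely the pointwise form needed here (with per-group sum $O(r\log r)$), and that the boundary cases are handled cleanly: the row $j=1$ whose weight formula differs is absorbed into $G_{\lceil \log_2 d\rceil}$ without affecting constants, and for small groups with $|G_\ell| < r$ the trivial bound $\sum_{j \in G_\ell}\tau_j \le |G_\ell| \le r$ already suffices. If the off-the-shelf statement falls short, an alternative is to rederive the bound via the same Fourier-sparse polynomial approximation machinery used in \Cref{lem:clusteredcase}, which directly produces low-rank approximations to shifted Fourier matrices on contiguous grids.
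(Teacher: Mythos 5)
Your proposal is correct and follows essentially the same route as the paper: bound $\tau_j(WF_SM_S)\le\tau_j(WF_S)$ by column-span containment, group the rows geometrically into $O(\log d)$ blocks where the weights vary by a constant factor, reduce each block to an unweighted off-grid Fourier matrix (the time-shift/constant-column-multiple observation), and invoke the universal per-block bounds of Corollary C.2 of \cite{eldar2020toeplitz}, summing over blocks. The edge cases you flag (the $j=1$ row and blocks with fewer than $r$ rows) are handled in the paper exactly as you suggest, via the trivial bound $\wt\tau_{j}=1$ on small blocks.
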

\ref{leveragescorebound} is similar to Corollary C.2 of \cite{eldar2020toeplitz}, but applies to \emph{weighted} Fourier matrices. In \ref{subsec:weighted_fourier_lev_bounds} we restate Corollary C.2 of \cite{eldar2020toeplitz} as \ref{corr:C2_eldar}, and use it to prove  \ref{leveragescorebound}.

In our analysis, we will use the well known fact  that sampling $\tilde O(r)$ rows of $A \in \mathbb C^{d \times r}$ according to their leverage scores yields a \emph{subspace} embedding of $A$, which preserves the norms of all vectors in $A$'s column span to small relative error. In particular:
\begin{lemma}[Subspace Embedding \cite{Woodruff:2014tg}]\label{lem:subspace}
	Given $A \in \R^{d \times r}$, failure probability $\eta \in (0,1)$, and a set of leverage score upper bounds $\{\tilde \tau_j \}_{j \in [n]}$ satisfying $\tilde \tau_j \ge \tau_j(A)$ for all $j \in [d]$, let $\mathcal{S} \in \mathbb{R}^{m \times d}$ be a random sampling matrix drawn according to \ref{def:levsampling}, using the distribution $\wt{\tau}_j$ and $m=O\big(\frac{T \log(1/\eta)}{\beta^2}\big)$ samples. Then, with probability $\ge 1-\eta$, for all $x \in \R^r$, 
	$$(1-\beta) \norm{ Ax}_2 \le \norm{\mathcal{S}  Ax}_2 \le (1+\beta) \norm{ Ax}_2.$$
\end{lemma}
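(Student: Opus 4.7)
The plan is to reduce the subspace embedding claim to a single operator-norm concentration statement for a sum of independent rank-one matrices, and then apply a matrix Chernoff/Bernstein inequality. Let $U \in \mathbb{R}^{d \times r'}$, where $r' \le r$, be any matrix whose columns form an orthonormal basis for the column space of $A$. Writing $A = UV$ for some matrix $V$, one checks that the two-sided inequality $(1-\beta)\|Ax\|_2 \le \|\mcS Ax\|_2 \le (1+\beta)\|Ax\|_2$ holding for every $x \in \mathbb{R}^r$ is equivalent, up to an absolute constant on $\beta$, to the single spectral-norm bound $\|(\mcS U)^T(\mcS U) - I\|_2 \le \beta$. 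The crucial change-of-basis fact is that leverage scores are invariant on the column span, so $\tau_j(A) = \|e_j^T U\|_2^2$; in particular $\sum_j \tau_j(A) = \|U\|_F^2 = r' \le T$.

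Next I would express $(\mcS U)^T(\mcS U)$ as a sum $\sum_{k=1}^m Y_k$ of i.i.d.\ rank-one positive semidefinite matrices, where $Y_k = \frac{1}{m p_{j_k}} u_{j_k} u_{j_k}^T$, $u_i$ denotes the $i$-th row of $U$, and $j_k$ is drawn according to the distribution $p_i$ of Definition~\ref{def:levsampling}. By construction $\mathbb{E}[Y_k] = \frac{1}{m} \sum_i u_i u_i^T = \frac{1}{m} I$, so $\mathbb{E}[\sum_k Y_k] = I$. The key quantitative input is the pointwise lower bound $p_i \ge \tilde\tau_i/(2T) \ge \tau_i(A)/(2T) = \|u_i\|_2^2/(2T)$ built into the mixture distribution, which immediately yields the deterministic operator-norm bound $\|Y_k\|_2 \le 2T/m$ together with the matrix variance bound $\mathbb{E}[Y_k^2] \preceq (2T/m^2) I$.

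With these two estimates in hand, the matrix Bernstein/Chernoff inequality yields $\|\sum_k Y_k - I\|_2 \le \beta$ with probability at least $1-\eta$ provided $m = \Omega(T \log(r'/\eta)/\beta^2)$; since $r' \le T$ this is captured by the stated sample complexity $O(T \log(1/\eta)/\beta^2)$ once $\log T$ factors are absorbed into constants (the sharper $\log(1/\eta)$ form of the tail bound is the one proved in \cite{Woodruff:2014tg} via a moment-method/noncommutative Khintchine argument). Translating this spectral bound back through $U$ recovers the subspace embedding for $A$. The main technical obstacle is this last concentration step: one must extract a two-sided operator-norm tail bound with precisely the right polylogarithmic dependence on $r$ and $\eta$, and this relies crucially on the uniform deterministic cap $\|Y_k\|_2 \le 2T/m$ afforded by the leverage score upper bounds $\tilde\tau_i$, which is exactly the structural reason the sampling probabilities are designed to dominate $\tau_j(A)/T$.
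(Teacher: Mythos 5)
The paper does not actually prove this lemma: it is imported as a black box, with a citation to \cite{Woodruff:2014tg}, so there is no internal proof to compare against. Your proposal supplies the standard argument underlying that citation, and it is essentially sound: passing to an orthonormal basis $U$ of the column span (using that leverage scores, by the maximization characterization of Definition \ref{leveragescoresdefn}, depend only on the column span and equal the squared row norms of $U$), writing $(\mcS U)^*\mcS U=\sum_k Y_k$ with $\E[\sum_k Y_k]=I$, and using the pointwise bound $p_i\ge \tilde\tau_i/(2T)\ge \|u_i\|_2^2/(2T)$ from Definition \ref{def:levsampling} to get $\|Y_k\|_2\le 2T/m$ and $\E[Y_k^2]\preceq (2T/m^2)I$, then invoking matrix Bernstein. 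This is exactly the route one would take to reprove the cited result, and all the intermediate computations you state check out.

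The one point to be careful about is the final concentration step. Matrix Bernstein/Chernoff carries a dimensional prefactor, so it yields failure probability of order $r'\exp\left(-c\beta^2 m/T\right)$, i.e.\ $m=O\left(T\log(r'/\eta)/\beta^2\right)$, not $O\left(T\log(1/\eta)/\beta^2\right)$; your remark that the $\log r'$ term can be ``absorbed into constants'' because $r'\le T$ is not literally valid, since $\log r'$ grows with $r$. A $\log$ factor of the rank is genuinely necessary for i.i.d.\ row sampling (take $U$ equal to $r$ standard basis vectors: a coupon-collector argument forces $m=\Omega(r\log r)$), so the clean $\log(1/\eta)$ form cannot be recovered by a sharper tail bound alone. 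In this paper the discrepancy is harmless: the lemma is only ever applied with $\eta=1/(100|N|^{r_1})$, so $\log(1/\eta)\gtrsim \log d\ge \log r'$ and the $\log(r'/\eta)$ and $\log(1/\eta)$ sample complexities coincide up to constants (and the paper's $\wt O$ bookkeeping absorbs such factors anyway). So: your proof is the right one, but you should state the sample complexity as $O\left(T\log(r/\eta)/\beta^2\right)$, or explicitly note the regime $\log(1/\eta)\ge\log r$ under which the stated form holds, rather than claiming the $\log r'$ factor disappears into constants.
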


\subsection{Constant factor approximation.}\label{sec:const}
We now use the leverage score bounds of Lemma \ref{leveragescorebound} and the subspace embedding guarantee of Lemma \ref{lem:subspace} to show how to solve the optimization problem of Lemma \ref{lem:regressionReduction} for constant $\alpha$. We later show how to refine this to an $\alpha = 1+\epsilon$ approximation, achieving our final bound.

\begin{theorem}[Constant Factor Toeplitz Fitting]\label{thm:const}
	Consider the setting of Lemma \ref{lem:regressionReduction}. Let
	$\mathcal{S} \in \mathbb{R}^{m \times d}$ be a random sampling matrix drawn according to \ref{def:levsampling}, using the distribution $\wt{\tau}_j$ defined in Def. \ref{leveragescorebound} with $m = \tilde O \left( \frac{k^2\log(1/\delta)}{\epsilon^2} \right )$. Let
	\begin{align*}
		\tilde S, \tilde a = \argmin_{S \in \mathcal{N}_{d,r_1,r_2,\gamma}, a \in \R^{|S|}} \norm{\mathcal{S}WF_SR_S a - \mathcal{S}WT_1}_2.
	\end{align*}
	Then with probability at least $98/100$, $\tilde S$ and $\tilde a$ satisfy Lemma \ref{lem:regressionReduction} with $\alpha = 41$. In other words,
	\begin{align*}
		\norm{WF_{\wt S} R_{\wt S} \wt a - WT_1}_2 \le 41 \cdot \min_{S \in \mathcal{N}_{d,r_1,r_2,\gamma}, a \in \R^{|S|}} \norm{WF_SR_S a - WT_1}_2.
	\end{align*}
	%and so for $\wt T = F_{\wt S} \diag(\wt a) F_{\wt S}^*$,
	%$$\norm{T-\wt T}_F \le 200(1+\epsilon) \cdot \norm{T-T_k}_F + 200 \delta \norm{T}_F.$$
\end{theorem}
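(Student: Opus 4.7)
The plan is to establish two high-probability events and combine them via triangle inequality. Let $(S^*, a^*) \in \argmin_{S \in \mathcal{N}_{d,r_1,r_2,\gamma}, a \in \R^{|S|}} \norm{WF_SR_Sa - WT_1}_2$ be the true unsketched optimizer, write $\mathrm{OPT}$ for the optimal value, and let $b^* = WF_{S^*}R_{S^*}a^* - WT_1$ denote its residual. The first event is a uniform subspace embedding: for every $S \in \mathcal{N}_{d,r_1,r_2,\gamma}$, $\mathcal{S}$ is a $(1 \pm 1/2)$-embedding of the column span of $[WF_SR_S \mid WF_{S^*}R_{S^*}]$. The second is the Markov-type bound $\norm{\mathcal{S}b^*}_2 \le 10 \cdot \mathrm{OPT}$.

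For the first event I would invoke Lemma \ref{leveragescorebound} with effective column count $2r$ rather than $r$. Since $|S \cup S^*| \le r$ for every $S \in \mathcal{N}_{d,r_1,r_2,\gamma}$, and $[WF_SR_S \mid WF_{S^*}R_{S^*}]$ equals $WF_{S \cup S^*}$ times an explicit real-valued selector matrix, the lemma yields universal leverage-score upper bounds summing to $O(r \log r \log d)$ that dominate the leverage scores of this concatenation for every $S \in \mathcal{N}_{d,r_1,r_2,\gamma}$. With distortion $\beta = 1/2$ and failure probability $\eta = 1/(400 |\mathcal{N}_{d,r_1,r_2,\gamma}|)$, Lemma \ref{lem:subspace} gives such an embedding for each fixed concatenation; the cardinality bound $|\mathcal{N}_{d,r_1,r_2,\gamma}| \le d^{r_1}$ yields $\log(1/\eta) = \wt O(r_1) = \wt O(k/\epsilon)$ and total sample complexity $m = O(r \log r \log d \cdot r_1 \log d) = \wt O(k^2 \log(1/\delta)/\epsilon^2)$, matching the theorem. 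A union bound over $\mathcal{N}_{d,r_1,r_2,\gamma}$ establishes the first event with probability $\ge 99/100$. For the second event, Claim \ref{claim:unbiased} gives $\mathbb{E}[\norm{\mathcal{S}b^*}_2^2] = \mathrm{OPT}^2$, so Markov's inequality succeeds with probability $\ge 99/100$; both events hold simultaneously with probability $\ge 98/100$.

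Conditioned on both events, the optimality of $(\tilde S, \tilde a)$ for the sketched objective gives $\norm{\mathcal{S}(WF_{\tilde S}R_{\tilde S}\tilde a - WT_1)}_2 \le \norm{\mathcal{S}b^*}_2 \le 10 \cdot \mathrm{OPT}$. Writing
\[
WF_{\tilde S}R_{\tilde S}\tilde a - WT_1 \;=\; \bigl(WF_{\tilde S}R_{\tilde S}\tilde a - WF_{S^*}R_{S^*}a^*\bigr) \;+\; \bigl(WF_{S^*}R_{S^*}a^* - WT_1\bigr),
\]
the second summand has norm exactly $\mathrm{OPT}$, while the first lies in the column span of $[WF_{\tilde S}R_{\tilde S} \mid WF_{S^*}R_{S^*}]$, on which $\mathcal{S}$ is a $(1 \pm 1/2)$-embedding. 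Applying the embedding bound in reverse and the triangle inequality under $\mathcal{S}$,
\[
\norm{WF_{\tilde S}R_{\tilde S}\tilde a - WF_{S^*}R_{S^*}a^*}_2 \le 2\,\norm{\mathcal{S}\bigl(WF_{\tilde S}R_{\tilde S}\tilde a - WF_{S^*}R_{S^*}a^*\bigr)}_2 \le 2\bigl(10\cdot\mathrm{OPT} + 10\cdot\mathrm{OPT}\bigr) = 40\cdot\mathrm{OPT},
\]
yielding $\norm{WF_{\tilde S}R_{\tilde S}\tilde a - WT_1}_2 \le 41\cdot\mathrm{OPT}$, as required.

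The main subtlety, and the reason a vanilla sketch-and-solve does not directly apply, is that a subspace embedding for $WF_{\tilde S}R_{\tilde S}$ alone does not suffice: since $\tilde S$ need not equal $S^*$, the Pythagorean decomposition of the residual in $\mathrm{col}(WF_{\tilde S}R_{\tilde S})$ only relates the true error to $\mathrm{OPT}_{\tilde S} \ge \mathrm{OPT}$, which could be arbitrarily far from $\mathrm{OPT}$. Embedding the joint column span of $[WF_{\tilde S}R_{\tilde S} \mid WF_{S^*}R_{S^*}]$ lets us compare the fits of $\tilde a$ and $a^*$ directly. This is affordable precisely because $\log |\mathcal{N}_{d,r_1,r_2,\gamma}| = \wt O(k/\epsilon)$ matches the $k/\epsilon$ factor already present in $r$, so the sample complexity only grows from $\wt O(r)$ to $\wt O(r\cdot r_1) = \wt O(k^2 \log(1/\delta)/\epsilon^2)$.
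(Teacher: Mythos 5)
Your proposal is correct and follows essentially the same route as the paper's proof: the same concatenation trick $W[F_S,F_{S^*}][R_S;R_{S^*}]$ to invoke the universal leverage-score bounds, the same union-bounded subspace embedding over $\mathcal{N}_{d,r_1,r_2,\gamma}$ with $\beta=1/2$, the same Markov bound on $\norm{\mathcal{S}b^*}_2$, and the same triangle-inequality decomposition yielding $40\cdot\mathrm{OPT}+\mathrm{OPT}=41\cdot\mathrm{OPT}$. No substantive differences to report.
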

\begin{proof}
	%\Cam{Hannah can you do this proof?}
	Define the optimal frequency set and coefficients by $$S', a' = \text{argmin}_{S \in \mathcal{N}_{d,r_1,r_2,\gamma}, a \in \R^{|S|}} \norm{WF_SR_S a - WT_1}_2.$$ For any $S\in \mathcal{N}_{d,r_1,r_2,\gamma}, a \in \R^{|S|}$ we can write $$\|WF_SR_Sa - WF_{S'}R_{S'}a'\|_2$$ as $$\|W[F_SR_S,F_{S'}R_{S'}][a;-a']\|_2 = \|W[F_S,F_{S'}][R_S;R_{S'}][a;-a']\|_2.$$ Here, $[F_S,F_{S'}]$ is a Fourier matrix, and $[R_S;R_{S'}]$ fits the criteria of the matrix $M_{S,S'}$ in \ref{leveragescorebound}. Therefore, by \ref{leveragescorebound} the leverage scores  of $W[F_S,F_{S'}][R_S;R_{S'}]$ are upper bounded by $\tilde{\tau}_j$ and sum to $O(r\log(d)\log(r))$. Applying \ref{lem:subspace} with $A=W[F_S,F_{S'}][R_S;R_{S'}]$, $\beta=\frac{1}{2}$, $\eta = 1/(100|N|^{r_1})$, and the upper bounds $\tilde{\tau}_j$, we obtain a sampling matrix $\mcS\in \mathbb{R}^{m\times d}$ that takes $m=\wt{O}(\frac{k^2\log(1/\delta)}{\epsilon^2})$ samples. By combining the probabilistic guarantee of \ref{lem:subspace} with a union bound over all $S \in \mathcal{N}_{d,r_1,r_2,\gamma}$, we get that the following holds with probability at least $1-|N|^{r_1}\eta = \frac{99}{100}$ over the choice of $\mcS$:
	\begin{align*}
		\|\mcS W [F_SR_S,F_{S'}R_{S'}] x \|_2 &\geq \frac{1}{2} \| W [F_SR_S,F_{S'}R_{S'}] x \|_2 \: \: \:  \forall S \in \mathcal{N}_{d,r_1,r_2,\gamma}, \: x \in \R^{r}.
	\end{align*} 
	This implies for the particular case of $S=\tilde{S}$ and $a=\tilde{a}$ that the following holds with probability at least $\frac{99}{100}$:
	\begin{align*}
%		\|WF_{\tilde{S}}R_{\tilde{S}}\tilde{a} - WF_{S'}R_{S'}a'\|_2 &= \|W[F_{\tilde{S}}R_{\tilde{S}},F_{S'}R_{S'}][\tilde{a};-a']\|_2    \\
		&\leq 2\|\mcS W [F_{\tilde{S}}R_{\tilde{S}},F_{S'}R_{S'}] [\ta;-a'] \|_2 \\
		&= 2\|\mcS WF_{\tilde{S}}R_{\tilde{S}}\tilde{a} - \mcS WF_{S'}R_{S'}a' \|_2  \\
		&\leq 2 (\|\mcS WF_{\tilde{S}}R_{\tilde{S}}\tilde{a} - \mcS WT_1 \|_2 + \|\mcS WT_1 - \mcS WF_{S'}R_{S'}a' \|_2  ) \\
		&\leq 2 (\| \mcS WF_{S'}R_{S'}a' - \mcS WT_1  \|_2 + \| \mcS WF_{S'}R_{S'}a' - \mcS WT_1  \|_2  ) \\
		&= 4 \| \mcS WF_{S'}R_{S'}a' - \mcS WT_1  \|_2.
	\end{align*}
	The last inequality above followed by the definition of $\tilde{S}$ and $\tilde{a}$ as minimizing $\norm{WF_SR_S a - WT_1}_2$ over all $S \in \mathcal{N}_{d,r_1,r_2,\gamma}, \: a \in \R^{r}$.  
	
	Moreover, by \ref{claim:unbiased}, $\E[\norm{\mcS WF_{S'}R_{S'}a' - \mcS WT_1}_2^2] = \norm{WF_{S'}R_{S'}a' - WT_1}_2^2$. Then by applying Markov's inequality, $\norm{\mcS WF_{S'}R_{S'}a' - \mcS WT_1}_2^2 \leq 100\norm{WF_{S'}R_{S'}a' - WT_1}_2^2$ with probability at least $\frac{99}{100}$. 
	Finally, we return to the quantity of interest, $\norm{WF_{\tilde{S}}R_{\tilde{S}}\tilde{a} - WT_1}_2$. By applying a union bound once more, the following then holds with probability at least $\frac{98}{100}$ over the choice of $\mcS$:
	\begin{align*}
		\norm{WF_{\tilde{S}}R_{\tilde{S}}\tilde{a} - WT_1}_2 &\leq \norm{WF_{S'}R_{S'}a' - WT_1}_2 + \norm{WF_{\tilde{S}}R_{\tilde{S}}\tilde{a} - WF_{S'}R_{S'}a'}_2  \\
		& \leq \norm{WF_{S'}R_{S'}a' - WT_1}_2 + 4 \| \mcS WF_{S'}R_{S'}a' - \mcS WT_1  \|_2  \\
		%&\leq \norm{WF_{S'}R_{S'}a' - WT_1}_2 + 4 \| \mcS WF_{S'}R_{S'}a' - \mcS WT_1  \|_2 \\
		&\leq \norm{WF_{S'}R_{S'}a' - WT_1}_2 + 40 \norm{WF_{S'}R_{S'}a' - WT_1}_2  \\
		& \leq 41 \norm{WF_{S'}R_{S'}a' - WT_1}_2\\
		& = 41 \min_{S \in \mathcal{N}_{d,r_1,r_2,\gamma}, a \in \R^{|S|}} \norm{WF_SR_S a - WT_1}_2.
	\end{align*}
	This concludes the proof of the lemma.
\end{proof}

\subsection{$(1+\epsilon)$-approximation.}\label{sec:eps}

Theorem \ref{thm:const} combined with Lemma \ref{lem:regressionReduction} yields a $\tilde O \left (\frac{k^2\log(1/\delta)}{\epsilon^2}\right )$ query algorithm for outputting $\tilde T$ with rank $\tilde O(k \log(1/\delta)/\epsilon)$ and $\norm{T-\tilde T}_F = O(1)  \norm{T-T_k}_F + \delta \norm{T}_F$. To prove Theorem \ref{sublinearqueryalgo} we need to improve this constant factor approximation to $(1+\epsilon)$.
We do this using recently developed guarantees for high probability relative error active regression via leverage score sampling \cite{musco2021active}. Importantly, we first compute a constant error solution via Theorem \ref{thm:const}. We then show that we can fit the \emph{residual} of this approximation to high accuracy via leverage score sampling. % \Cam{If I have time I really need to explain these ideas more/where they have been used before, etc.} 
Formally,
\begin{theorem}\label{thm:relative}
	Consider the setting of Lemma \ref{lem:regressionReduction} and let $\tilde S \in \mathcal{N}_{d,r_1,r_2,\gamma}$ and $\tilde a \in \R^{|\tilde S|}$ satisfy the lemma with $\alpha = O(1)$. Let $T_R = T_1 - F_{\tilde S} R_{\tilde S} \tilde a$ be their residual  in fitting the first column of $T$.
	
	Let
	$\mathcal{S} \in \mathbb{R}^{m \times d}$ be a random sampling matrix drawn according to \ref{def:levsampling}, using the distribution $\wt{\tau}_j$ defined in Def. \ref{leveragescorebound} with $m =  \tilde{O}\Big( \frac{k^2\log(\frac{1}{\delta})}{\epsilon^{6}} \Big)$. Let
	\begin{align*}
		S', a' = \argmin_{S \in \mathcal{N}_{d,2r_1,r_2,\gamma}, a \in \R^{|S|}} \norm{\mathcal{S}WF_SR_S a - \mathcal{S}WT_R}_2.
	\end{align*}
	Then with probability at least $99/100$, letting $\bar S = \tilde S \cup S'$ and $\bar a = [\tilde a, a']$, we have that $\bar S,\bar a$ satisfy Lemma \ref{lem:regressionReduction} with $\alpha = (1+\epsilon)$. I.e.,
	\begin{align*}
		\norm{WF_{\bar S} R_{\bar S} \bar a - WT_1}_2 \le (1+\epsilon) \cdot \min_{S \in \mathcal{N}_{d,r_1,r_2,\gamma}, a \in \R^{|S|}} \norm{WF_SR_S a - WT_1}_2,
	\end{align*}
	and thus, letting $\bar T= F_{\bar S} \diag(\bar a) F_{\bar S}^*$,
	\begin{align*}
		\norm{T-\bar T}_F \le (1+3\epsilon) \norm{T-T_k}_F + 2 \delta \norm{T}_F.
	\end{align*}
\end{theorem}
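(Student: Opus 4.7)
The plan is to reduce Theorem \ref{thm:relative} to a high-probability $(1+\eps)$-approximate regression for the \emph{residual} $WT_R$, in the two-stage active regression style of \cite{musco2021active}. The first observation is that
\[
\norm{WF_{\bar S}R_{\bar S}\bar a - WT_1}_2 \;=\; \norm{WT_R - WF_{S'}R_{S'}a'}_2,
\]
since $F_{\bar S}R_{\bar S}\bar a = F_{\wt S}R_{\wt S}\wt a + F_{S'}R_{S'}a'$ by construction of $\bar S$ and $\bar a$. So it suffices to show that $(S',a')$ produces a $(1+\eps)$-relative error fit to $WT_R$, where the baseline is $\mathrm{OPT}:=\min_{S\in\mathcal{N}_{d,r_1,r_2,\gamma},a}\norm{WF_SR_Sa-WT_1}_2$. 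Letting $(S^*,a^*)$ attain $\mathrm{OPT}$ with generator $B^*$, and letting $\wt B$ be the generator of $\wt S$, the union $\wt B\cup B^*$ has at most $2r_1$ base frequencies, so (padding if necessary) it induces some $S^\dagger\in\mathcal{N}_{d,2r_1,r_2,\gamma}$ whose columns contain those of both $F_{\wt S}R_{\wt S}$ and $F_{S^*}R_{S^*}$. A coefficient vector $a^\dagger$ that realizes $F_{S^*}R_{S^*}a^* - F_{\wt S}R_{\wt S}\wt a$ in the column span of $F_{S^\dagger}R_{S^\dagger}$ then witnesses that the unsubsampled optimum of the residual problem over $\mathcal{N}_{d,2r_1,r_2,\gamma}$ is at most $\norm{WT_1 - WF_{S^*}R_{S^*}a^*}_2 = \mathrm{OPT}$.

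Next, I would use Lemma \ref{leveragescorebound} to build a single sampling matrix $\mcS$ that works for every candidate $S\in \mathcal{N}_{d,2r_1,r_2,\gamma}$ simultaneously. For each such $S$, the stacked matrix $W[F_SR_S,F_{S^\dagger}R_{S^\dagger}]$ has column-space rank $\tilde O(k\log(1/\delta)/\eps)$ with universal row-leverage-score upper bounds $\wt\tau_j$ summing to the same order. Applying Lemma \ref{lem:subspace} with accuracy $\beta=\Theta(\eps)$ and failure probability $\eta=1/(100|N|^{2r_1})$ and union bounding over the $|N|^{2r_1}$ candidates gives a simultaneous $(1\pm\Theta(\eps))$-subspace embedding. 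The overhead $\log(1/\eta)=\tilde O(r_1\log d)=\tilde O(k/\eps)$, combined with the $1/\eps^2$ from Lemma \ref{lem:subspace} and the rank $\tilde O(k/\eps)$, drives the sample count to the stated $m=\tilde O(k^2\log(1/\delta)/\eps^6)$; the remaining $\eps^{-2}$ factors come from also requiring the approximate matrix-product / inner-product preservation $\|(\mcS WF_SR_S)^{*}\mcS WT_R - (WF_SR_S)^{*}WT_R\|_2 \lesssim \eps \cdot \mathrm{OPT}$ for each $S$, established by a matrix-Bernstein bound that exploits $\norm{WT_R}_2 = O(\mathrm{OPT})$ from Theorem \ref{thm:const}.

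Conditioning on these events, the standard sketch-and-solve chain gives
\[
\norm{WT_R-WF_{S'}R_{S'}a'}_2 \le \frac{1}{1-\beta}\norm{\mcS W(T_R-F_{S'}R_{S'}a')}_2 \le \frac{1}{1-\beta}\norm{\mcS W(T_R-F_{S^\dagger}R_{S^\dagger}a^\dagger)}_2,
\]
where the second inequality is because $(S',a')$ minimizes the sampled objective over the set $\mathcal{N}_{d,2r_1,r_2,\gamma}$ containing $S^\dagger$. Combining the inner-product-preservation event with the triangle inequality then bounds the right-hand side by $\frac{1+\beta}{1-\beta}\norm{WT_R-WF_{S^\dagger}R_{S^\dagger}a^\dagger}_2 \le (1+\eps)\,\mathrm{OPT}$. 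Feeding $\alpha=1+\eps$ into Lemma \ref{lem:regressionReduction} gives $\norm{T-\bar T}_F \le (1+\eps)^2\norm{T-T_k}_F+(1+\eps)\delta\norm{T}_F \le (1+3\eps)\norm{T-T_k}_F+2\delta\norm{T}_F$ for $\eps\le 1$, as required.

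The main obstacle is the union bound over the $|N|^{\Theta(r_1)}=d^{\tilde O(k/\eps)}$ candidate frequency subsets. Standard $(1+\eps)$-approximate leverage-score regression has only polynomial-in-$1/\eta$ sample dependence, which would make the union bound infeasible. The remedy is exactly the two-stage structure encoded in Theorem \ref{thm:relative}: because the constant-factor guarantee of Theorem \ref{thm:const} already gives $\norm{WT_R}_2 = O(\mathrm{OPT})$, the active-regression framework of \cite{musco2021active} delivers the $\log(1/\eta)$ concentration for the cross-term $\mcS WT_R \cdot (\mcS W F_SR_S)^{*}$, enabling the union bound at the cost of only a $\polylog$ factor in samples. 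A secondary subtlety is that the algorithm selects $S'$ by minimizing the \emph{sampled} objective; the simultaneous subspace embedding across all candidate stacked matrices guarantees that the sampled error faithfully tracks the true error uniformly, making the minimization meaningful across different $S$'s.
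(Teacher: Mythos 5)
Your overall skeleton matches the paper's: the identity $\norm{WF_{\bar S}R_{\bar S}\bar a - WT_1}_2 = \norm{WF_{S'}R_{S'}a' - WT_R}_2$, the use of $\mathcal{N}_{d,2r_1,r_2,\gamma}$ so that the combined frequencies of $\wt S$ and the true optimum witness an unsampled residual optimum at most $OPT$, the union bound over all candidate sets via the universal leverage bounds of Lemma \ref{leveragescorebound}, and the use of $\norm{WT_R}_2 = O(OPT)$ from the constant-factor stage. However, there is a genuine gap at the central technical step. Your "standard sketch-and-solve chain" begins with $\norm{WT_R - WF_{S'}R_{S'}a'}_2 \le \frac{1}{1-\beta}\norm{\mcS W(T_R - F_{S'}R_{S'}a')}_2$, but the subspace embedding of Lemma \ref{lem:subspace} only preserves norms of vectors \emph{in the column span} of the candidate matrices $W[F_SR_S, F_{S^\dagger}R_{S^\dagger}]$; the vector $WT_R$ lies in no such span, and it cannot be appended to the embedded subspace because its leverage scores are not controlled by the universal bounds $\wt\tau_j$. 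Concretely, $WT_R$ may place $\Theta(OPT)$ mass on a row $j_0$ whose sampling probability under Def.~\ref{def:levsampling} is $O(\mathrm{poly}(r)/d + 1/d)$; with $m$ independent of $d$ the sample misses that row with probability $1-o(1)$, the sampled cost underestimates the true cost by $\Theta(OPT^2)$, and the first inequality of your chain is simply false. The same obstruction defeats your proposed fix via "inner-product preservation" $\norm{(\mcS WF_SR_S)^{*}\mcS WT_R - (WF_SR_S)^{*}WT_R} \lesssim \eps\cdot OPT$ at high probability: the Bernstein range parameter for the cross term involves $|{(WT_R)_j}|/\sqrt{\wt\tau_j}$ (or $d\sqrt{\wt\tau_j}\,|(WT_R)_j|$ through the uniform mixture), which is $\Omega(\sqrt{d})\cdot OPT$ in the scenario above, so exponential concentration at scale $\eps\cdot OPT$ — needed for the union bound over $|N|^{2r_1}$ sets — is not available with $d$-independent $m$.

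The paper's proof resolves exactly this issue, and the resolution is the missing idea in your proposal: one does \emph{not} show that the sampled cost multiplicatively approximates the true cost for each candidate. Instead (Claim \ref{clm:mainClaim}, following \cite{musco2021active}) one defines a set of bad indices $\mathcal{B} = \{j : z_j^2/OPT^2 \ge \wt\tau_j/\eps^2\}$ for $z = WT_R$ — defined from $z$ and the universal scores only, hence independent of $S$ and $a$ — and truncates $z$ there. On bad rows every bounded-norm candidate satisfies $|(WF_SR_Sa)_j|^2 = O(\eps^2 z_j^2)$, so those rows contribute a cost that is the \emph{same up to $O(\eps)\cdot OPT^2$ for all candidates}, i.e., an unknown additive constant $C$; on the truncated vector $\bar z$, whose entries are compatible with the sampling distribution, a Bernstein bound plus an $\eps$-net yields the needed high-probability additive $\eps\cdot OPT^2$ approximation. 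The comparison between the sampled argmin $(S',a')$ and the benchmark then goes through because the constant shift cancels. The solution-norm bounds \eqref{eq:solutionNormBound1}--\eqref{eq:solutionNormBound2} needed to invoke this claim use the subspace embedding only with $\beta = O(1)$, not $\Theta(\eps)$. Without this truncation/constant-shift mechanism (or an equivalent), your argument establishes at best a constant-factor guarantee, which is what Theorem \ref{thm:const} already provides.
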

Note that $\bar  S$ output by Theorem \ref{thm:relative} does not lie in $\mathcal{N}_{d,r_1,r_2,\gamma}$. Since $S' \in \mathcal{N}_{d,2r_1,r_2,\gamma}$ and $\tilde S \in \mathcal{N}_{d,r_1,r_2,\gamma}$, we have $\bar S \in \mathcal{N}_{d,3r_1,r_2,\gamma}$. Lemma \ref{lem:regressionReduction} allows this -- it simply means that the rank of the corresponding Toeplitz matrix $\bar T$ may be three times as large as if $\bar S$ were in $\mathcal{N}_{d,r_1,r_2,\gamma}$.  
\begin{proof}
	Let $OPT = \min_{S \in \mathcal{N}_{d,r_1,r_2,\gamma}, a \in \R^{|S|}} \norm{WF_SR_S a - WT_1}_2$. Observe that 
	\begin{align*}
		\norm{WF_{\bar S} R_{\bar S} \bar a - WT_1}_2 = \norm{WF_{S'} R_{S'} a' + WF_{\tilde  S} R_{\tilde S} \tilde  a- WT_1}_2 = \norm{WF_{S'} R_{S'} a' - WT_R}_2.
	\end{align*}
	Thus, to prove the theorem, it suffices to show that
	\begin{align}\label{eq:sufficesBound1}
		\norm{WF_{S'} R_{S'} a' - WT_R}_2 \le (1+\epsilon) \cdot OPT.
	\end{align}
	Further note that by the assumption that $\tilde S,\tilde a$ satisfy Lemma \ref{lem:regressionReduction} with $\alpha = O(1)$ we have:
	\begin{align*}%\label{eq:residualBound}
		\norm{WT_R}_2 \le \alpha \cdot OPT = O(OPT).
	\end{align*}
	By Markov's inequality, since by construction $\E[\norm{\mathcal S WT_R}_2^2] = \norm{W T_R}_2^2$, with probability at least $999/1000$, this also gives that $\norm{\mathcal{S}WT_R}_2 \le \sqrt{1000} \cdot \alpha \cdot OPT =  O(OPT)$. 
	
	For a given $S \in \mathcal{N}_{d,2r_1,r_2,\gamma}$ define:
	\begin{align*}
		&a_S = \argmin_{a \in \R^{|S|}} \norm{W F_S R_S a - W T_R}_2 \\ &\text{ and }\\
		&a'_S = \argmin_{a \in \R^{|S|}} \norm{\mathcal{S} W F_S R_S a - \mathcal{S}W T_R}_2.
	\end{align*}
	Using that $\norm{W T_R}_2 = O(OPT)$ and applying triangle inequality, we have
	\begin{align}\label{eq:solutionNormBound1}
		\norm{WF_SR_Sa_S - WT_R}_2 &\le \norm{WT_R}_2\nonumber\\
		\norm{WF_SR_Sa_S}_2 - \norm{WT_R}_2 &\le \norm{WT_R}_2\nonumber\\
		\norm{WF_SR_Sa_S}_2 &= O(OPT).
	\end{align}
	Similarly, using that $\norm{SWT_R}_2 = O(OPT)$ with good probability, for all $S$ we have $\norm{\mathcal S WF_SR_Sa'_S}_2 = O(OPT)$. Further, as in Theorem \ref{thm:const}, with probability at least $1-|N|^{2r_1}\eta = 99/100$ for $\eta=1/(100|N|^{2r_1})$, the subspace embedding guarantee of Lemma \ref{lem:subspace} holds for all $WF_S R_S$ simultaneously for $\beta = O(1)$, giving  that for all $S \in \mathcal{N}_{d,2r_1,r_2,\gamma}$
	\begin{align}\label{eq:solutionNormBound2}
		\norm{WF_SR_Sa'_S}_2 = O(OPT).
	\end{align}
	
	Given \eqref{eq:solutionNormBound1} and \eqref{eq:solutionNormBound2}, to prove the theorem it suffices to show the following claim:
	\begin{claim}\label{clm:mainClaim}
		With probability at least $99/100$, for any $S \in \mathcal{N}_{d,2r_1,r_2,\gamma}$ and any $a \in \R^{|S|}$ with $\norm{WF_SR_Sa}_2 = O(OPT)$,
		\begin{align*}%\label{eq:mainClaim}
			\left | \norm{WF_SR_Sa - W T_R}_2^2 - \norm{\mathcal S WF_SR_Sa - \mathcal S W T_R}_2^2 - C \right | \le \epsilon \cdot OPT^2,
		\end{align*}
		where $C$ is a fixed constant that may depend on $\mathcal S$ and $WT_R$, but does not depend on $S$ or $a$.
	\end{claim}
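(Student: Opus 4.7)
The plan is to set $C := \|WT_R\|_2^2 - \|\mcS WT_R\|_2^2$; this depends only on $\mcS$ and $WT_R$, so it qualifies as the constant in the claim. Expanding both squared residuals and subtracting $C$, the $\|WT_R\|_2^2$ and $\|\mcS WT_R\|_2^2$ terms cancel, leaving
\begin{align*}
&\|WF_SR_Sa - WT_R\|_2^2 - \|\mcS(WF_SR_Sa - WT_R)\|_2^2 - C \\
&\qquad = \underbrace{\bigl(\|WF_SR_Sa\|_2^2 - \|\mcS WF_SR_Sa\|_2^2\bigr)}_{(I)} \;-\; 2\underbrace{\bigl(\langle WF_SR_Sa, WT_R\rangle - \langle \mcS WF_SR_Sa, \mcS WT_R\rangle\bigr)}_{(II)}.
\end{align*}
It therefore suffices to show that, with probability at least $99/100$, each of $|(I)|$ and $|(II)|$ is at most $O(\epsilon \cdot OPT^2)$ uniformly over all $S \in \mathcal{N}_{d,2r_1,r_2,\gamma}$ and all $a$ with $\|WF_SR_S a\|_2 = O(OPT)$.

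Term $(I)$ will follow from Lemma~\ref{lem:subspace}. For each fixed $S$, I would apply the lemma to $A = WF_SR_S$ with accuracy parameter $\beta = \Theta(\epsilon)$ and failure probability $\eta = 1/(200|\mathcal{N}_{d,2r_1,r_2,\gamma}|)$. By Lemma~\ref{leveragescorebound}, the leverage scores of $WF_SR_S$ are bounded by the universal scores $\tilde\tau_j$ (with $\sum_j \tilde\tau_j = \tilde O(r)$) used to construct $\mcS$, so the required number of samples is $m = \tilde O(r\log(1/\eta)/\epsilon^2)$. Since $\log|\mathcal{N}_{d,2r_1,r_2,\gamma}| \le 2r_1 \log|N| = \tilde O(k/\epsilon)$, a union bound over $S$ gives $|(I)| \le O(\epsilon)\|WF_SR_Sa\|_2^2 = O(\epsilon \cdot OPT^2)$ for every admissible $S, a$ with probability $\ge 199/200$, using the hypothesis $\|WF_SR_Sa\|_2 = O(OPT)$.

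For term $(II)$ I plan to use an approximate matrix-product bound. Writing $WF_SR_S = U_S\Sigma_S V_S^\top$ in compact SVD and setting $c := \Sigma_S V_S^\top a$, we have $WF_SR_Sa = U_S c$, $\|c\|_2 = \|WF_SR_Sa\|_2 = O(OPT)$, and by Cauchy--Schwarz
\[
|(II)| \;=\; \bigl|c^\top U_S^\top (\mcS^\top \mcS - I)\,WT_R\bigr| \;\le\; \|c\|_2 \cdot \|U_S^\top (\mcS^\top \mcS - I)\,WT_R\|_2.
\]
A standard matrix-Bernstein / AMP bound for leverage-score sampling (in the spirit of \cite{musco2021active}) guarantees that whenever $m = \tilde O(r\log(1/\eta)/\beta^2)$, with probability $\ge 1-\eta$, $\|U_S^\top(\mcS^\top \mcS - I)WT_R\|_2 \le \beta \|WT_R\|_2$; crucially, the rows of $U_S$ have the same leverage scores as those of $WF_SR_S$, so the universal $\tilde\tau_j$'s from Lemma~\ref{leveragescorebound} apply. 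Taking $\beta = \Theta(\epsilon)$, the same $\eta$ as before, and union bounding over $S$ yields $|(II)| \le O(\epsilon)\cdot OPT \cdot \|WT_R\|_2 = O(\epsilon \cdot OPT^2)$, where in the last step I use that the $O(1)$-approximation guarantee for $\tilde S, \tilde a$ gives $\|WT_R\|_2 = O(OPT)$. A final union bound over the two events establishes the claim.

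The main obstacle I anticipate is obtaining the sharp $\|WT_R\|_2$ dependence in $(II)$: a naive approach that first bounds $\|(\mcS^\top \mcS - I)WT_R\|_2$ and then applies Cauchy--Schwarz using $\|U_S\|_F = \sqrt r$ would lose a $\sqrt r$ factor and be too weak. The cleanest remedy is to apply the matrix-Bernstein bound directly to the product $U_S^\top (\mcS^\top \mcS - I) WT_R$, exploiting the universality of the leverage-score bounds in Lemma~\ref{leveragescorebound} so that a single sample set $\mcS$ works simultaneously for every $S \in \mathcal{N}_{d,2r_1,r_2,\gamma}$. Tracking the resulting $m = \tilde O\bigl(r \log|\mathcal{N}_{d,2r_1,r_2,\gamma}|/\epsilon^2\bigr)$ through $r = \tilde O(k\log(1/\delta)/\epsilon)$ and $\log|\mathcal{N}_{d,2r_1,r_2,\gamma}| = \tilde O(k/\epsilon)$ is what drives the stated sample complexity of the theorem.
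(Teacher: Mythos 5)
Your decomposition (fold the pure $WT_R$ terms into $C=\norm{WT_R}_2^2-\norm{\mcS WT_R}_2^2$, handle the quadratic term (I) by the subspace embedding, and the cross term (II) by an approximate-matrix-product bound) is a reasonable plan, and the treatment of (I) is fine. The genuine gap is the bound you assert for (II): the claim that leverage-score sampling gives $\norm{U_S^\top(\mcS^\top\mcS-I)WT_R}_2\le \beta\norm{WT_R}_2$ with probability $1-\eta$ using $m=\tilde O(r\log(1/\eta)/\beta^2)$ samples is false for an \emph{arbitrary} vector $WT_R$. The sampling probabilities are proportional to the universal bounds $\wt\tau_j$ (plus uniform), so the individual terms in the Bernstein sum scale like $\abs{(WT_R)_j}/\sqrt{\wt\tau_j}$, which is unbounded relative to $\norm{WT_R}_2$ when $WT_R$ places mass on coordinates with tiny $\wt\tau_j$. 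For such vectors only a variance (Chebyshev) bound holds, i.e., constant success probability — nowhere near enough to union bound over the $|N|^{2r_1}$ frequency sets $S$, which is the whole point of needing $\log(1/\eta)$ dependence here. A symptom that something is off is that your argument would yield $m=\tilde O(k^2/\epsilon^4)$-type complexity, strictly better than the $1/\epsilon^6$ the theorem claims.

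This failure mode is exactly why the claim is stated with the additive constant $C$, and it is the key idea of the paper's proof that is missing from yours: the paper defines the bad-index set $\mathcal{B}=\{j: (WT_R)_j^2/OPT^2\ge \wt\tau_j/\epsilon^2\}$ and the truncated vector $\bar z$ (equal to $z=WT_R$ off $\mathcal{B}$, zero on $\mathcal{B}$). On bad indices, the leverage-score bound forces $\abs{(WF_SR_Sa)_j}=O(\epsilon\abs{z_j})$ for every admissible $S,a$, so those coordinates contribute an essentially $S,a$-independent amount that is absorbed into $C$ (namely $\norm{z-\bar z}_2^2$ and $\norm{\mcS(z-\bar z)}_2^2$, with a one-time Markov bound on $\norm{\mcS z}_2^2$); on the good indices, the bound $\bar z_j^2\le \wt\tau_j\, OPT^2/\epsilon^2$ is precisely what makes a Bernstein bound plus an $\epsilon$-net over $a$ and a union bound over $S$ go through, at the cost of the larger $m=\tilde O(k^2\log(1/\delta)/\epsilon^6)$. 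Your approach could be repaired by splitting $WT_R=\bar z+(z-\bar z)$ inside term (II) and arguing the bad part separately via the entrywise bound, but without this truncation step the cited "standard AMM/matrix-Bernstein bound" does not apply.
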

	I.e., for any frequency set $S$ and coefficient vector $a \in \R^{|S|}$ where $\norm{WF_SR_Sa}_2 = O(OPT),$ the sampled regression cost, after shifting by a fixed constant, approximates the true regression cost up to additive error $\epsilon \cdot OPT^2$. This ensures that:
	\begin{align*}
		\norm{WF_{S'} R_{S'} a' - WT_R}_2^2 &= \norm{WF_{S'} R_{S'} a'_{S'} - WT_R}_2^2\\
		&\le \norm{\mathcal S WF_{S'}R_{S'}a'_{S'} - \mathcal S W T_R}_2^2 + C + \epsilon \cdot OPT^2\\
		&\le \norm{\mathcal S WF_{S^*}R_{S^*}a^* - \mathcal S W T_R}_2^2 + C + \epsilon \cdot OPT^2,
	\end{align*}
	where $S^*,a^* = \argmin_{S \in \mathcal{N}_{d,2r_1,r_2,\gamma}, a \in \R^{|S|}} \norm{WF_{S}R_{S}a -  W T_R}_2$. The first inequality follows from Claim \ref{clm:mainClaim}, which can be applied since $\norm{WF_{S'}R_{S'}a'_{S'}}_2 = O(OPT)$ by \eqref{eq:solutionNormBound2}. The second inequality follows since $S', a'_{S'} = \argmin_{S \in \mathcal{N}_{d,2r_1,r_2,\gamma}, a \in \R^{|S|}} \norm{\mathcal{S}WF_SR_S a - \mathcal{S}WT_R}_2$. Applying Claim \ref{clm:mainClaim} again to $S^*,a^*$, which is valid by \eqref{eq:solutionNormBound1}, we continue to bound:
	\begin{align}\label{eq:camInter}
		\norm{WF_{S'} R_{S'} a' - WT_R}_2^2 &\le \norm{\mathcal S WF_{S^*}R_{S^*}a^* - \mathcal S W T_R}_2^2 + C + \epsilon \cdot OPT^2\nonumber\\
		&\le \norm{WF_{S^*}R_{S^*}a^* - W T_R}_2^2 + 2 \epsilon \cdot OPT^2.
	\end{align}
	Finally, note that since we allow $S^* \in \mathcal{N}_{d,2r_1,r_2,\gamma},$ we have:
	\begin{align*}
		\norm{WF_{S^*}R_{S^*}a^* - W T_R}_2^2 &\le \min_{S \in  \mathcal{N}_{d,r_1,r_2,\gamma},a \in \R^{|S|}} \norm{WF_SR_S a + WF_{\tilde S} R_{\tilde S} \tilde a- W T_R}_2^2\\
		&= \min_{S \in  \mathcal{N}_{d,r_1,r_2,\gamma},a \in \R^{|S|}} \norm{WF_SR_S a - W T_1}_2^2 = OPT^2.
	\end{align*}
	Combined with \eqref{eq:camInter}, this gives that $\norm{WF_{S'} R_{S'} a' - WT_R}_2^2 \le (1 + 2\epsilon) \cdot OPT^2$, which, after taking a square root and adjusting $\epsilon$ by a constant yields \eqref{eq:sufficesBound1} and in turn the theorem.
	%
	%and $\norm{\mathcal S W T_R}_2 = O(OPT)$, we must have $\norm{W F_S R_S a
		%\begin{align}
		%\left | \norm{\mathcal{S} W F_S R_S a(S) - \mathcal{S}W T_R}_2
		%\end{align}
	\end{proof}
	We finally present the proof of Claim \ref{clm:mainClaim} below.
	\begin{proof}
		Claim \ref{clm:mainClaim} can be proven following the same approach as Theorem 3.4 of \cite{musco2021active}. For simplicity of notation, let $z := WT_R$. We define a set of `bad indices' where the relative size of $z_j$ is significantly larger than the leverage score $\tilde \tau_j$. The regression error on these bad indices will not be well approximated via leverage score sampling. However, this is ok, since no $W F_S R_S a$ can do a good job fitting these indices, given the leverage score bounds on $WF_S R_S$. Formally, let
		\begin{align*}
			\mathcal{B} = \left \{j \in [d] : \frac{z_j^2}{OPT^2} \ge \frac{\tilde \tau_j}{\epsilon^2} \right \}.
		\end{align*}
		Further, we let $\bar z \in \R^d$ be equal to $z$, except with  $\bar z_j = 0$ for all $j \in \mathcal{B}$. Importantly $\mathcal{B}$ and $\bar z$ are defined independently of any specific $S$. For any $S, a \in \R^{|S|}$ with $\norm{WF_SR_Sa}_2 = O(OPT)$ and any $j \in \mathcal{B}$, we have by the definition of the leverage score (Def. \ref{leveragescoresdefn}),
		\begin{align*}
			\left | (WF_SR_S a)_j \right |^2 &\le \norm{WF_SR_SA}_2^2 \cdot {\tilde \tau_j}\\
			&\le O(OPT^2) \cdot \frac{\epsilon^2 \cdot |z_j|^2}{OPT^2}\\
			&= O(\epsilon^2 \cdot |z_j|^2).
		\end{align*}
		This gives that 
		\begin{align*}
			\left | (WF_SR_S a)_j - z_j \right |^2 - \left | (WF_SR_S a)_j - \bar z_j \right |^2 &= \left | (WF_SR_S a)_j - z_j \right |^2 - \left | (WF_SR_S a)_j \right |^2 = (1 \pm O(\epsilon)) \cdot z_j^2.
		\end{align*}
		Thus, for $C_1 = \sum_{j \in \mathcal{B}} |z_j|^2 = \norm{z-\bar z}_2^2$, 
		\begin{align}\label{eq:zBar1Bound}
			\left | \norm{WF_SR_S a - z}_2^2 - \norm{WF_SR_S a - \bar z}_2^2 - C_1 \right | = O(\epsilon) \cdot OPT^2,
		\end{align}
		where we use that $\norm{z}_2^2 = O(OPT^2)$.
		Using the same proof, and the fact that with probability at least $999/1000$ by Markov's inequality, $\norm{\mathcal{S}z}_2^2 = O(OPT^2)$, we have for $C_2 = \norm{\mathcal S (z-\bar z)}_2^2$,
		\begin{align}\label{eq:zBar2Bound}
			\left | \norm{\mathcal S WF_SR_S a - \mathcal S z}_2^2 - \norm{\mathcal S WF_SR_S a - \mathcal S \bar z}_2^2 - C_2 \right | = O(\epsilon) \cdot OPT^2.
		\end{align}
		Observe that $C_1$ and $C_2$ only depend on the sampling matrix $\mathcal{S}$ and truncated vector $\bar z$, whose definition is independent of any specific frequency set $S$ or coefficient vector $a$. Thus, 
		with \eqref{eq:zBar1Bound} and \eqref{eq:zBar2Bound} in place, to prove Claim \ref{clm:mainClaim} it suffices to show that, for all $S \in  \mathcal{N}_{d,r_1,r_2,\gamma}$ and  $a \in \R^{|S|}$ with $\norm{WF_SR_Sa}_2 = O(OPT)$,
		\begin{align}\label{eq:bernsteinError}
			\left | \norm{WF_SR_Sa - \bar z}_2^2 - \norm{\mathcal S WF_SR_Sa - \mathcal S \bar z}_2^2  \right | = O(\epsilon)  \cdot OPT^2.
		\end{align}
		Observe that by definition of $\mathcal{B}$, the entries of $\bar z$ are bounded by $\bar z_j^2 \le OPT^2 \cdot \frac{\tilde \tau_j}{\epsilon^2}$. Similarly, since by assumption $\norm{WF_SR_Sa}_2^2 = O(OPT^2)$ and by the definition  of the leverage scores (Def. \ref{leveragescoresdefn}) $(WF_SR_Sa)_j^2 \le OPT^2 \cdot \tilde \tau_j$. Thus, sampling entries with probabilities proportional to their leverage scores as in Def. \ref{def:levsampling} ensures that for any fixed $S$ and $a \in \R^{|s|}$, by a standard Bernstein bound, \eqref{eq:bernsteinError} holds with high probability. This bound can then be extended to hold to all $S$ and $a$ via an $\epsilon$-net analysis as follows.
		\begin{comment}
		\begin{corollary}[Lemma 3.7 and Lemma 4.2 of \cite{musco2021active}]
			For any fixed $S$, for $m=\wt{O}(r_1r_2 \log(1/\eta)/\epsilon^9)$, equation \eqref{eq:bernsteinError} holds with probability $1-\eta$ for all $a\in \mathbb{R}^{|S|}$ with $\|WF_SR_S a\|_2 = O(OPT)$.
		\end{corollary}
	\end{comment}
	Fix an $S$. For simplicity we assume by scaling that $\|z\|_2=1$ and $OPT =\Theta(1)$. By Claim 3.8 in the version 1 of \cite{musco2021active}, to prove \eqref{eq:bernsteinError} it suffices to show that the following holds with high probability, for all $y\in \mathcal{N}_{\epsilon}$,
	\begin{equation*}
		|\|\mathcal S y - \mathcal S \bar z\|_2^2 - \|y - \bar z\|_2^2 | \leq \epsilon
	\end{equation*}
where $\mathcal{N}_{\epsilon} $ is an $\epsilon$-net of the set $\{WF_SR_S a: \|WF_SR_S a\|_2\leq 1\}$. By a standard volume argument, it is known that one can construct such a net with $\log |\mathcal{N}_{\epsilon}| = \wt{O}(r_1r_2 \log (1/\epsilon))$. We will show equation \eqref{eq:bernsteinError} for a fixed $S$ and all $a\in \mathbb{R}^{|S|}$ using Bernstein's inequality and a union bound. We have that $\mathbb{E}[\|\mathcal S y - \mathcal S \bar z\|_2^2] = \|y-\bar z\|_2^2 = O(1)$. Additionally by definition $|\bar z_i|^2 \leq  \frac{\wt \tau_i}{\epsilon^2}$ for all $i$. Similarly by the definition of leverage scores $|y_i|^2 \leq \wt \tau_j$. This implies the following,
\begin{equation*}
	|y_i-\bar z_i|^2 = O\left(\frac{\wt \tau_i}{\eps^2}\right)
\end{equation*} 
for all $i$.
By the construction of $\mathcal S$, we have the following,
\begin{equation*}
	|[\mathcal Sy - \mathcal S \bar z]_i|^2 \leq \frac{\sum_i \wt \tau_i}{m \wt \tau_i}\cdot \frac{\wt \tau_i}{\epsilon^2} \leq \wt{O} \left(\frac{r_1r_2}{m\epsilon^2}\right)
\end{equation*}
for all $i$. Thus, by applying a Bernstein bound, we get the following,
\begin{align*}
	Pr[|\|\mathcal S y - \mathcal S \bar z\|_2^2 - \|y - \bar z\|_2^2| > \epsilon] &\leq 2 \exp \left( -\wt \Omega\left(\frac{\epsilon^4 m}{r_1r_2}\right)\right)\\
	& \leq \eta/|\mathcal{N}_{\epsilon}|
\end{align*}
for $m = \wt O (r_1r_2 \log (|\mathcal{N}_{\epsilon}|/\eta)/\epsilon^4)$. Taking a union bound over all $y$ and by Claim 3.8 of \cite{musco2021active}, we get that equation \eqref{eq:bernsteinError} holds for a fixed $S$ and all $a\in \mathbb{R}^{|S|}$ with probability at least $1-\eta$.
		%By the analysis of that lemma, for any fixed $S$,  for $m = \tilde O \left (\frac{r_1 r_2 \log(1/\eta)}{\epsilon^9} \right )$, \eqref{eq:bernsteinError} holds with probability $1-\eta$ for all $a \in \R^{|S|}$ with $\norm{WF_SR_Sa}_2 = O(OPT)$. 
Setting $\eta = \frac{1}{100 \cdot |\mathcal{N}_{d,2r_1,r_2,\gamma}|} = \frac{1}{100 \cdot |N|^{2r_1}}$ in the previous corollary, we have that for 
\begin{align*}
	m &= \tilde O \left (\frac{r_1 r_2}{\epsilon^4}\log(|\mathcal{N}_{\epsilon}| \cdot 100|N|^{2r_1})\right )  \\
	& = \tilde O \left (\frac{r_1 r_2}{\epsilon^4}\left (\log(|\mathcal{N}_{\epsilon}|)+ \log( |N|^{2r_1})\right )\right ) \\
	& = \tilde O \left (\frac{r_1 r_2}{\epsilon^4}\left (r_1r_2 + r_1\right )\right )\\
	&= \tilde O \left (\frac{k^2 \log(1/\delta)}{\epsilon^{6}} \right ),
\end{align*}
equation \eqref{eq:bernsteinError} holds for all $S \in \mathcal{N}_{d,2r_1,r_2,\gamma}$ simultaneously with probability at least $99/100$. This completes the proof of the claim.
	\end{proof}

	\subsection{Full algorithm.}
	
	Equipped with these tools, we now describe the recovery algorithm and its guarantees. %The proofs of the previous lemmas are presented in the next sub-section after the proof of correctness of the algorithm.
	
	\begin{algorithm}[H]
		\caption{\textsc{ToeplitzRecovery}}
		\label{toeplitzrecovery}
		\begin{algorithmic}[1]
			
			\STATE $\textbf{Input:}$ Query access to $T\in \mathbb{R}^{d\times d}$, $k, \epsilon, \delta$.
			\STATE $\textbf{Init:}$ Set $r_1=O(k\log^8d/\eps)$, $r_2 = O(\log d +\log(1/\delta))$, $\eta=\frac{1}{100|N|^{r_1}}$, $\wt{\tau}_j$ as defined in \ref{leveragescorebound}, $\mathcal{N}_{d,r_1,r_2,\gamma}$ and $\mathcal{N}_{d,2r_1,r_2,\gamma}$ as defined in \ref{def:search}, and $\gamma= \delta / (2^{C_2 \log^7(d)})$ as in \ref{frobeniusexistence}.
			\STATE Draw $\mathcal{S}_1$ according to \ref{def:levsampling}, using $\wt{\tau}_j$ with $m=\tilde{O}\Big( \frac{k^2\log(\frac{1}{\delta})}{\epsilon^2} \Big)$ samples. 
			\STATE Set $\tilde S, \tilde a = \argmin_{S \in \mathcal{N}_{d,r_1,r_2,\gamma}, a \in \R^{|S|}} \norm{\mathcal{S}_1WF_SR_S a - \mathcal{S}_1WT_1}_2$.
			
			\STATE Set $T_R = T_1 - F_{\tilde{S}}R_{\tilde{S}}\ta$ .
			
			\STATE Draw $\mathcal{S}_2$ according to \ref{def:levsampling}, using $\wt{\tau}_j$ with $m=\tilde{O}\Big( \frac{k^2\log(\frac{1}{\delta})}{\epsilon^{6}} \Big)$ samples.
			\STATE Set $S', a' = \argmin_{S \in \mathcal{N}_{d,2r_1,r_2,\gamma}, a \in \R^{|S|}} \norm{\mathcal{S}_2WF_SR_S a - \mathcal{S}_2WT_R}_2$.
			
			\STATE Set $\bar{S} := \tilde{S} \cup S'$, $d := \bar{a} = [\tilde{a}; a']$, and $F = F_{\bar{S}}$.
			\STATE $\textbf{Return:}$ $F,d$.
		\end{algorithmic}
	\end{algorithm}
	
	\begin{reptheorem}{sublinearqueryalgo}
		Assume we are given query access to a PSD Toeplitz matrix $T\in \mathbb{R}^{d\times d}$ and parameters $k,\epsilon, \delta$. Let $r = \wt{O}(\frac{k}{\epsilon}\log(1/\delta))$. Then Algorithm \ref{toeplitzrecovery} returns $F\in \mathbb{C}^{d\times \ell},d\in \mathbb{R}^{\ell}$ such that $F \diag(d)F^{*}$ is a symmetric Toeplitz matrix of rank at most $\ell = O(r)$ and the following holds.
		\begin{enumerate}
			\item Algorithm \ref{toeplitzrecovery} makes $\tilde{O}\Big( \frac{k^2\log(\frac{1}{\delta})}{\epsilon^{6}} \Big)$ queries to $T$. 
			\item $\|T-F\diag(d)F^{*}\|_F\leq (1+3\epsilon)\|T-T_k\|_F + 2\delta \|T\|_F$ with probability at least $97/100$. 
		\end{enumerate}
	\end{reptheorem}
	\begin{proof}
		By \ref{thm:const}, with probability at least $98/100$, $\tilde{S}$ and $\ta$ satisfy \ref{lem:regressionReduction} with $\alpha=41$. The conditions for \ref{thm:relative} are then satisfied, so with conditional probability at least $99/100$, the Toeplitz matrix $\bar{T} := F\diag(d)F^*$ satisfies 
		$$ \norm{T - \bar{T}}_F \leq (1+3\epsilon) \norm{T-T_k}_F + 2 \delta \norm{T}_F.$$
		By a union bound, the matrix $\bar{T} := F\diag(d)F^*$ based on the output of Algorithm \ref{toeplitzrecovery} then satisfies $\norm{T - \bar{T}}_F \leq (1+3\epsilon) \norm{T-T_k}_F + 2 \delta \norm{T}_F$ with probability at least $97/100$. 
		The overall sample complexity is  $\tilde{O}\Big( \frac{k^2\log(\frac{1}{\delta})}{\epsilon^{6}} \Big)$, as this is the sample complexity of $\mcS_2$ and dominates that of $\mcS_1$.  
	\end{proof}

	\subsection{Leverage score bounds for weighted Fourier matrices.}\label{subsec:weighted_fourier_lev_bounds}
	In this subsection, we focus on proving the leverage score upper bounds for weighted Fourier matrices of Lemma \ref{leveragescorebound}. Recall the statement of Lemma \ref{leveragescorebound} was as follows.
	\begin{replemma}{leveragescorebound}
		Let $F_S\in \mathbb{C}^{d\times r}$ be any Fourier matrix with symmetric (in the sense of \ref{thm:vandermonde}) frequency set $S\subset [0,1], |S| = r\leq d$,$r\leq d$, and let $W$ be as defined in \ref{weightingdef} and and $M_S$ be any real valued matrix with $r$ rows. Then there exist non-negative numbers $\wt{\tau}_{j}$ for all $j\in [d]$ such that the following hold:
		\begin{enumerate}
			\item $\tau_j(WF_SM_S) \leq \tau_j(WF_S) \leq \wt{\tau}_{j}$ for all $j\in [d]$.
			\item $\sum_{j=1}^{d} \wt{\tau}_{j} \leq O(r\log (d)\log (r))$.
		\end{enumerate}
	\end{replemma}
	
	To prove this lemma, we will use the following three helper lemmas. %Their proofs are deferred to the end of this subsection. \hltodo{Is it really easier to defer these proofs? The proof of lemma 4.4 is already out of order}
	\begin{lemma}\label{levscore_reweight}
		Let $A \in \mathbb{C}^{d \times r}$ be a matrix, and let $D$ be a diagonal matrix with positive entries satisfying %$$\max_{i,j}\frac{D_{jj}}{D_{ii}} \leq C $$
		$$ \alpha \leq D_{ii}^2 \leq \beta .$$
		Then, 
		$$ \tau_i(DA) \leq \frac{\beta}{\alpha}\tau_i(A). $$
	\end{lemma}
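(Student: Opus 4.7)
The plan is to work directly from the variational characterization of leverage scores given in Definition~\ref{leveragescoresdefn}, namely
\[
\tau_i(M) \;=\; \max_{y}\, \frac{|(My)_i|^2}{\|My\|_2^2},
\]
and to exploit the fact that multiplication by a diagonal matrix acts independently on each coordinate, so the ratio is perturbed in a controlled way.

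Concretely, I would fix any index $i \in [d]$ and any vector $y \in \mathbb{C}^r$, and write $(DAy)_i = D_{ii}(Ay)_i$ so that the numerator of the leverage-score ratio for $DA$ equals $D_{ii}^2 |(Ay)_i|^2$. Using the upper bound $D_{ii}^2 \leq \beta$ from the hypothesis gives
\[
|(DAy)_i|^2 \;\leq\; \beta\,|(Ay)_i|^2.
\]
For the denominator, I would expand $\|DAy\|_2^2 = \sum_{j=1}^d D_{jj}^2 |(Ay)_j|^2$ and use the lower bound $D_{jj}^2 \geq \alpha$ coordinatewise to obtain
\[
\|DAy\|_2^2 \;\geq\; \alpha\,\|Ay\|_2^2.
\]
Combining these two bounds yields
\[
\frac{|(DAy)_i|^2}{\|DAy\|_2^2} \;\leq\; \frac{\beta}{\alpha}\cdot\frac{|(Ay)_i|^2}{\|Ay\|_2^2},
\]
and taking the supremum over $y$ on both sides (noting that $y$ ranging over all of $\mathbb{C}^r$ for the left side is the same set as for the right side, since $D$ is invertible) gives $\tau_i(DA) \leq (\beta/\alpha)\,\tau_i(A)$, as desired.

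There is essentially no obstacle here: the argument is a two-line manipulation of the definition, and the only thing to be careful about is that $D$ being diagonal with strictly positive entries ensures $D$ is invertible so that the variational supremum over $y$ is taken over the same vector space on both sides, and that the bounds $\alpha \leq D_{ii}^2 \leq \beta$ hold for \emph{every} index, which is exactly what allows the numerator to be bounded by the largest entry and the denominator by the smallest simultaneously.
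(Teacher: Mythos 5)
Your proposal is correct and follows essentially the same argument as the paper: bound the numerator $|(DAy)_i|^2 = D_{ii}^2|(Ay)_i|^2 \leq \beta |(Ay)_i|^2$ using the upper bound on the diagonal entries, bound the denominator $\|DAy\|_2^2 \geq \alpha \|Ay\|_2^2$ using the lower bound, and take the maximum over $y$ in the variational characterization of the leverage score. The additional remark about invertibility of $D$ is a harmless clarification that does not change the argument.
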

	%We now finish this subsection with the proofs of helper Lemmas \ref{levscore_reweight} and \ref{levscore_submatrix}respectively. 
	\begin{proof}
		By the definition of leverage scores, we have the following:
		$$\tau_i(DA) = \max_{y\in \mathbb{C}^{r}} \frac{|DAy|_i^2}{\sum_{j=1}^d |DAy|_j^2}. $$
		For any $y$, the numerator is given by $|DAy|_i^2 = D_{ii}^2|Ay|_i^2$. The denominator satisfies
		$$ \sum_{j=1}^d |DAy|_j^2 \geq \alpha \sum_{j=1}^d |Ay|_j^2.$$
		As a result,
		$$ \max_{y\in \mathbb{C}^{r}} \frac{|DAy|_i^2}{\sum_{j=1}^d |DAy|_j^2} = \max_{y \in \mathbb{C}^{r}} \frac{D_{ii}^2|Ay|_i^2}{\sum_{j=1}^d |DAy|_j^2} \leq \max_{y\in \mathbb{C}^{r}} \frac{\beta|Ay|_i^2}{\alpha\sum_{j=1}^d |Ay|_j^2}.$$
		This proves that $ \tau_i(DA) \leq \frac{\beta}{\alpha}\tau_i(A)$ as desired.
	\end{proof}
	
	\begin{lemma}\label{levscore_submatrix}
		Let $A \in \mathbb{C}^{d \times r}$ be a matrix, and let $B \in \mathbb{C}^{d' \times r}$ be a matrix formed by taking any subset $R \subseteq [1,\dots,d]$ of the rows of $A$. Then the leverage score of any row $b_i$ in $B$ is at least the leverage score of row $b_i$ in $A$.
	\end{lemma}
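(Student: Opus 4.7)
The plan is to unwrap the definition of leverage scores (Definition \ref{leveragescoresdefn}) on both sides and observe that restricting to a subset of rows keeps the numerator fixed while only shrinking the denominator, so the ratio can only increase.

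More concretely, fix $i \in [d']$ and let $j(i) \in R$ denote the index in $[d]$ of the row of $A$ that becomes the $i$-th row of $B$. For any $y \in \mathbb{C}^r$, we have $(By)_i = (Ay)_{j(i)}$, so the numerator $|(By)_i|^2 = |(Ay)_{j(i)}|^2$ appears as the $j(i)$-th summand in the denominator of $\tau_{j(i)}(A)$. Meanwhile, the denominator of $\tau_i(B)$ is
\begin{equation*}
\sum_{k=1}^{d'} |(By)_k|^2 \;=\; \sum_{k \in R} |(Ay)_k|^2 \;\leq\; \sum_{k=1}^{d} |(Ay)_k|^2,
\end{equation*}
since $R \subseteq [d]$ and every summand is nonnegative.

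Dividing, we get the pointwise bound
\begin{equation*}
\frac{|(By)_i|^2}{\sum_{k=1}^{d'} |(By)_k|^2} \;\geq\; \frac{|(Ay)_{j(i)}|^2}{\sum_{k=1}^{d} |(Ay)_k|^2}
\end{equation*}
for every $y \in \mathbb{C}^r$ (treating the ratio as $0$ when $y=0$ or the denominators vanish). Taking the supremum over $y \in \mathbb{C}^r$ on both sides yields $\tau_i(B) \geq \tau_{j(i)}(A)$, which is the claimed inequality. There is no real obstacle here; the proof is a one-line consequence of the definition, and I do not anticipate needing any deeper structural input from the paper.
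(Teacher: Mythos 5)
Your proposal is correct and matches the paper's own argument: both proofs fix the common row, note that restricting the denominator sum to the subset $R$ of rows only drops nonnegative terms, and conclude via the maximization characterization of leverage scores (Definition \ref{leveragescoresdefn}) that $\tau_{j(i)}(A) \leq \tau_i(B)$. The only cosmetic difference is that you compare the ratios pointwise in $y$ before taking the supremum, while the paper bounds the maximum directly; these are the same argument.
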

	
	\begin{proof}
		Let $b_i$ correspond to the $i^{th}$ row of $A$ and the $i'^{th}$ row of $B$. Then by the definition of leverage scores we have the following.
		\begin{align*}
			\tau_i(A) &= \max_{y} \frac{|Ay|_i^2}{\sum_{j=1}^d |Ay|_j^2} \\
			&\leq \max_{y} \frac{|Ay|_i^2}{\sum_{j \in R} |Ay|_j^2} \\
			&= \max_{y} \frac{|By|_{i'}^2}{\sum_{j=1}^{d'} |By|_j^2} \\
			&= \tau_{i'}(B).
		\end{align*}
		This completes the proof of the lemma.
	\end{proof}
	\begin{lemma}\label{lemma:lev_score_col_span}
		Let $A \in \mathbb{C}^{d \times r}$ be a matrix, and let $B \in \mathbb{C}^{d \times r'}$ be a matrix formed by taking $r'$ linear combinations of the columns of $A$ (i.e. there is some matrix $M \in \mathbb{C}^{r \times r'}$ such that $B = AM$). Then the $i^{th}$ leverage score of $B$, $\tau_i(B)$, is at most the $i^{th}$ leverage score of $A$, $\tau_i(A)$. 
	\end{lemma}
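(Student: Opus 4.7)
The plan is to proceed directly from the definition of leverage scores, reducing the maximization defining $\tau_i(B)$ to a constrained maximization that also defines $\tau_i(A)$, then dropping the constraint. First I would write $\tau_i(B)$ using Definition \ref{leveragescoresdefn}:
\begin{equation*}
\tau_i(B) = \max_{y \in \mathbb{C}^{r'}} \frac{|(By)_i|^2}{\sum_{j=1}^d |(By)_j|^2} = \max_{y \in \mathbb{C}^{r'}} \frac{|(AMy)_i|^2}{\sum_{j=1}^d |(AMy)_j|^2}.
\end{equation*}

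Next I would perform the change of variables $z = My \in \mathbb{C}^r$. As $y$ ranges over all of $\mathbb{C}^{r'}$, the vector $z$ ranges over the column span of $M$, which is a (not necessarily proper) subspace of $\mathbb{C}^r$. Substituting gives
\begin{equation*}
\tau_i(B) = \max_{z \in \mathrm{col}(M)} \frac{|(Az)_i|^2}{\sum_{j=1}^d |(Az)_j|^2}.
\end{equation*}
Enlarging the feasible set to all of $\mathbb{C}^r$ can only increase the maximum (the objective is unchanged on the enlarged set, since both numerator and denominator are defined for any $z \in \mathbb{C}^r$ and the ratio is invariant under rescaling of $z$). Therefore
\begin{equation*}
\tau_i(B) \le \max_{z \in \mathbb{C}^r} \frac{|(Az)_i|^2}{\sum_{j=1}^d |(Az)_j|^2} = \tau_i(A),
\end{equation*}
as desired. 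The only mild subtlety is the edge case where the denominator vanishes, i.e. when $Az = 0$; by the usual convention the ratio is taken to be $0$ on that subspace, so it does not affect the maximum. There is no serious obstacle here — this is a clean monotonicity argument, and the key conceptual point is simply that the column span of $B = AM$ is contained in the column span of $A$, so any vector achievable by $By$ is achievable by $Az$ for some $z$.
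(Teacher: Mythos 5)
Your proposal is correct and follows essentially the same route as the paper: write $\tau_i(B)$ via the maximization characterization, substitute $By = AMy$, and observe that the set $\{My : y \in \mathbb{C}^{r'}\}$ is contained in $\mathbb{C}^r$, so enlarging the feasible set yields $\tau_i(B) \le \tau_i(A)$. Your extra remark about the vanishing-denominator convention is a harmless addition the paper omits.
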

	\begin{proof}
		We again rely on the maximization characterization of leverage scores: 
		\begin{align*}
			\tau_i(B) &= \max_{y} \frac{|By|_i^2}{\sum_{j=1}^d |By|_j^2} \\
			&= \max_{y} \frac{|AMy|_i^2}{\sum_{j=1}^d |AMy|_j^2} \\
			&\leq \max_{z} \frac{|Az|_i^2}{\sum_{j=1}^d |Az|_j^2} \\
			&= \tau_i(A).
		\end{align*}
		where the second inequality follows from the fact that $\{ My | y \in \mathbb{C}^{r'} \} \subseteq \{ z | z \in \mathbb{C}^r\}$.
	\end{proof}
	Equipped with these lemmas, the high level strategy for bounding the leverage scores of $WF_SM_S$ is as follows. By \ref{lemma:lev_score_col_span}, it suffices to bound the leverage scores of $WF_S$. To do so, first we will bucket the rows of $WF_S$ into submatrices, such that within each submatrix, the weights vary by at most a constant factor. We will then be able to bound the leverage scores of each submatrix using Lemma \ref{levscore_reweight}, and apply Lemma \ref{levscore_submatrix} to ensure that these upper bounds remain valid upper bounds for the entire matrix $WF_S$.
	
	Formally, begin by dividing $WF_S$ into submatrices as follows.
	\begin{definition}
		For all $i\in [\log d]$, let $W_i$ %\in \mathbb{R}^{(d/2^{i})\times d}$ 
		be the submatrix of $W$ consisting of rows with indices in $[d(1-1/2^{i-1})+1, d(1-1/2^{i})]$. For the edge case $i=(\log d) + 1$, let $W_i$ consist of only the last row of $WF_S$. For convenience, let $\mathcal{R}_i$ denote the index set of rows corresponding to $W_i$, and let $r_i = |\mathcal{R}_i|$ denote the number of rows in $W_i$. Note also that $r_i \leq \frac{d}{2^i}$ for $i \in [\log d]$, and $r_{\log d + 1} = 1$. 
	\end{definition}
	Then we can write $WF_S$ as $WF_S = [(W_1F_S)^{T};\ldots; (W_{\log d +1}F_S)^{T}]^{T}$. Now consider any $i\in [\log d +1]$ and $W_i F_S$. Since $W_i$ is diagonal and $d/2^{i-1}\leq (W_{i})_{j,j}^2 \leq 2d/2^{i-1}$ for all $j\in [r_i]$, applying Lemma \ref{levscore_reweight} with $\beta = 2\alpha = 2d/2^{i}$ we get the following for all $i\in [\log d + 1]$:
	\begin{equation*}
		\tau_{j}(W_i F_S) \leq 2 \tau_{j}(F_{S,i}) \quad \forall j\in [r_i].
	\end{equation*}
	Here, $F_{S,i}\in \mathbb{C}^{r_i\times s}$ is the matrix consisting of all rows of $F_{S}$ in the index set $\mathcal{R}_i$. %$d(1-1/2^{i-1})+1$ to the $d(1-1/2^{i})$ row.
	Note that the column span of $F_{S,i}$ and the first $r_i$ rows of $F_S$ is identical. This is because the $j^{th}$ column of $F_{S,i}$ is %just $e^{2\pi i (d(1-1/2^{i-1})f_j)}$, which is 
	a constant times the $j^{th}$ column of the matrix formed by considering the first $r_i$ rows of $F_S$. Since the first $r_i$ rows of $F_S$ forms a $r_i \times r$ Fourier matrix, their leverage scores are then identical as well. Finally, we appeal to Corollary C.2 of \cite{eldar2020toeplitz}, restated below:
	\begin{corollary}[Corollary C.2 of \cite{eldar2020toeplitz}]\label{corr:C2_eldar}
		For any positive integers $d$ and $s \leq d$, there is an explicit set of values $\wt{\tau}_1^{(s)},\dots,\wt{\tau}_d^{(s)} \in (0,1])$ such that, for any Fourier matrix $F_S \in \mathbb{C}^{d \times s}$ with leverage scores $\tau_1,\dots,\tau_d$,
		\begin{align*}
			&\forall j, \wt{\tau}_j^{(s)} \geq \tau_j .\\
			& \sum_{j=1}^d \wt{\tau}_j^{(s)} = O(s \log s).
		\end{align*}
	\end{corollary}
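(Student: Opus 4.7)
The plan is to define $\wt{\tau}_j^{(s)}$ as the universal worst-case leverage score
\[
\wt{\tau}_j^{(s)} := \sup_{S \subset [0,1),\, |S|=s} \tau_j(F_S),
\]
which by construction satisfies property 1 for every Fourier matrix $F_S \in \mathbb{C}^{d \times s}$. Since every leverage score is at most $1$ (this is immediate from \ref{leveragescoresdefn} by taking $y$ equal to any standard basis vector in the numerator and using non-negativity in the denominator), we also have $\wt{\tau}_j^{(s)} \in (0,1]$. Explicitness will follow from the explicit weight construction outlined below.

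For property 2, the idea is to reformulate leverage scores as a pointwise-to-discrete-$\ell_2$ ratio over $s$-Fourier-sparse polynomials. Specifically, since every $F_S c$ is the evaluation at $t = 0, 1, \ldots, d-1$ of a function $g(t) = \sum_{l=1}^{s} c_l e^{2\pi i f_l t}$, \ref{leveragescoresdefn} gives
\[
\wt{\tau}_j^{(s)} = \sup_{\{f_l\} \subset [0,1)} \sup_{g \neq 0} \frac{|g(j-1)|^2}{\sum_{t=0}^{d-1} |g(t)|^2}.
\]
The plan is then to invoke a Remez/Christoffel-type bound for $s$-Fourier-sparse polynomials, as developed in \cite{chen2016fourier} and related works, which constructs an \emph{explicit} weight sequence $\rho_0, \ldots, \rho_{d-1}$ depending only on $s$ and $d$ such that $|g(j-1)|^2 \le \rho_j \cdot \sum_{t=0}^{d-1} |g(t)|^2$ for every $s$-Fourier-sparse $g$ and every frequency set in $[0,1)$, and such that $\sum_{j=0}^{d-1} \rho_j = O(s \log s)$. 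Setting $\wt{\tau}_j^{(s)} := \min(1,\rho_j)$ then yields both properties simultaneously.

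The principal obstacle is establishing the weight $\rho_j$ with $O(s\log s)$ total mass that is uniform across all choices of frequencies. The naive bound $\sum_j \tau_j(F_S) = \mathrm{rank}(F_S) \le s$ for each fixed $S$ is not directly useful here, because the worst-case frequency set can differ at different coordinates $j$, so taking the coordinate-wise supremum can in principle inflate the sum. The resolution is to exploit the fact that $s$-Fourier-sparse functions obey uniform Bernstein/Markov-type inequalities whose constants depend only on $s$, not on the frequencies. The standard construction then uses a dyadic decomposition of $\{0,\ldots,d-1\}$: near the endpoints $j = 0$ and $j = d-1$, Fourier-sparse functions can concentrate more aggressively and $\rho_j$ behaves roughly as $\wt O(s/\max(j,1))$ and symmetrically near $d-1$, while in the bulk $\rho_j = O(s/d)$. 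Summing the contributions over the two boundary regions produces the $O(s\log s)$ bound via a telescoping harmonic sum, while the bulk contributes $O(s)$. Putting everything together yields $\sum_{j=0}^{d-1} \wt{\tau}_j^{(s)} = O(s \log s)$, as required.
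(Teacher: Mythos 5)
First, note that the paper does not prove this statement at all: it is imported verbatim from \cite{eldar2020toeplitz} as a black box, so there is no internal proof to compare against. Your sketch reconstructs the argument along essentially the same lines as the cited source (and as the paper's own downstream Claim~\ref{levscore_upperbound}): take a coordinate-wise supremum over all frequency sets, reinterpret $\tau_j(F_S)$ as the ratio $|g(j-1)|^2/\sum_{t=0}^{d-1}|g(t)|^2$ over $s$-Fourier-sparse $g$, and invoke uniform concentration bounds for such functions from \cite{chen2016fourier}. Your observation that the naive identity $\sum_j \tau_j(F_S)=\rank(F_S)\le s$ is useless after taking a per-coordinate supremum is exactly the right reason this is nontrivial. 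So the architecture is correct and matches the source.

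There are, however, two quantitative gaps in the sketch as written. First, the ``bulk'' bound $\rho_j=O(s/d)$ is not available; the uniform max-over-average bound for $s$-Fourier-sparse functions proved in \cite{chen2016fourier} carries a large polynomial in $s$ (the paper's own Claim~\ref{levscore_upperbound} uses $O(s^6\log^3 s)/d$), and establishing even that is the technical heart of the matter --- it requires the shift-operator/linear-recurrence machinery of \cite{chen2016fourier}, not a routine Bernstein or Markov inequality, since no bound depending only on $s$ and not on the frequency separation is classical here. Second, even granting a boundary bound of the form $\min\bigl(1, O(s)/\min(j,d+1-j)\bigr)$, the harmonic sum you describe runs over $j$ up to $\Theta(d)$ and yields $O(s\log d)$, not $O(s\log s)$; to do better one must truncate the boundary region where it crosses the bulk bound, and with a $\mathrm{poly}(s)/d$ bulk bound the honest total is $O(s\log d+\mathrm{poly}(s)\cdot\mathrm{polylog}(s))$ rather than $O(s\log s)$ as stated in the corollary. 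Both discrepancies are absorbed by the $\wt O(\cdot)$ notation wherever the corollary is actually used (Lemma~\ref{leveragescorebound} already tolerates an extra $\log d$), so the downstream results are unaffected, but your sketch does not establish the bound in the literal form $O(s\log s)$, and the step it defers to ``standard'' inequalities is precisely the nonstandard part.
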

	
	Thus by Corollary \ref{corr:C2_eldar} (C.2 of \cite{eldar2020toeplitz}), we easily obtain the following claim.
	\begin{claim}\label{levscore_upperbound}
		For any $i\in [\log (d/r)]$ and any $j\in [r_i]$ define $\wt{\tau}_{j,i}$ as follows.
		\begin{equation*}
			\wt{\tau}_{j,i} = \min\left(1,\frac{r}{\min(j,d/2^i+1-j)},\frac{O(r^6\log^3 (r+1))}{(d/2^i)}\right).
		\end{equation*}
		For any $\log(d/r)<i\leq \log(d)$ and $j\in [r_i]$ let $\wt{\tau}_{j,i}=1$. Then we have the following.
		\begin{enumerate}
			\item $\tau_j(W_iF_S) \leq \wt{\tau}_{j,i}$ for all $i\in [\log d]$ and $j\in [r_i]$.
			\item $\sum_{j=1}^{d/2^i} \wt{\tau}_{j} \leq O(r\log r)$ for all $i\in [\log d]$.
		\end{enumerate}
	\end{claim}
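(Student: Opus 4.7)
The plan is a simple chain of the three helper lemmas together with Corollary \ref{corr:C2_eldar}, followed by a case split depending on whether $r_i := d/2^i$ exceeds $r$. The heavy lifting has already been done in the text preceding the claim, where it is established that $\tau_j(W_i F_S) \leq 2\,\tau_j(F_{S,i})$ via Lemma \ref{levscore_reweight} and that $F_{S,i}$ has the same column span (and hence the same row-wise leverage scores) as the $r_i\times r$ Fourier matrix formed by the first $r_i$ rows of $F_S$.

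First I would handle the regime $i \in [\log(d/r)]$, so that $r_i \geq r$. Applying Corollary \ref{corr:C2_eldar} to the $r_i \times r$ Fourier matrix (with $n = r_i$ and $s = r$) yields explicit per-row leverage score upper bounds of the form
\[
\wh{\tau}_j^{(r)} \;\leq\; \min\!\left(1,\; \frac{r}{\min(j,\, r_i+1-j)},\; \frac{C\, r^6\log^3(r+1)}{r_i}\right)
\]
for an absolute constant $C>0$. Composing this with the factor $2$ from Lemma \ref{levscore_reweight} (absorbed into the $O(\cdot)$ on the third term), we obtain statement 1 for this range of $i$. Statement 2 in this regime is immediate from the sum bound $\sum_{j=1}^{r_i} \wh{\tau}_j^{(r)} = O(r\log r)$ furnished by Corollary \ref{corr:C2_eldar}.

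For the remaining range $\log(d/r) < i \leq \log d$, $\wt{\tau}_{j,i}$ is defined to be $1$, so statement 1 holds trivially since leverage scores never exceed one. For statement 2, the number of rows is $r_i = d/2^i < r$, hence $\sum_{j=1}^{r_i} \wt{\tau}_{j,i} = r_i \leq r = O(r\log r)$. There is no real obstacle here: the only nontrivial ingredient is that Corollary \ref{corr:C2_eldar} supplies explicit upper bounds of precisely the displayed shape, which is exactly what the authors of \cite{eldar2020toeplitz} establish; our role is simply to instantiate it at the appropriate dimensions and absorb a constant.
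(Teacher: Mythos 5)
Your proposal matches the paper's own (largely implicit) argument: the paper likewise derives the claim by combining the reduction $\tau_j(W_iF_S)\leq 2\tau_j(F_{S,i})$ from Lemma \ref{levscore_reweight}, the identification of the leverage scores of $F_{S,i}$ with those of the $r_i\times r$ Fourier matrix formed by the first $r_i$ rows of $F_S$, and the explicit per-row bounds and $O(r\log r)$ sum bound of Corollary \ref{corr:C2_eldar}, with the small-bucket regime $i>\log(d/r)$ handled trivially since leverage scores never exceed one and $r_i<r$. The only cosmetic nit is that the factor $2$ from Lemma \ref{levscore_reweight} multiplies the middle term of the minimum as well (not just the third), but this constant is immaterial to both statements and is glossed over in the paper in exactly the same way.
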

	Equipped with these tools, we can now easily finish the proof of Lemma \ref{leveragescorebound}.
	\begin{proof}
		Define $\wt{\tau}_j$ values for $j\in [d]$ as follows.
		
		\begin{equation}
			\wt{\tau}_{j} = \wt{\tau}_{i,j'}.
		\end{equation}
		where $i,j'$ are chosen such that $d(1-1/2^{i-1})+1 \leq j\leq d(1-1/2^{i})$ and $j = d(1-1/2^{i-1})+1 + j'$, and $\wt{\tau}_{i,j}$ are obtained from Claim \ref{levscore_upperbound}. Finally, using Lemma \ref{levscore_submatrix} we can easily conclude that these upper bounds also serve as upper bounds on the leverage scores of $WF_S$, and by \ref{lemma:lev_score_col_span}, upper bounds on the leverage scores of $WF_SM_S$ as well. This completes the proof of Lemma \ref{leveragescorebound}.
	\end{proof}

% Polynomial time alg
%\input{filter_functions}
%\input{single_cluster}
%\input{multi_cluster}

\section{Conclusion.}
In this paper, we study the design of sublinear algorithms for obtaining low-rank approximations of positive semidefinite Toeplitz matrices. Given query access to any such matrix $T\in \mathbb{R}^{d\times d}$, one can trivially reconstruct it exactly by reading its first column, i.e. by reading $d$ entries. Our main result is that for any $k,\epsilon,\delta$, there exists a symmetric Toeplitz $\wt{T}$ of rank $\wt{O}((k/\epsilon)\log(1/\delta))$ satisfying
\begin{equation*}
	\|T-\wt{T}\|_F \leq (1+\epsilon)\|T-T_k\|_F + \delta \|T\|_F,
\end{equation*}
where $T_k = \underset{B:rank(B)\leq k}{\argmin}\|T-B\|_F$ is the best rank-$k$ approximation $T$ in the Frobenius norm. Surprisingly, such an existence result -- that there exists a near optimal low-rank approximation to $T$ which is itself Toeplitz -- was not known before. We obtain this result by proving new results about the low rank structure of off-grid Fourier matrices, which we believe to be of independent interest. We also present an algorithm that reconstructs such a $\wt{T}$ by reading only $\wt{O}(k^2 \log(1/\delta)\poly(1/\epsilon))$ entries of $T$, beating the trivial bound of $d$ queries and thus achieving sublinear query complexity. We now present some of the main open problems raised by this work. 
\begin{enumerate}
	\item Is the additive error term in Theorem \ref{frobeniusexistence} necessary? Also, what is the minimum rank of required to achieve the guarantee of Theorem \ref{frobeniusexistence}?
	\item Is it possible to design a sublinear time low-rank approximation algorithm that recovers a $\wt{T}$ achieving the guarantee of Theorem \ref{sublinearqueryalgo}?
	\item Is it possible to design an algorithm with sublinear query complexity, or even sublinear run-time, that can recover a $\wt T$ satisfying the spectral norm low-rank approximation guarantee of Theorem \ref{spectralexistence}?
	\item Can the existence of a structure preserving low-rank approximation be proven for non-PSD Toeplitz matrices, similar to Theorem \ref{frobeniusexistence}? Can a sublinear query and sublinear time algorithm be designed to recover near optimal low-rank approximations to non-PSD Toeplitz matrices?
\end{enumerate}
\section{Acknowledgements.}
Michael Kapralov and Mikhail Makarov's work is supported by the European Research Council (ERC) under the European Union’s Horizon 2020 research and innovation programme (grant agreement No 759471). Hannah Lawrence is supported by the Fannie and John Hertz
Foundation and the National Science Foundation Graduate Research Fellowship under Grant No.
1745302. Cameron Musco's work is supported by an Adobe Research grant, a Google Research Scholar Award, and NSF Grants No. 2046235 and
No. 1763618.

\bibliographystyle{plain} 
\bibliography{refs}

\begin{thebibliography}{10}

\bibitem{Abramovich:1999vs}
Yuri~I Abramovich, Nicholas~K Spencer, and Alexei~Y Gorokhov.
\newblock Positive-definite {T}oeplitz completion in {DOA} estimation for
  nonuniform linear antenna arrays. {II}. {P}artially augmentable arrays.
\newblock {\em IEEE Transactions on Signal Processing}, 47(6):1502--1521, 1999.

\bibitem{Ahle:2020vj}
Thomas~D Ahle, Michael Kapralov, Jakob~BT Knudsen, Rasmus Pagh, Ameya
  Velingker, David~P Woodruff, and Amir Zandieh.
\newblock Oblivious sketching of high-degree polynomial kernels.
\newblock In {\em \SODA{2020}}, pages 141--160, 2020.

\bibitem{Avron:2019tt}
Haim Avron, Michael Kapralov, Cameron Musco, Christopher Musco, Ameya
  Velingker, and Amir Zandieh.
\newblock A universal sampling method for reconstructing signals with simple
  {F}ourier transforms.
\newblock In {\em \STOC{2019}}, pages 1051--1063, 2019.

\bibitem{Bakshi:2020tl}
Ainesh Bakshi, Nadiia Chepurko, and David~P Woodruff.
\newblock Robust and sample optimal algorithms for {PSD} low rank
  approximation.
\newblock In {\em \FOCS{2020}}, pages 506--516, 2020.

\bibitem{Bakshi:2018ul}
Ainesh Bakshi and David Woodruff.
\newblock Sublinear time low-rank approximation of distance matrices.
\newblock {\em \NIPS{2018}}, 31, 2018.

\bibitem{Bunch:1985ti}
James~R Bunch.
\newblock Stability of methods for solving {T}oeplitz systems of equations.
\newblock {\em SIAM Journal on Scientific and Statistical Computing},
  6(2):349--364, 1985.

\bibitem{Cai:2016wh}
Jian-Feng Cai, Xiaobo Qu, Weiyu Xu, and Gui-Bo Ye.
\newblock Robust recovery of complex exponential signals from random {G}aussian
  projections via low rank {H}ankel matrix reconstruction.
\newblock {\em Applied and Computational Harmonic Analysis}, 41(2):470--490,
  2016.

\bibitem{chen2016fourier}
Xue Chen, Daniel~M Kane, Eric Price, and Zhao Song.
\newblock Fourier-sparse interpolation without a frequency gap.
\newblock In {\em \FOCS{2016}}, pages 741--750, 2016.

\bibitem{Chen:2015wz}
Yuxin Chen, Yuejie Chi, and Andrea~J Goldsmith.
\newblock Exact and stable covariance estimation from quadratic sampling via
  convex programming.
\newblock {\em IEEE Transactions on Information Theory}, 61(7):4034--4059,
  2015.

\bibitem{chu2003structured}
Moody~T Chu, Robert~E Funderlic, and Robert~J Plemmons.
\newblock Structured low rank approximation.
\newblock {\em Linear Algebra and its Applications}, 366:157--172, 2003.

\bibitem{cohen2015dimensionality}
Michael~B Cohen, Sam Elder, Cameron Musco, Christopher Musco, and Madalina
  Persu.
\newblock Dimensionality reduction for k-means clustering and low rank
  approximation.
\newblock In {\em \STOC{2015}}, pages 163--172, 2015.

\bibitem{cybenko1982moment}
George Cybenko.
\newblock Moment problems and low rank {T}oeplitz approximations.
\newblock {\em Circuits, Systems and Signal Processing}, 1(3):345--366, 1982.

\bibitem{drineas2006sampling}
Petros Drineas, Michael~W Mahoney, and Shan Muthukrishnan.
\newblock Sampling algorithms for $\ell_2$ regression and applications.
\newblock In {\em \SODA{2006}}, 2006.

\bibitem{echeverria2018block}
Carlos Echeverr{\'\i}a, J{\"o}rg Liesen, and Reinhard Nabben.
\newblock Block diagonal dominance of matrices revisited: bounds for the norms
  of inverses and eigenvalue inclusion sets.
\newblock {\em Linear Algebra and its Applications}, 553:365--383, 2018.

\bibitem{eldar2020toeplitz}
Yonina~C. Eldar, Jerry Li, Cameron Musco, and Christopher Musco.
\newblock Sample efficient {T}oeplitz covariance estimation.
\newblock In {\em \SODA{2020}}, pages 378--397, 2020.

\bibitem{Fazel:2013vw}
Maryam Fazel, Ting~Kei Pong, Defeng Sun, and Paul Tseng.
\newblock Hankel matrix rank minimization with applications to system
  identification and realization.
\newblock {\em SIAM Journal on Matrix Analysis and Applications},
  34(3):946--977, 2013.

\bibitem{Gray:2006ta}
Robert~M Gray.
\newblock Toeplitz and circulant matrices: A review.
\newblock {\em Foundations and Trends in Communications and Information
  Theory}, 2(3):155--239, 2006.

\bibitem{Indyk:2019vy}
Piotr Indyk, Ali Vakilian, Tal Wagner, and David~P Woodruff.
\newblock Sample-optimal low-rank approximation of distance matrices.
\newblock In {\em \COLT{2019}}, pages 1723--1751. PMLR, 2019.

\bibitem{Ishteva:2014wm}
Mariya Ishteva, Konstantin Usevich, and Ivan Markovsky.
\newblock Factorization approach to structured low-rank approximation with
  applications.
\newblock {\em SIAM Journal on Matrix Analysis and Applications},
  35(3):1180--1204, 2014.

\bibitem{Knirsch:2021ve}
Hanna Knirsch, Markus Petz, and Gerlind Plonka.
\newblock Optimal rank-1 {H}ankel approximation of matrices: {F}robenius norm
  and spectral norm and {C}adzow's algorithm.
\newblock {\em Linear Algebra and its Applications}, 629:1--39, 2021.

\bibitem{Krim:1996wm}
Hamid Krim and Mats Viberg.
\newblock Two decades of array signal processing research: the parametric
  approach.
\newblock {\em IEEE Signal Processing Magazine}, 13(4):67--94, 1996.

\bibitem{Lawrence:2020ut}
Hannah Lawrence, Jerry Li, Cameron Musco, and Christopher Musco.
\newblock Low-rank {T}oeplitz matrix estimation via random ultra-sparse rulers.
\newblock In {\em \ICASSP{2020}}, pages 4796--4800, 2020.

\bibitem{Luk:1996ur}
Franklin~T Luk and Sanzheng Qiao.
\newblock A symmetric rank-revealing {T}oeplitz matrix decomposition.
\newblock {\em Journal of {VLSI} Signal Processing Systems for Signal, Image
  and Video Technology}, 14(1):19--28, 1996.

\bibitem{moitra2015super}
Ankur Moitra.
\newblock Super-resolution, extremal functions and the condition number of
  {V}andermonde matrices.
\newblock In {\em \STOC{2015}}, pages 821--830, 2015.

\bibitem{Munkhammar:2017ud}
Joakim Munkhammar, Lars Mattsson, and Jesper Ryd{\'e}n.
\newblock Polynomial probability distribution estimation using the method of
  moments.
\newblock {\em PloS One}, 12(4), 2017.

\bibitem{musco2017recursive}
Cameron Musco and Christopher Musco.
\newblock Recursive sampling for the {N}ystr\"{o}m method.
\newblock In {\em \NIPS{2017}}, 2017.

\bibitem{musco2021active}
Cameron Musco, Christopher Musco, David~P. Woodruff, and Taisuke Yasuda.
\newblock Active linear regression for $\ell_p$ norms and beyond.
\newblock In {\em \FOCS{2022}}, 2022.

\bibitem{Ongie:2017us}
Gregory Ongie and Mathews Jacob.
\newblock A fast algorithm for convolutional structured low-rank matrix
  recovery.
\newblock {\em IEEE Transactions on Computational Imaging}, 3(4):535--550,
  2017.

\bibitem{pan1999complexity}
Victor~Y Pan and Zhao~Q Chen.
\newblock The complexity of the matrix eigenproblem.
\newblock In {\em \STOC{1999}}, pages 507--516, 1999.

\bibitem{Park:1999ta}
Haesun Park, Lei Zhang, and J~Ben Rosen.
\newblock Low rank approximation of a {H}ankel matrix by structured total least
  norm.
\newblock {\em BIT Numerical Mathematics}, 39(4):757--779, 1999.

\bibitem{Qiao:2017tp}
Heng Qiao and Piya Pal.
\newblock Gridless line spectrum estimation and low-rank {T}oeplitz matrix
  compression using structured samplers: A regularization-free approach.
\newblock {\em IEEE Transactions on Signal Processing}, 65(9):2221--2236, 2017.

\bibitem{Sarlos:2006vn}
Tamas Sarlos.
\newblock Improved approximation algorithms for large matrices via random
  projections.
\newblock In {\em \FOCS{2006}}, pages 143--152, 2006.

\bibitem{Shi:2019ud}
Xiaofei Shi and David~P Woodruff.
\newblock Sublinear time numerical linear algebra for structured matrices.
\newblock In {\em AAAI Conference on Artificial Intelligence}, 2019.

\bibitem{spielman2011graph}
Daniel~A Spielman and Nikhil Srivastava.
\newblock Graph sparsification by effective resistances.
\newblock {\em SIAM Journal on Computing}, 40(6):1913--1926, 2011.

\bibitem{tao2012topics}
Terence Tao.
\newblock {\em Topics in random matrix theory}, volume 132.
\newblock American Mathematical Soc., 2012.

\bibitem{tretter2008spectral}
Christiane Tretter.
\newblock {\em Spectral theory of block operator matrices and applications}.
\newblock World Scientific, 2008.

\bibitem{Wen:2020ub}
Ruiping Wen and Yaru Fu.
\newblock Toeplitz matrix completion via a low-rank approximation algorithm.
\newblock {\em Journal of Inequalities and Applications}, 2020(1):1--13, 2020.

\bibitem{Woodruff:2014tg}
David~P Woodruff.
\newblock Sketching as a tool for numerical linear algebra.
\newblock {\em Foundations and Trends in Theoretical Computer Science},
  10(1--2):1--157, 2014.

\bibitem{XiXiaCauley:2014}
Yuanzhe Xi, Jianlin Xia, Stephen Cauley, and Venkataramanan Balakrishnan.
\newblock Superfast and stable structured solvers for {T}oeplitz least squares
  via randomized sampling.
\newblock {\em {SIAM} Journal on Matrix Analysis and Applications},
  35(1):44--72, 2014.

\bibitem{XiaXiGu:2012}
Jianlin Xia, Yuanzhe Xi, and Ming Gu.
\newblock A superfast structured solver for {T}oeplitz linear systems via
  randomized sampling.
\newblock {\em {SIAM} Journal on Matrix Analysis and Applications},
  33(3):837--858, 2012.

\bibitem{Yasuda:2019vf}
Taisuke Yasuda, David Woodruff, and Manuel Fernandez.
\newblock Tight kernel query complexity of kernel ridge regression and kernel $
  k $-means clustering.
\newblock In {\em \ICML{2019}}, pages 7055--7063, 2019.

\end{thebibliography}

% \appendix
% \input{filter_functions}
\end{document}